\def\BibTeX{{\rm B\kern-.05em{\sc i\kern-.025em b}\kern-.08emT\kern-.1667em\lower.7ex\hbox{E}\kern-.125emX}}
\tikzstyle{mybox} = [draw=black, fill=white,  thick,
\tikzstyle{mybox} = [draw=black, fill=white,  thick,
\newcommand{\declare}[1]{%
	\pgfkeys{
		/variables/#1.is family,
		/variables/#1.unknown/.style = {\pgfkeyscurrentpath/\pgfkeyscurrentname/.initial = ##1}
	}%
}
\newdimen\R
\newdimen\RR
\newtheorem{thm}{Theorem}
\newtheorem{lemma}{Lemma}
\newtheorem{prop}{Proposition}
\theoremstyle{definition}
\newtheorem{definition}{Definition}
\newtheorem{remark}{Remark}
\newcommand\stp{0}
\newcommand\scc{7}
\begin{document}
	
	\title{Spherical Triangle Algorithm: A Fast Oracle for Convex Hull Membership Queries }

\author{ Bahman Kalantari }
%\author[]{\small Yikai Zhang}
%\institute{}

\affil{\footnotesize Department of Computer Science, Rutgers University, NJ}
\affil{\textit {kalantari@cs.rutgers.edu}}
\author{Yikai Zhang}
\affil{\textit {zhangyikai91@gmail.com}}
\date{}

\maketitle
	\begin{abstract}
		Convex Hull Membership (CHM) is the problem that inquires whether
		$p \in conv(S)$, where $p$ and the $n$ points of $S$ lie in $\mathbb{ R}^m$. Solving
		CHM alone or as part of a query problem finds applications in LP,
		CG, ML, Statistics, Topic Modeling, Minimum Volume Ellipsoid, and
		Data Reduction. For the purpose of solving CHM, Triangle Algorithm
		(TA) computes $p' \in conv(S)$ where $ p'$ is either an $\varepsilon$-approximate
		solution or a witness inducing a hyperplane separating $p$ and $conv(S)$.
		First, we prove the equivalence of the exact and approximate versions
		of CHM and Spherical-CHM, the latter the case of CHM where $p = 0$,
		$\|v\| = 1$, for all $v \in S$. We then prove that Spherical-TA, i.e.,
		TA for Spherical-CHM, terminates in $O(1/\varepsilon^2)$ iterations. Each
		iteration takes $O(mn)$ time, however with a pre-processing it could be reduced to $O(n+m)$ \cite{awasthi2018robust}.  
		 We also prove that if for each $p'$
		in $conv(S)$ with $\|p'\| > \varepsilon$ that is not a witness there exists
		$v \in S$ with $\|p' - v\| \geq \sqrt{1+\epsilon}$, then the number of
		iterations in which Spherical-TA terminates is reduced to $O(1/\varepsilon)$. In particular, it results
		in AVTA$+$, where AVTA an algorithm based on TA for
		computing all vertices of $conv(S)$. We have performed substantial
		computations on a variety of the problems mentioned above that
		indicates the TA and the Spherical-TA as efficient tools for convex hull membership query in high dimensions.

		%	we obtain better running time when combined with existing package.
		%	Coreset approximation\cite{clarkson2010coresets,blum2016sparse}.
		%	  which solves the irredundancy problem. We suggest when the dataset is over complete, one can run AVTA$+$ as pre-processing step for data reduction. Such step could brings efficiency gain in   Lastly we demonstrate efficiency of Spherical-TA with a convincing empirical result in experiment section on various applications.
	\end{abstract}
	%
	% The code below is generated by the tool at http://dl.acm.org/ccs.cfm.
	% Please copy and paste the code instead of the example below.
	%

	%
	% A "teaser" image appears between the author and affiliation information and the body
	% of the document, and typically spans the page.
	%%\begin{teaserfigure}
	%%  \includegraphics[width=\textwidth]{sampleteaser}
	%%  \caption{Seattle Mariners at Spring Training, 2010.}
	%%  \Description{Enjoying the baseball game from the third-base seats. Ichiro Suzuki preparing to bat.}
	%%  \label{fig:teaser}
	%%\end{teaserfigure}
	
	%
	% This command processes the author and affiliation and title information and builds
	% the first part of the formatted document.
	\maketitle

	\section{Introduction} \label{sec1}
	
	Given a set $S= \{v_1, \dots, v_n\} \subset \mathbb{R} ^m$ and a distinguished point $p \in \mathbb{R} ^m$, {\it Convex Hull Membership}  (CHM) 
	is the problem that inquires whether $p$ lies in  $conv(S)$, the convex hull of $S$.  CHM is a basic and fundamental problem in linear programming, computational geometry, machine learning, statistics and more. The homogeneous case of CHM, when $p=0$ arises in some fundamental polynomial time algorithms for linear programming.
	For instance, Karmarkar's algorithm  \cite{karmarkar1984new} deals with a homogeneous CHM. Another example is Khachiyan's ellipsoid algorithm \cite{khachiyan1980polynomial} which is actually designed to test the feasibility of a strict system of $n \times m$ inequalities, $Ax <b$. Using classical LP  dualities, it is easy to show the dual to the strict LP feasibility is the homogeneous CHM corresponding to the equations $A^Ty=0$, $b^Ty+s=0$. This implies homogeneous CHM is an inherent dual to strict LP feasibility. In fact homogeneous CHM admits a {\it matrix scaling} duality that leads to a simple polynomial time interior method, see \cite{khachiyan1992diagonal}. An important application of CHM in computational geometry and in machine learning is the {\it irredundancy problem}, the problem of computing all the vertices of  $conv(S)$, see e.g., \cite{toth2017handbook}.
	%This problem has important applications in machine learning.
	
	When the number of points, $n$, and dimension, $m$, are large, polynomial time algorithms for CHM are prohibitive.  For this reason {\it fully polynomial time approximation schemes} for CHM have been studied, see e.g., ~\cite{kalantari2015characterization, awasthi2018robust, gartner2009coresets, clarkson2010coresets}. These algorithms produce $\varepsilon$-approximate solution in time complexity
	 such as $mn/\varepsilon^2$, see e.g., \cite{kalantari2015characterization, clarkson2010coresets}. There are other criteria for iterative algorithms for large-scale problems,  e.g., the representation of an approximate solution and the sparsity of this representation. In CHM  an approximate solution to be represented in terms of a small number of points in $S$ is preferred. One of the well known algorithms for computing the distance from $p$ to $conv(S)$,
	sometimes known as the  {\it polytope distance} problem, is the Frank-Wolfe method  \cite{frank1956algorithm} and its variations. Letting $A$ denote the matrix  $[v_1, \dots, v_n]$ of points in $S$, $e \in \mathbb{R}^n$ the vector of ones, the Frank-Wolfe method considers the convex minimization problem:
	$\min \{f(x)=\Vert Ax - p \Vert^2 : x \in \Sigma_n\}$, where  $\Sigma_n = \{x \in \mathbb{R}^n:  e^Tx=1, x \geq 0\}$, the $n-1$ dimensional simplex. Given $x' \in \Sigma_n$, the Frank-Wolfe algorithm computes an index $j$ for which the partial derivative ${\partial f(x')}/{\partial x_j}$ is minimized. It then computes  the minimizer $x''$ of $f(x)$ along the line segment connecting $x'$ and $e_j$, one of the basis. It replaces $x'$ with $x''$ and repeats. If $x_* \in \Sigma_n$ is the optimal solution of the convex minimization,  an $\varepsilon$-approximate solution is an $x \in \Sigma_n$ such that $f(x)-f(x_*) =O(\varepsilon)$.
	The notion of {\it coreset} is related  both to representation of the approximate solutions, as well as the number of iterations of an algorithm.  The Frank-Wolfe algorithm gives an $\varepsilon$-approximate solution with $\varepsilon$-coreset of size $O(1/\varepsilon^2)$.  Clarkson \cite{clarkson2010coresets} argues that with a more sophisticated version of the algorithm that uses the  {\it Wolfe dual}, together with more computation, a coreset of size $1/\varepsilon$ can be found. Additionally, a popular class of algorithms that has $O(1/\varepsilon)$ number of iterations are the so-called {\it first-order} methods,  see the {\it fast-gradient} method of Nesterov \cite{nesterov2005smooth}.
	More generally, for the polytope distance problem, one is interested in computing the distance between two convex hulls. Gilbert's algorithm \cite{gilbert1966iterative} for the polytope distance problem coincides with the Frank-Wolfe algorithm, see G{\"a}rtner and Jaggi \cite{gartner2009coresets}.  A related problem is the {\it hard margin} support vector machine (SVM): testing if the convex hull of two finite sets of points intersect and if not, computing the optimal pair of supporting hyperplanes separating the convex hulls, see \cite{burges1998tutorial}.
	
	The {\it Triangle Algorithm} (TA), introduced in \cite{kalantari2015characterization}, is a geometrically inspired algorithm designed to solve CHM. When $p \in conv(S)$, it works analogously to the Frank Wolfe algorithm; however, the iterates are not necessarily the same and it offers more flexibility and geometric intuition.
	When $p \notin conv(S)$, the TA  computes a {\it witness}, a point $p'$ in $conv(S)$, where the  orthogonal bisector hyperplane to the line segment $pp'$ separates $p$ and $conv(S)$. This is an important feature of the TA and has proved to be very useful in several applications. As an example in \cite{awasthi2018robust}, the TA is used efficiently in \emph{All Vertex Triangle Algorithm} (AVTA) which is an algorithm for computing the set of all vertices of $conv(S)$, or an approximate subset of vertices whose convex hull approximates $conv(S)$.  The practicality and advantages of the TA over the Frank-Wolfe  are supported by large-scale computations in realistic applications. To test if $p \not \in conv(S)$,
	there is no need to compute the minimum of $f(x)$ over $\Sigma_n$.  In fact a witness $p'$ gives an estimate of the distance from $p$ to $conv(S)$ to within a factor of two. The TA in $O(1/\varepsilon^2)$ iterations computes a point $p_\varepsilon \in conv(S)$ so that either $\Vert p - p_\varepsilon \Vert \leq \varepsilon R$, where $R= \max \{\Vert p- v_i \Vert: v_i \in S \}$, or $p_\varepsilon$ is a witness.   In each iteration the algorithm uses at most one more of the $v_i$'s to represent the current approximation $p'$. It can thus be seen that when $p \in conv(S)$, the algorithm produces an $\varepsilon$-coreset of size $O(1/\varepsilon^2)$.   The complexity of the TA  improves if $p$ is contained in a ball of radius $\rho$, contained in the relative interior of $conv(S)$. Specifically,  the number of iterations to compute an $\varepsilon$-approximate solution $p_\varepsilon$ is $O((R^2/\rho^2) \log (1/\varepsilon))$.   The generalization of the TA for computing the distance between two arbitrary compact convex sets is developed in \cite{kalantari2019algorithmic}. The algorithm described in \cite{kalantari2019algorithmic} either computes an approximate point of intersection,  a separating  hyperplane, an optimal supporting pair of hyperplanes, or the distance between the sets, whichever is preferred.  The complexity of each iteration is dependent on the nature and description of the underlying sets. In the worst case, one needs to solving an LP over one or the other convex set.
	% Such problem is widely used in machine learning problems, e.g. Topic modeling, Non-negative matrix factorization.
	
	There are three major contributions of the current work. First, we propose a novel algorithm called the \textit{Spherical Triangle Algorithm} (Spherical-TA) and report a novel analysis on its complexity. Second, we list applications of the  Spherical-TA. In particular, we introduce two classes of problems: feasibility problems  and the irredundancy problem. Third, we provide solid computational results to verify the efficiency of the TA and the Spherical-TA in both feasibility and irredundancy problems. We also show that, as efficient oracle, the TA and the Spherical-TA can significantly impact various domains.

	The article is organized as follows: we first review the TA in Section 2. In Section 3, we prove the equivalence of exact and approximate CHM and Spherical-CHM. In Section 4, we give an $O(1/\varepsilon^2)$ iteration of the TA for Spherical-CHM. In Section 5, we prove that if in Spherical-CHM for each $p' \in conv(S)$ with $\Vert p' \Vert> \varepsilon$ that is not a witness, there exists  $v \in S$ satisfying  $\Vert p' - v \Vert \geq \sqrt{1+ \varepsilon}$,
	then the number of iterations of TA reduces to  $O(1/\varepsilon)$, matching Nesterov's fast-gradient algorithm. This geometric assumption is reasonable and suggests a strategy for when it is not satisfied at an iterate. In Section 6 and 7, as an  application of the TA, we solve the feasibility problems, i.e. strict LP feasibility and LP feasibility. In Section 8, we introduce  the irredundancy problem. In Section 9, we demonstrate our empirical results. Lastly, we conclude with remarks and propose future work.

	\section{A Summary of Triangle Algorithm, Dualities and Complexity} \label{sec2}

	The TA described in \cite{kalantari2015characterization} is an iterative algorithm for solving the CHM problem. Formally,  given a set $S=\{v_1, \dots, v_n\} \subset \mathbb{R}^m$, a distinguished point $p \in \mathbb{R} ^m$, and $\varepsilon \in (0,1)$, solving CHM means either computing an $\varepsilon$-{\it approximate solution}, i.e. $p_\varepsilon \in conv(S)$ so that
	\begin{equation}
	\Vert p - p_\varepsilon \Vert  \leq \varepsilon R,  \quad R = \max \{\Vert v_i - p \Vert  : v_i \in S\},
	\end{equation}
	or a hyperplane that separates $p$ from $conv(S)$.  Given an {\it iterate} $p' \in conv(S)$, the TA searches for a {\it pivot} to get closer to $p$: $v \in S$  is a $p$-{\it pivot} (or simply {\it pivot}) if $\Vert p' - v \Vert \geq \Vert p -  v \Vert$. Equivalently,
	\begin{equation} \label{eq2}
	(p-p')^Tv \geq  \frac{1}{2} (\Vert p \Vert^2 -  \Vert p' \Vert^2).
	\end{equation}
	
	A $p$-{\it witness} (or simply {\it witness})  is a point $p' \in conv(S)$,  where the orthogonal bisecting hyperplane to $pp'$ separates $p$ from $conv(S)$.  Equivalently,
	\begin{equation} \label{eq3}
	\Vert p' - v_i \Vert  < \Vert p - v_i \Vert, \quad \forall i=1, \dots, n.
	\end{equation}
	The separating hyperplane $H$ is given as
	\begin{equation}
	H=\{x: (p-p')^Tx = \frac{1}{2}(\Vert p \Vert^2 - \Vert p' \Vert^2)\}.
	\end{equation}
	
	Given an iterate $p' \in conv(S)$ that is neither an $\varepsilon$-approximate solution nor a witness, the TA finds a $p$-pivot $v \in S$. Then on the line segment $p'v$ it computes the closest point to $p$, denoted by $Nearest(p; p'v)$. It then replaces $p'$ with $Nearest(p; p'v)$ and repeats.
	
	\begin{prop} \label{prop1} \rm{\cite{kalantari2015characterization}}
		Suppose $p' \in conv(S)$ satisfies $\Vert p' - p \Vert \leq \min \{\Vert p - v_i \Vert: i=1, \dots, n\}$, and $v_j$ is a $p$-pivot, then the new iterate is
		\begin{equation} \label{eq5}
		p'' =Nearest(p;p'v_j)=
		(1-\alpha)p' + \alpha v_j, \quad  \alpha = {(p-p')^T(v_j-p')}/{\Vert v_j - p' \Vert^2}.
		\end{equation}
		If $p'=\sum_{i=1}^n \alpha_i v_i$, a convex combination, $p''=\sum_{i=1}^n \alpha'_i v_i$, $\alpha'_j=(1-\alpha)\alpha_j+\alpha$,  $\alpha'_i= (1-\alpha)\alpha_i$,  $\forall i \not =j$. \qed
	\end{prop}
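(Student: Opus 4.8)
The plan is to recognize $Nearest(p; p'v_j)$ as the solution of a one-dimensional least-squares problem and then verify that the unconstrained minimizer actually lies on the segment. I would parametrize the segment by $x(t) = p' + t(v_j - p')$ for $t \in [0,1]$ and minimize $g(t) = \Vert x(t) - p \Vert^2 = \Vert (p'-p) + t(v_j - p') \Vert^2$. Expanding gives $g(t) = \Vert p'-p\Vert^2 + 2t\,(p'-p)^T(v_j-p') + t^2\Vert v_j - p'\Vert^2$, a convex quadratic in $t$ whose unconstrained minimizer is $t^\ast = (p-p')^T(v_j-p')/\Vert v_j - p'\Vert^2$, which is exactly the claimed $\alpha$. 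The formula $p'' = (1-\alpha)p' + \alpha v_j$ then follows immediately, provided we show $\alpha \in [0,1]$ so that $t^\ast$ is feasible for the segment; establishing this containment is the real content of the statement.

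For $\alpha \ge 0$ I would use the pivot inequality. Writing $(p-p')^T(v_j - p') = (p-p')^T v_j - (p-p')^T p'$ and substituting the pivot bound $(p-p')^T v_j \ge \tfrac12(\Vert p\Vert^2 - \Vert p'\Vert^2)$ from (\ref{eq2}), a short computation collapses the right-hand side to $\tfrac12\Vert p - p'\Vert^2 \ge 0$. Hence the numerator of $\alpha$ is nonnegative, and since the denominator is positive, $\alpha \ge 0$.

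The bound $\alpha \le 1$ is where the hypothesis $\Vert p' - p\Vert \le \min_i \Vert p - v_i\Vert$ enters, and I expect this to be the main obstacle. I would show $\Vert v_j - p'\Vert^2 - (p-p')^T(v_j-p') \ge 0$; rewriting the left-hand side as $(v_j - p)^T(v_j - p')$ and then, via $v_j - p' = (v_j - p) + (p - p')$, as $\Vert v_j - p\Vert^2 + (v_j - p)^T(p - p')$, Cauchy--Schwarz yields the lower bound $\Vert v_j - p\Vert\big(\Vert v_j - p\Vert - \Vert p - p'\Vert\big)$, which is nonnegative precisely because of the hypothesis applied to $i = j$. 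This gives $\alpha \le 1$ and completes the proof that $p''$ is the nearest point to $p$ on the segment $p'v_j$.

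Finally, the convex-combination update is routine substitution: expanding $p'' = (1-\alpha)\sum_{i=1}^n \alpha_i v_i + \alpha v_j$ and collecting the coefficient of each $v_i$ gives $\alpha'_j = (1-\alpha)\alpha_j + \alpha$ and $\alpha'_i = (1-\alpha)\alpha_i$ for $i \ne j$. Since $\alpha \in [0,1]$ these coefficients are nonnegative, and they sum to $(1-\alpha)\sum_{i=1}^n\alpha_i + \alpha = 1$, confirming $p'' \in conv(S)$.
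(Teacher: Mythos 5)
Your proof is correct. Note that the paper itself offers no argument for this proposition at all: it is prefaced by ``It is easy to show'' and stated with a terminal \qed, so your write-up supplies exactly the details the paper leaves implicit. Your decomposition is the natural one, and you correctly identify the real content: the closed-form minimizer of the quadratic $g(t)$ gives the formula for $\alpha$, the pivot inequality (\ref{eq2}) is what forces $\alpha \geq 0$ (collapsing to $\tfrac12\Vert p-p'\Vert^2$), and the hypothesis $\Vert p'-p\Vert \leq \min_i \Vert p-v_i\Vert$ is what forces $\alpha \leq 1$ via $(v_j-p)^T(v_j-p') \geq \Vert v_j-p\Vert\bigl(\Vert v_j-p\Vert - \Vert p-p'\Vert\bigr)$; without that hypothesis the nearest point could be the endpoint $v_j$ and the stated formula would fail. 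The only detail worth a parenthetical remark is the degenerate case $v_j = p'$ (needed so the denominator $\Vert v_j-p'\Vert^2$ is nonzero), which the pivot condition rules out unless $p = p' = v_j$, where the claim is trivial.
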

	
	The correctness and complexity of the TA are stated in the following:
	
	\begin{thm}  \label{thm1} {{\rm (Distance Duality)\cite{kalantari2015characterization}}} 
		$p \in conv(S)$ if and only if for each  $p' \in conv(S)$ there exists a pivot $v_j \in S$. Equivalently,  $p \not \in conv(S)$ if and only if there exists a witness $p' \in conv(S)$.  \qed
	\end{thm}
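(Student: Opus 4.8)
The plan is to prove the first biconditional, since the second is just its contrapositive: negating ``for each $p'\in conv(S)$ there exists a pivot'' yields ``there exists $p'\in conv(S)$ admitting no pivot,'' and by \eqref{eq2} a point $p'$ admits no pivot precisely when $\Vert p'-v_i\Vert<\Vert p-v_i\Vert$ for all $i$, which is exactly the witness condition \eqref{eq3}. Thus it suffices to show that $p\in conv(S)$ if and only if every iterate $p'\in conv(S)$ has a pivot.

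For the forward direction, suppose $p\in conv(S)$, say $p=\sum_{i=1}^n\lambda_i v_i$ with $\lambda_i\ge 0$ and $\sum_i\lambda_i=1$, and assume toward a contradiction that some $p'\in conv(S)$ has no pivot, i.e.\ $(p-p')^Tv_i<\tfrac12(\Vert p\Vert^2-\Vert p'\Vert^2)$ for every $i$. Taking the convex combination of these strict inequalities with weights $\lambda_i$ and using $\sum_i\lambda_i v_i=p$ gives $(p-p')^Tp<\tfrac12(\Vert p\Vert^2-\Vert p'\Vert^2)$. Since the left side minus the right side equals $\tfrac12\Vert p-p'\Vert^2$, this would force $\tfrac12\Vert p-p'\Vert^2<0$, a contradiction. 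Hence a pivot exists at every $p'$.

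For the reverse direction I would argue the contrapositive: assuming $p\notin conv(S)$, I will exhibit a witness. Since $conv(S)$ is compact and convex, there is a point $p^*\in conv(S)$ that minimizes $\Vert x-p\Vert$ over $x\in conv(S)$, and $p\ne p^*$. The first-order optimality (projection) inequality gives $(p-p^*)^T(x-p^*)\le 0$ for all $x\in conv(S)$; applying it to each $v_i\in S\subseteq conv(S)$ yields $(p-p^*)^Tv_i\le (p-p^*)^Tp^*$. A direct computation shows $\tfrac12(\Vert p\Vert^2-\Vert p^*\Vert^2)-(p-p^*)^Tp^*=\tfrac12\Vert p-p^*\Vert^2>0$, so $(p-p^*)^Tv_i\le(p-p^*)^Tp^*<\tfrac12(\Vert p\Vert^2-\Vert p^*\Vert^2)$ for every $i$. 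By \eqref{eq3} this means $p^*$ is a witness, completing the contrapositive and the proof.

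I expect the reverse direction to be the crux: the forward direction is essentially a one-line convex-combination argument, whereas producing a witness requires the existence of the nearest point (compactness of $conv(S)$) together with the variational characterization of projection onto a convex set. The pleasant feature the argument reveals is that the projection $p^*$ of $p$ onto $conv(S)$ is \emph{itself} a witness, so no separate separating-hyperplane construction is needed; the separating hyperplane $H$ then arises as the orthogonal bisector of $pp^*$.
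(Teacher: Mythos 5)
Your proof is correct, but there is nothing in this paper to compare it against: Theorem~\ref{thm1} appears in Section~2 only as a recalled result, stated without proof and attributed to \cite{kalantari2015characterization}, so the paper itself supplies no argument. Judged on its own merits, your write-up is sound. The logical reduction is valid: by (\ref{eq2}), the failure of every $v_i$ to be a pivot at $p'$ is precisely the system of strict inequalities $\Vert p'-v_i\Vert<\Vert p-v_i\Vert$ for all $i$, which is the witness condition (\ref{eq3}), so the second biconditional is indeed the negation of the first. Your forward direction is a clean algebraic argument: averaging the no-pivot inequalities with the weights of $p=\sum_i\lambda_i v_i$ (strictness is preserved since the weights sum to one) gives $(p-p')^Tp<\tfrac12(\Vert p\Vert^2-\Vert p'\Vert^2)$, and the identity
\begin{equation*}
(p-p')^Tp-\tfrac12\bigl(\Vert p\Vert^2-\Vert p'\Vert^2\bigr)=\tfrac12\Vert p-p'\Vert^2
\end{equation*}
turns this into $\tfrac12\Vert p-p'\Vert^2<0$, a contradiction. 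Your reverse direction is also correct and identifies the essential geometric fact: $conv(S)$ is compact and convex, so the projection $p^*$ of $p$ exists, and the variational inequality $(p-p^*)^T(v_i-p^*)\le 0$ combined with $\tfrac12(\Vert p\Vert^2-\Vert p^*\Vert^2)-(p-p^*)^Tp^*=\tfrac12\Vert p-p^*\Vert^2>0$ shows no $v_i$ can satisfy (\ref{eq2}), i.e.\ $p^*$ is itself a witness. This nearest-point construction of the witness is the standard route (and is essentially how the cited source argues); it is also consistent with how the present paper uses the duality, since the separating hyperplane $H$ in the paper is exactly the orthogonal bisector of the segment joining $p$ to a witness.
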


	\begin{thm}   \label{thm3}  {{\rm (Complexity Bounds) \cite{kalantari2015characterization}}} 
		Given $\varepsilon \in (0,1)$, if the TA starts with $p_0$, the $v_i$ closest to $p$ , in $O(1/\varepsilon^2)$ iterations it either computes
		$p_\varepsilon \in conv(S)$ with $\Vert p - p_\varepsilon \Vert \leq \varepsilon R$, or a witness. \qed
	\end{thm}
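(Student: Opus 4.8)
The plan is to track the squared distance $d_k = \Vert p - p_k\Vert$ across iterations and show it contracts by a definite amount whenever the current iterate $p_k$ is neither an $\varepsilon$-approximate solution nor a witness. First I would observe that starting from $p_0$, the nearest $v_i$ to $p$, the precondition $\Vert p' - p\Vert \le \min_i \Vert p - v_i\Vert$ of Proposition~\ref{prop1} holds, and since the distance decreases monotonically it is preserved at every iterate. This is what guarantees that the step parameter $\alpha$ of \eqref{eq5} lies in $[0,1]$, so that $p''$ genuinely equals the foot of the perpendicular from $p$ onto the segment $p'v_j$; the clean reduction identity below depends on this.

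Next, from $p'' = p' + \alpha(v_j - p')$ with $\alpha = (p-p')^T(v_j-p')/\Vert v_j - p'\Vert^2$, a direct expansion gives the exact one-step decrease $\Vert p - p'\Vert^2 - \Vert p - p''\Vert^2 = [(p-p')^T(v_j-p')]^2/\Vert v_j-p'\Vert^2$. The crux of the argument, and the step I expect to be the main obstacle, is to lower bound the numerator using only the pivot property of $v_j$. Substituting the pivot inequality \eqref{eq2}, namely $(p-p')^Tv_j \ge \tfrac12(\Vert p\Vert^2 - \Vert p'\Vert^2)$, into the decomposition $(p-p')^T(v_j-p') = (p-p')^Tv_j - (p-p')^Tp'$ and simplifying should collapse exactly to $\tfrac12\Vert p-p'\Vert^2$, yielding $(p-p')^T(v_j-p') \ge \tfrac12 d_k^2$. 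For the denominator I would use the crude bound $\Vert v_j - p'\Vert \le \Vert v_j - p\Vert + \Vert p - p'\Vert \le 2R$, valid because $\Vert v_j - p\Vert \le R$ and $\Vert p-p'\Vert \le R$ by the maintained precondition.

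Combining the two bounds produces the recurrence $d_k^2 - d_{k+1}^2 \ge d_k^4/(16R^2)$. Writing $u_k = d_k^2/R^2$ this reads $u_{k+1} \le u_k - u_k^2/16$, and passing to reciprocals (using $1/(1-t) \ge 1+t$) gives $1/u_{k+1} \ge 1/u_k + 1/16$, hence $u_k \le 16/k$. As long as $p_k$ fails to be an $\varepsilon$-approximate solution we have $d_k > \varepsilon R$, i.e. $u_k > \varepsilon^2$, so this can persist for at most $k = O(1/\varepsilon^2)$ iterations. After that the algorithm must terminate with an $\varepsilon$-approximate solution, unless at some earlier iterate no pivot exists; in the latter case that iterate satisfies \eqref{eq3} and is a witness, which is consistent with the Distance Duality of Theorem~\ref{thm1}. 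The only technical point requiring care remains the maintenance of $\alpha \in [0,1]$, which I would verify by the Cauchy–Schwarz estimate $(p-v_j)^T(p-p') \le \Vert p - v_j\Vert\,\Vert p-p'\Vert \le \Vert p - v_j\Vert^2$ together with the precondition.
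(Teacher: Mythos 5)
Your proof is correct: the exact one-step identity $\Vert p-p'\Vert^2-\Vert p-p''\Vert^2=\bigl[(p-p')^T(v_j-p')\bigr]^2/\Vert v_j-p'\Vert^2$, the collapse of the pivot inequality \eqref{eq2} to $(p-p')^T(v_j-p')\geq \tfrac12\Vert p-p'\Vert^2$, the bound $\Vert v_j-p'\Vert\leq 2R$, and the reciprocal recurrence $1/u_{k+1}\geq 1/u_k+1/16$ all check out, and you correctly isolate and resolve the one real subtlety, namely that $\alpha\in[0,1]$ must be maintained for \eqref{eq5} to be the genuine nearest point (your Cauchy--Schwarz estimate $(p-v_j)^T(p-p')\leq\Vert p-v_j\Vert^2$ is exactly equivalent to $\alpha\leq 1$, and $\alpha\geq 0$ follows from the pivot inequality). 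Note, however, that this paper never proves Theorem~\ref{thm3}; it is quoted from \cite{kalantari2015characterization} as background, so the only in-paper point of comparison is the spherical analysis of Section~\ref{Sec4}. There, Lemma~\ref{lem1} exploits the normalization $p=0$, $\Vert v\Vert=1$ and \emph{strict} pivots to obtain the exact worst-case recurrence $\widehat\delta_{k+1}^2=\widehat\delta_k^2/(1+\widehat\delta_k^2)$, i.e.\ $1/\widehat\delta_{k+1}^2=1/\widehat\delta_k^2+1$, solved in closed form as $1/(1+N)$. Your argument is the general-CHM analogue of the same potential-function scheme: the reciprocal of the normalized squared gap $d_k^2/R^2$ increases by at least $1/16$ per iteration rather than by exactly $1$. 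What the spherical setting buys is the clean constant, an exact closed form, and the disappearance of the clamping issue (the geometric picture makes the projection automatically interior); what your version buys is generality, since it handles arbitrary $R$, non-strict pivots, and the witness alternative directly via Theorem~\ref{thm1}, which is precisely the form in which Theorem~\ref{thm3} is stated.
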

	
	\begin{definition} \label{strictpivot} Given $p' \in conv(S)$,  $v \in S$ is a {\it strict} $p$-pivot (or simply {\it strict} pivot) if $\angle p'pv \geq \pi/2$.
	\end{definition}

	\begin{thm}  \label{thm4} {\rm (Strict Distance Duality) \cite{kalantari2015characterization}}  Assume $p \not \in S$.  Then $p \in conv(S)$ if and only if for each  $p' \in conv(S)$ there exists strict $p$-pivot $v \in S$.  \qed
	\end{thm}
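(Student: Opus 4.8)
The plan is to translate the angular definition of a strict pivot into an inner-product inequality and then treat the two implications separately. First I would record that, since the angle $\angle p'pv$ is measured at the vertex $p$ between the rays toward $p'$ and toward $v$, the condition $\angle p'pv \geq \pi/2$ is equivalent to the sign condition $(p'-p)^T(v-p) \leq 0$. The hypothesis $p \notin S$ is exactly what makes this reduction legitimate: it guarantees $v-p \neq 0$ for every $v \in S$, so the angle is well defined for each candidate pivot. I would also dispose of the degenerate iterate $p'=p$ at the outset, since such a $p'$ is already the sought point of $conv(S)$ and no pivot is needed.

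For the forward direction, assume $p \in conv(S)$ and fix any $p' \in conv(S)$ with $p' \neq p$. Writing $p = \sum_{i=1}^n \alpha_i v_i$ as a convex combination and averaging the scalars $(p'-p)^T(v_i-p)$ against the weights $\alpha_i$, I would use $\sum_i \alpha_i (v_i - p) = p - p = 0$ to conclude that $\sum_i \alpha_i (p'-p)^T(v_i-p) = 0$. Since the $\alpha_i$ are nonnegative and sum to one, not every term of this weighted average can be strictly positive, so some index $j$ satisfies $(p'-p)^T(v_j - p) \leq 0$; that $v_j$ is the desired strict pivot.

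For the converse I would argue by contraposition, assuming $p \notin conv(S)$ and exhibiting a single $p' \in conv(S)$ that admits no strict pivot. Theorem~\ref{thm1} supplies a witness $p'$, i.e. $\Vert p' - v_i \Vert < \Vert p - v_i \Vert$ for all $i$. The key observation is that a strict pivot is in particular an ordinary pivot: if $\angle p'pv \geq \pi/2$, then expanding the sign condition above (equivalently, the law of cosines) gives $\Vert p' - v \Vert^2 \geq \Vert p' - p \Vert^2 + \Vert v - p \Vert^2 \geq \Vert v - p \Vert^2$, so $\Vert p' - v \Vert \geq \Vert p - v \Vert$, which the witness inequalities forbid. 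Hence the witness has no strict pivot, completing the contrapositive. I expect the only genuine subtlety to lie in this last step, namely cleanly establishing that the angular strict-pivot condition implies the distance-based ordinary-pivot condition so that Theorem~\ref{thm1} can be invoked, together with the careful treatment of the degenerate cases $p'=p$ and $p \in S$; the averaging argument driving the forward direction is otherwise routine.
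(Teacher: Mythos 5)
Your proof is correct: the translation of the angle condition into $(p'-p)^T(v-p) \leq 0$, the averaging argument over the convex combination for the forward direction, and the contrapositive via a witness from Theorem~\ref{thm1} (using the law of cosines to show a strict pivot is in particular an ordinary pivot, which the witness inequalities forbid) are all sound, with the degenerate cases $p'=p$ and $v=p$ handled appropriately. Note that the paper states Theorem~\ref{thm4} without proof, citing \cite{kalantari2015characterization}, and your argument is essentially the standard one for this duality, so there is no divergence of approach to report.
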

	
	\begin{thm}  \label{thm5} \rm{\cite{kalantari2015characterization}}
		 Suppose $B_\rho (p)= \{x:  \Vert x - p \Vert \leq \rho R\} \subseteq conv^\circ(S)$, the relative interior of $conv(S)$. If the TA uses a strict pivot in each iteration,  $p_\varepsilon \in conv(S)$ can be computed in $O\big (\rho^{-2} \log \frac{1}{\varepsilon} \big )$ iterations. \qed
	\end{thm}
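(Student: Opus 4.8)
The statement should be read as assuming that the ball $B_\rho(p)$ of radius $\rho R$ about $p$ lies inside $conv^\circ(S)$; the goal is to upgrade the sublinear $O(1/\varepsilon^2)$ bound of Theorem~\ref{thm3} to a \emph{linear} (geometrically convergent) rate. The plan is to show that a single TA step contracts the squared distance $\Vert p'-p\Vert^2$ by a fixed multiplicative factor $1-c\rho^2$; iterating this and solving $(1-c\rho^2)^k \le \varepsilon^2$ then yields $k = O(\rho^{-2}\log(1/\varepsilon))$.

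First I would record the exact one-step identity implied by Proposition~\ref{prop1}. Writing $p'' = (1-\alpha)p' + \alpha v$ with $\alpha = (p-p')^T(v-p')/\Vert v-p'\Vert^2$ and expanding $p''-p = (p'-p) + \alpha(v-p')$, the cross term cancels one copy of $\alpha^2\Vert v-p'\Vert^2$, leaving
\begin{equation}
\Vert p''-p\Vert^2 = \Vert p'-p\Vert^2 - \frac{\big((p-p')^T(v-p')\big)^2}{\Vert v-p'\Vert^2}.
\end{equation}
Thus everything reduces to lower-bounding the pivot-dependent quantity $(p-p')^T(v-p')$ and upper-bounding $\Vert v-p'\Vert$. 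The latter is immediate: since $p'\in conv(S)$ gives $\Vert p'-p\Vert \le R$ and $\Vert v-p\Vert \le R$, the triangle inequality yields $\Vert v-p'\Vert \le 2R$.

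The crux is the lower bound, and this is where the ball hypothesis (rather than the mere angle condition $\angle p'pv \ge \pi/2$) is essential. Let $\delta = \Vert p'-p\Vert$ and $u = (p-p')/\delta$ (which lies in the affine hull of $S$, since $p$ and $p'$ do), and consider the point $q = p + \rho R\,u$. Since $q \in B_\rho(p) \subseteq conv(S)$, write $q = \sum_i \lambda_i v_i$ as a convex combination; then $\sum_i \lambda_i (p-p')^T(v_i-p) = (p-p')^T(q-p) = \rho R\,\delta$, so some vertex $v$ attains $(p-p')^T(v-p) \ge \rho R\,\delta$. This $v$ is in particular a strict pivot, and choosing it (e.g.\ as the pivot maximizing $(p-p')^T v$) gives
\begin{equation}
(p-p')^T(v-p') = (p-p')^T(v-p) + \delta^2 \ge \rho R\,\delta.
\end{equation}
Substituting the two bounds into the one-step identity produces the contraction
\begin{equation}
\Vert p''-p\Vert^2 \le \delta^2 - \frac{(\rho R\,\delta)^2}{(2R)^2} = \Big(1-\tfrac{\rho^2}{4}\Big)\delta^2.
\end{equation}

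Finally I would iterate. Starting from $p_0$, the closest $v_i$ to $p$, we have $\Vert p_0-p\Vert \le R$, so after $k$ strict-pivot steps $\Vert p_k-p\Vert^2 \le (1-\rho^2/4)^k R^2$. Requiring the right side to be at most $\varepsilon^2 R^2$ and using $-\log(1-\rho^2/4) \ge \rho^2/4$ gives $k \ge 8\rho^{-2}\log(1/\varepsilon)$, i.e.\ $O(\rho^{-2}\log(1/\varepsilon))$ iterations suffice. The main obstacle is the middle step: a generic strict pivot only guarantees $(p-p')^T(v-p')\ge \delta^2$, which yields the weaker recursion $\delta_{k+1}^2 \le \delta_k^2(1-\delta_k^2/4R^2)$ and hence only the sublinear rate; it is precisely the full-dimensional ball that supplies the extra factor linear in $\delta$ and converts the estimate into a genuine geometric contraction. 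I should also confirm that Proposition~\ref{prop1} applies throughout, i.e.\ that $\alpha\in(0,1]$; this follows because the strict-pivot inner products keep the foot of the perpendicular inside the segment and because $\Vert p_k - p\Vert$ is non-increasing and bounded by $\min_i\Vert v_i-p\Vert$.
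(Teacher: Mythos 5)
The paper never actually proves Theorem~\ref{thm5}: it appears in Section~2, the summary of results imported from \cite{kalantari2015characterization}, and is stated without an argument, so there is no internal proof to compare against line by line. Judged on its own, your proof is correct, and it reconstructs what is essentially the cited argument: the exact decrement identity $\Vert p''-p\Vert^2 = \Vert p'-p\Vert^2 - \big((p-p')^T(v-p')\big)^2/\Vert v-p'\Vert^2$, the crude bound $\Vert v-p'\Vert\le 2R$, the averaging step in which the ball point $q=p+\rho R\,(p-p')/\Vert p-p'\Vert$ (legitimately in $conv(S)$; your remark that the direction $u$ stays in the affine hull is exactly what reconciles ``ball'' with ``relative interior'') forces some $v_i\in S$ with $(p-p')^T(v_i-p)\ge \rho R\,\delta$, and the resulting per-step contraction by $1-\rho^2/4$, hence $O(\rho^{-2}\log(1/\varepsilon))$ iterations. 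The check that Proposition~\ref{prop1} remains applicable (distances are monotone and the algorithm starts at the nearest $v_i$) is also sound.

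The one point to state more loudly---you gesture at it, but it deserves a counterexample---is that the conclusion genuinely fails for an \emph{arbitrary} strict pivot, so the theorem's phrase ``uses a strict pivot in each iteration'' must be read as ``uses a strict pivot realizing the ball guarantee,'' e.g.\ the greedy one maximizing $(p-p')^Tv$ over $S$, which is the choice your proof makes. Indeed, take $S$ to be a fine discretization of the unit circle in $\mathbb{R}^2$ and $p=0$, so that $\rho$ can be taken close to $1$. At any iterate $p'$ there is a strict pivot $v$ (nearly) orthogonal to $p'$, and an adversary choosing those pivots forever yields $\delta_{k+1}^2\approx\delta_k^2/(1+\delta_k^2)$, i.e.\ only the sublinear $O(1/\varepsilon^2)$ behavior, despite the large inscribed ball. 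So the ball hypothesis guarantees that a good pivot \emph{exists} at every step but cannot by itself upgrade every strict-pivot step; your proof, by explicitly selecting the maximizing pivot, proves the correct reading of the statement.
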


%	\begin{thm}  \label{thm6}   Let $\widehat S= \{ \widehat v_1, \dots, \widehat v_N\}$ be the subset  of $S$ used in TA. Suppose $\Vert p \Vert ^2$  and $p^T \widehat v_i$, $i=1, \dots, N$ are computed.
%		Then the overall complexity of TA is
%		$O  (m N^2+ {N}/{\varepsilon^2})$. \qed
%	\end{thm}

	\begin{thm}  \label{thm6} \rm{\cite{awasthi2018robust}}  Let $\widehat S= \{ \widehat v_1, \dots, \widehat v_N\}$ be a subset  of $S=\{v_1, \dots, v_n\}$. Given $p \in \mathbb{R}^m$, consider testing if $p \in conv( \widehat S)$. 
		Given $\varepsilon \in (0, 1)$, the complexity of testing if there exists an $\varepsilon$-approximate solution is
		\begin{equation} \label{eqAA}
		O \bigg (m N^2+ \frac{N}{\varepsilon^2} \bigg).
		\end{equation}

		In particular, suppose in testing if $p \in conv(S)$, $S=\{v_1, \dots, v_n\}$, the TA computes an $\varepsilon$-approximate solution $p_\varepsilon$ by examining only the elements of a subset $\widehat S= \{ \widehat v_1, \dots, \widehat v_N\}$ of $S$.  Then the  number of operations to determine if  there exists an $\varepsilon$-approximate solution  $p_\varepsilon \in conv(S)$, is  as stated in (\ref{eqAA}). $\qed$
	\end{thm}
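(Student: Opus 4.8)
The plan is to show that after an $O(mN^2)$ preprocessing phase, each of the $O(1/\varepsilon^2)$ iterations guaranteed by Theorem~\ref{thm3} can be executed in only $O(N)$ arithmetic operations rather than the naive $O(mN)$. The saving comes from the observation that TA never needs the $m$-dimensional coordinates of the iterate once all relevant inner products have been tabulated, so the factor of $m$ is paid only once rather than once per iteration.

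First I would precompute the Gram matrix $G_{ij} = \widehat v_i^T \widehat v_j$ for $1 \le i,j \le N$, together with the vector $q_i = p^T \widehat v_i$ and the scalar $\Vert p \Vert^2$. Forming $G$ dominates this phase and costs $O(mN^2)$; the remaining quantities cost $O(mN)$, which is absorbed. Throughout the run the iterate is maintained only through its barycentric coefficients, $p' = \sum_{i=1}^N \alpha_i \widehat v_i$, along with the auxiliary data $c_j = (p')^T \widehat v_j = \sum_i \alpha_i G_{ij}$, the value $\Vert p' \Vert^2$, and $d = p^T p' = \sum_i \alpha_i q_i$.

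Next I would verify that every operation of a single iteration reduces to $O(N)$ work under this bookkeeping. Searching for a pivot amounts to checking condition~(\ref{eq2}), namely $(p-p')^T \widehat v_j = q_j - c_j \ge \tfrac{1}{2}(\Vert p \Vert^2 - \Vert p' \Vert^2)$, which is $O(1)$ per index and $O(N)$ in total. Given a pivot $\widehat v_j$, the step length $\alpha$ of Proposition~\ref{prop1} is assembled from $(p-p')^T(\widehat v_j - p') = (q_j - c_j) - (d - \Vert p' \Vert^2)$ and $\Vert \widehat v_j - p' \Vert^2 = G_{jj} - 2c_j + \Vert p' \Vert^2$, both $O(1)$. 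Finally, the coefficient update $\alpha'_j = (1-\alpha)\alpha_j + \alpha$, $\alpha'_i = (1-\alpha)\alpha_i$ touches all $N$ entries, and the auxiliary quantities propagate linearly via $c_k \leftarrow (1-\alpha)c_k + \alpha G_{jk}$ and $d \leftarrow (1-\alpha)d + \alpha q_j$, together with the analogous quadratic update $\Vert p' \Vert^2 \leftarrow (1-\alpha)^2 \Vert p' \Vert^2 + 2\alpha(1-\alpha)c_j + \alpha^2 G_{jj}$; reading the $j$-th column of $G$ costs $O(N)$, so the whole iteration is $O(N)$.

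Combining the two phases yields $O(mN^2) + O(1/\varepsilon^2)\cdot O(N) = O(mN^2 + N/\varepsilon^2)$, which is~(\ref{eqAA}). The second assertion then follows immediately: if TA, run on the full set $S$, only ever queries points in $\widehat S$, then the same accounting applies with $|\widehat S| = N$. The main obstacle I anticipate is not any single estimate but the verification that the coordinate-free bookkeeping is genuinely self-consistent — specifically, confirming that the incremental updates of $c_k$, $d$, and $\Vert p' \Vert^2$ remain exact and $O(N)$ across all iterations, so that the $O(m)$ cost of an inner product is incurred only during preprocessing and never again inside the loop.
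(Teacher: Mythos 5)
Your argument is correct and is precisely the intended one: the paper states Theorem~\ref{thm6} without proof (it is quoted, with \qed, from the earlier Triangle Algorithm work), and the standard justification is exactly your scheme of precomputing the Gram matrix in $O(mN^2)$ operations, maintaining the iterate only through its barycentric coefficients and the tabulated inner products $c_j$, $d$, $\Vert p'\Vert^2$, so that pivot search, the step length of Proposition~\ref{prop1}, and all updates cost $O(N)$ per iteration, with the $O(1/\varepsilon^2)$ iteration count supplied by Theorem~\ref{thm3}. Your bookkeeping identities (e.g.\ $c_k \leftarrow (1-\alpha)c_k + \alpha G_{jk}$ and the quadratic update of $\Vert p'\Vert^2$) are exact, so the total $O(mN^2 + N/\varepsilon^2)$ follows, and the second assertion follows by the same accounting applied lazily to the $N$ points actually examined.
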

	
	\begin{remark}
		Without any pre-processing, the straight forward iterative complexity the TA is $O(mN/\varepsilon^2)$. However, with an $O(mN^2)$ pre-processing, the complexity of each iteration is $O(N)$, resulting in the overall complexity in (\ref{eqAA}).
		%		In the complexity in  $O(mN^2)$ is a preprocessing step which reduces the iteration running time from , this contrast with $O(mN)$ without pre-process. 
	\end{remark}

	\section{Spherical-CHM  and Equivalence to CHM}

	The {\it Spherical-CHM} is the case of CHM, where $p=0$ and each $v_i \in S$ has unit norm.  Given a  raw data set $S^r=\{v_1^r,...v^r_n\}$ and $p^r$, we set $p=0$ and set $S=\{v_1,...,v_n\}$, where $v_i=(v^r_i-p^r)/\|v^r_i-p^r\|$. This step scales every point onto a unit sphere. (See Figure  1)

	\begin{figure}
		
		\subfloat[convex hull of raw data]
		{
%			\adjustbox{valign=c}	
%			{
				\begin{tikzpicture}[scale=0.40]\label{unscal}
				\filldraw  (0.5*\scc,0.5*\scc) node[right] {$v^r_1$} circle (2pt);
				\filldraw (-0.5*\scc,0.5*\scc) node[left]  {$v^r_2$}circle (2pt);
				\filldraw (-0.9*\scc,-0.9*\scc) node[left]  {$v^r_3$}circle (2pt);
				\filldraw  (0*\scc,-1.2*\scc) node[below]   {$v^r_4$}circle (2pt);
				\filldraw  (0,0) node[right] {$p^r$}circle (2pt);
				\filldraw(0.7*\scc,-0.7*\scc) node[right]   {$v^r_5$}circle (2pt);
				\draw[thick,dash dot]  (0.5*\scc,0.5*\scc) --(-0.5*\scc,0.5*\scc);
				\draw[thick,dash dot]  (-0.5*\scc,0.5*\scc) --(-0.9*\scc,-0.9*\scc) ;
				\draw[thick,dash dot]   (-0.9*\scc,-0.9*\scc) --(0,-1.2*\scc);
				\draw[thick,dash dot]  (0,-1.2*\scc)--(0.7*\scc,-0.7*\scc);
				\draw[thick,dash dot]  (0.7*\scc,-0.7*\scc)--(0.5*\scc,0.5*\scc);
				
				\end{tikzpicture}
%			}
		}
		\subfloat[convex hull of scaled data]
		{
%			\adjustbox{valign=c}	
%			{
				\begin{tikzpicture}[scale=0.3]\label{unscal}
				\begin{scope}[black]
				\draw (0.0,0.0) circle (1.414*\scc);
				%\draw (7.0,0.0) circle (7.0);
				%\draw (0.0,0.0) circle (4.48);
				\end{scope}
				\filldraw  (0.5*\scc,0.5*\scc) node[right] {$v^r_1$} circle (2pt);
				\filldraw  (1*\scc,1*\scc) node[right] {$v_1$} circle (2pt);
				\filldraw (-0.5*\scc,0.5*\scc) node[left]  {$v^r_2$}circle (2pt);
				\filldraw (-1*\scc,1*\scc) node[left]  {$v_2$}circle (2pt);
				\filldraw (-1*\scc,-1*\scc) node[left]  {$v_3$}circle (2pt);
				\filldraw (-0.9*\scc,-0.9*\scc) node[right]  {$v^r_3$}circle (2pt);
				
				\filldraw  (0,-1.2*\scc) node[right]   {$v^r_4$}circle (2pt);
				\filldraw  (0,-1.414*\scc) node[below]   {$v_4$}circle (2pt);
				\filldraw  (0,0) node[right] {$p$}circle (2pt);
				\filldraw(0.7*\scc,-0.7*\scc) node[right]   {$v^r_5$}circle (2pt);
				\filldraw(1*\scc,-1*\scc) node[right]   {$v_5$}circle (2pt);
				\draw (0,0)--(1*\scc,1*\scc);
				\draw (0,0)--(-1*\scc,1*\scc);
				\draw (0,0)--(0,-1.414*\scc);
				\draw (0,0)--(1*\scc,-1*\scc);
				\draw (0,0)--(-1*\scc,-1*\scc);
				\draw[thick,dash dot]  (1*\scc,1*\scc) --(-1*\scc,1*\scc);
				\draw[thick,dash dot]  (-1*\scc,1*\scc) --(-1*\scc,-1*\scc);
				\draw[thick,dash dot]  (-1*\scc,-1*\scc) --(0,-1.414*\scc);
				\draw[thick,dash dot]   (0,-1.414*\scc)--(1*\scc,-1*\scc);
				\draw[thick,dash dot]  (1*\scc,-1*\scc)--(1*\scc,1*\scc);

				\end{tikzpicture}
%			}

		}
		%		\end{center}
		
		\caption{Compute $S=\{v_1,...,v_n\}$ by scaling raw data set $S^r=\{v_1^r-p^r,...,v^r_n-p^r\}$ onto unit shpere}
		\label{fig:Fig2}
	\end{figure}

	Intuitively we expect CHM and Spherical-CHM to be equivalent. However, we need to make this precise, that is we need to convert approximate solutions and separating hyperplanes from one problem to the other.  The theorem below shows that given an instance of CHM we can convert it to an instance of Spherical-CHM so that
	the convex hull of points in CHM contains $p^r$ if and only if the convex hull of points in Spherical-CHM contains the origin. Next, it proves if we have an $\varepsilon$-approximate solution of Spherical-CHM, we can convert it to an $\varepsilon$-approximate solution of CHM. Finally, given a separating hyperplane for Spherical-CHM, we can construct a separating hyperplane for the CHM.
	
	\begin{thm}  Given $p \in \mathbb{R}^m$, $S=\{v_i: i=1, \dots, n\} \subset  \mathbb{R}^m$, $p \not \in S$, let $R= \max\{\Vert v_i-p \Vert : v_i \in S\}$.  Let $\overline p =0$, and  $\overline S= \{ \overline v_i = v_i -p: i=1, \dots, n\}$. Let $S_0 = \{ \overline v_i/ \Vert \overline v_i \Vert: i=1, \dots, n\}$.
		
		(i) {\rm (Equivalence of Exact Feasibility in CHM and Spherical-CHM)}
		
		$p \in conv(S)$ if and only if $0 \in conv(\overline S)$ if and only if $0 \in conv(S_0)$.
		
		(ii) {\rm (Equivalence of Approximate Solutions in CHM and Spherical-CHM)}
		
		Given  $\varepsilon \in (0,1)$, suppose $\widehat p_\varepsilon = \sum_{i=1}^n \alpha_i \overline v_i/ \Vert \overline v_i \Vert$, $\sum_{i=1}^n \alpha_i=1$, $\alpha_i \geq 0$ satisfies
		\begin{equation} \label{eqpep}
		\Vert \widehat p_\varepsilon \Vert \leq \varepsilon.
		\end{equation}
		Set
		\begin{equation} \label{eqbet}
		p_\varepsilon =  \sum_{i=1}^n \beta_i v_i, \quad
		\beta_i=  \frac{{\alpha_i}/{\Vert \overline v_i \Vert}}{ \sum_{j=1}^n  ({\alpha_j}/{\Vert \overline v_j \Vert})}, \quad i=1, \dots, n.
		\end{equation}
		Then
		\begin{equation}
		\Vert p - p_\varepsilon \Vert \leq \varepsilon R.
		\end{equation}
		
		(iii) {\rm (Equivalence of Separation in CHM and Spherical-CHM)}
		
		Assume $p \not \in conv(S)$. Without loss of generality assume $p=0$, hence $R= \max \{\Vert v_i \Vert: i=1, \dots, n\}$.  Let $S_R=\{v_i/R: i=1, \dots, n\}$.
		Suppose $p' \in conv(S_0)$ is a $0$-witness, i.e. the orthogonal bisector hyperplane to the line segment $0p'$, say $H_0$,  separates $0$ from $conv(S_0)$.  Let $w_i=v_i/R$, $i=1, \dots, n$. Then all $w_i$'s lie in the same hemisphere as the one enclosing $S_0$.  For each $i$, let $w_i'$ be the projection of $w_i$ onto the line segment $0p'$.  Let the closest of the $w_i'$ to the origin be denoted by $\widehat w '$. Then the orthogonal bisector hyperplane to  the line segment  $0 \widehat w'$, say $H$,  separates $0$ from $conv(S_R)$ (see Figure 2 (a)). Equivalently, a scaled version of $H$ separates $0$ from $conv(S)$.
	\end{thm}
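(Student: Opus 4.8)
The plan is to treat the three parts in order, since part (i) underlies the other two. For part (i), I would argue that both the translation and the normalization preserve membership of a point in a convex hull. The translation $v_i \mapsto \overline v_i = v_i - p$ is affine and sends $p$ to $0$: if $p = \sum_i \lambda_i v_i$ is a convex combination, then subtracting $p = \sum_i \lambda_i p$ gives $0 = \sum_i \lambda_i \overline v_i$, and conversely, which is the first equivalence. For the second equivalence I would exploit that each $\overline v_i$ is a strictly positive multiple of its normalization (here $\|\overline v_i\| > 0$ precisely because $p \notin S$): starting from $0 = \sum_i \mu_i \overline v_i$, I rescale each coefficient by $\|\overline v_i\|$ and renormalize so the coefficients sum to one, the normalizing constant being positive since at least one $\mu_i > 0$. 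Running the same argument backwards gives the converse, so $0 \in conv(\overline S) \Leftrightarrow 0 \in conv(S_0)$.

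For part (ii) the approach is a direct computation. Writing $c = \sum_j \alpha_j/\|\overline v_j\|$ so that $\beta_i = (\alpha_i/\|\overline v_i\|)/c$, I would use $\sum_i \beta_i = 1$ to get $p - p_\varepsilon = \sum_i \beta_i(p - v_i) = -\sum_i \beta_i \overline v_i$, and then observe $\sum_i \beta_i \overline v_i = \tfrac{1}{c}\sum_i \alpha_i \overline v_i/\|\overline v_i\| = \tfrac{1}{c}\widehat p_\varepsilon$, whence $\|p - p_\varepsilon\| = \|\widehat p_\varepsilon\|/c \le \varepsilon/c$. It then remains to show $c \ge 1/R$, which follows because $\|\overline v_j\| = \|v_j - p\| \le R$ forces $1/\|\overline v_j\| \ge 1/R$, so $c \ge \tfrac{1}{R}\sum_j \alpha_j = \tfrac{1}{R}$. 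Combining these gives $\|p - p_\varepsilon\| \le \varepsilon R$.

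Part (iii) is where the real work lies. With $p = 0$ the points of $S_0$ are $s_i = v_i/\|v_i\|$ and those of $S_R$ are $w_i = v_i/R$, so $w_i = (\|v_i\|/R)\,s_i$ is a positive multiple of $s_i$ with factor in $(0,1]$. First I would rewrite the witness condition for $S_0$ in inner-product form: using the characterization (\ref{eq3}) (equivalently (\ref{eq2})) with $p=0$, the witness $p'$ satisfies $(p')^T s_i > \tfrac{1}{2}\|p'\|^2 > 0$ for every $i$, which immediately shows every $s_i$, hence every $w_i$, lies in the open hemisphere $\{x : (p')^T x > 0\}$ — the first assertion. Setting $t_i = (p')^T w_i/\|p'\|^2 > 0$, the projection of $w_i$ onto the line through $0$ and $p'$ is $w_i' = t_i p'$, at distance $t_i\|p'\|$ from the origin; thus the closest projection $\widehat w'$ corresponds to $\widehat t = \min_i t_i$, with $\widehat w' = \widehat t\, p'$. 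This identification is robust even if one clips to the segment $0p'$, since clipping only increases a projection's distance to the origin, so the minimizer is unchanged and in all cases $\widehat t \le t_i$ for every $i$.

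The crux is then to verify that the bisector $H$ of $0\widehat w'$ separates $0$ from $conv(S_R)$. Since $\widehat w' = \widehat t\, p'$ is parallel to $p'$, the hyperplane is $H = \{x : (p')^T x = \tfrac{1}{2}\widehat t\,\|p'\|^2\}$. The origin yields $0 < \tfrac{1}{2}\widehat t\,\|p'\|^2$, while each generator gives $(p')^T w_i = t_i\|p'\|^2 \ge \widehat t\,\|p'\|^2 > \tfrac{1}{2}\widehat t\,\|p'\|^2$; as a linear functional attains its minimum over $conv(S_R)$ at a generator, every point of $conv(S_R)$ lies strictly on the far side, so $H$ separates. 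Finally, because $conv(S) = R\cdot conv(S_R)$, replacing the offset $\tfrac{1}{2}\widehat t\,\|p'\|^2$ by $\tfrac{R}{2}\widehat t\,\|p'\|^2$ (equivalently scaling $H$) produces a hyperplane separating $0$ from $conv(S)$. I expect the main obstacle to be pinning down the correct offset: the original witness hyperplane $H_0$ need not separate $0$ from the shorter points $w_i$, and the whole point of the construction is to slide the hyperplane inward to the bisector of $0\widehat w'$, the decisive inequality being $\widehat t \le t_i$ for all $i$.
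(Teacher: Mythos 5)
Your proposal is correct and follows essentially the same route as the paper in all three parts: the translation-plus-coefficient-rescaling argument for (i), the same normalization computation for (ii) (your constant $c$ plays the role of the paper's $\sum_j \alpha_j R/\Vert \overline v_j \Vert$), and the hemisphere/closest-projection construction for (iii), where your explicit inner-product verification actually fills in details the paper only asserts informally. The one blemish is your parenthetical claim that clipping the projections to the segment $0p'$ ``only increases'' their distance to the origin --- clipping replaces $t_i$ by $\min(t_i,1)$ and so can only decrease that distance --- but this slip is harmless, since the parameter of the chosen point $\widehat w'$ is still $\le t_i$ for every $i$, which is all your separation inequality needs.
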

	
	\begin{proof}  (i):  Suppose $p =\sum_{i=1}^n \alpha_i v_i$, $\sum_{i=1}^n \alpha_i =1$, $\alpha_i \geq 0$. Writing  $p=\sum_{i=1}^n \alpha_i p$, we get
		\begin{equation} \label{eqs0}
		0 =  \sum_{i=1}^n \alpha_i (v_i -p) = \sum_{i=1}^n \alpha_i \overline v_i \in conv( \overline S) .
		\end{equation}
		Since $p \not = v_i$, $\overline v_i \not =0$.  We can thus rewrite the equation in (\ref{eqs0}) as
		\begin{equation}
		\sum_{i=1}^n \alpha_i\Vert \overline v_i \Vert  \frac{\overline v_i}{\Vert \overline v_i \Vert}=0.
		\end{equation}
		Dividing both sides by $\sum_{j=1}^n \alpha_j \Vert \overline v_j \Vert$, we get $0 \in conv(S_0)$. We have thus proved one direction of the implications in (i). The other direction follows analogously.
		
		(ii):   Multiplying (\ref{eqpep}) by $R$ we get
		\begin{equation} \label{eqwork}
		\Vert \sum_{i=1}^n \frac{ \alpha_i R}{\Vert \overline v_i \Vert}  \overline v_i \Vert  \leq R \varepsilon.
		\end{equation}
		Dividing each side  of (\ref{eqwork}) by  $\sum_{j=1}^n \alpha_j R/\Vert \overline v_j \Vert$, and from the definition of the $\beta_i$'s in (\ref{eqbet}) we get,
		\begin{equation} \label{eq12}
		\Vert \sum_{i=1}^n \beta_i \overline v_i \Vert = \Vert p - \sum_{i=1}^n \beta_i v_i \Vert  \leq \varepsilon R / \sum_{i=1}^n \frac{ \alpha_i R} { \Vert \overline v_i \Vert}.
		\end{equation}
		From the definition of $R$, $R/\Vert \overline v_i \Vert \geq 1$ so that we have
		\begin{equation} \label{eq13}
		\sum_{j=1}^n  \frac {\alpha_j R} {\Vert \overline v_j \Vert} \geq \sum_{j=1}^n \alpha_j =1.
		\end{equation}
		Using (\ref{eq13}) in (\ref{eq12}), the proof of (ii) follows.
		
		(iii):  Since $p'$ is a $0$-witness, the hyperplane $H_0 = \{x: p'^T x = 0.5 \Vert p' \Vert \}$  separates $0$ from $conv(S_0)$. Thus one of the two  hemisphere whose base is parallel to $H_0$ contains all of $S_0$.
		While $H_0$ may not separate $0$ from $conv(S_R)$, the hemisphere that contains $S_0$ must also contain $S_R$. Thus the projection of $w_i=v_i/R$ onto the line segment $0p'$ and its extension to a line, strictly  lies in the hemisphere containing $S_R$. Then the projection $w_i$ that is closest to the origin gives rise to a  separating hyperplane $H$ (see Figure 2).
	\end{proof}

	\begin{figure}
		\subfloat[$conv(S_R)$ and witness $p'_R$]
		{
%			\adjustbox{valign=c}	
%			{
				\begin{tikzpicture}[scale=0.60]\label{unscal}
				
				%\draw (0, 10) rectangle (10, 0);	

				%\draw (0, 10) rectangle (10, 0);	
				
				%	\begin{scope}[black]
				%	\draw (0.0,0.0) circle (7.0);
				%	%\draw (7.0,0.0) circle (7.0);
				%	%\draw (0.0,0.0) circle (4.48);
				%	\end{scope}
				
				%\draw (0.0,0.0) -- (.7*8, .7*6) -- (0,-4.0) -- cycle;
				%\filldraw (.7*8, .7*6) circle (2pt);
				%\filldraw (.33*.7*8, .33*.7*6)+(0,-.67*4) circle (2pt);
				%\draw (0,0) -- (7,0);
				%\draw (0,-4) -- (0,0) node[pos=0.5, left] {$\delta$};
				%	\draw (-6.65, -2+\stp) --(6.65,-2+\stp) node[pos=.07, above] {$H_0$};
				%	\draw (0,-4+\stp)--(0,0+\stp);
				
				\draw (1,-3+\stp) node[right] {$ \frac{v_2}{R}$};
				%	\draw (0,-3+\stp)--(1,-3+\stp);
				\draw (-2.3,-1.5+\stp) node[left] {$\frac{v_4}{R}$};
				%	\draw (0,-1.2+\stp)--(-2.3,-1.2+\stp);
				\draw (1,-3+\stp)--(-4.94975, -4.94975+\stp);
				
				\draw (-2.3,-1.5+\stp)--(-4.94975, -4.94975+\stp) node[pos=1.0, below] {$\frac{v_3}{R}$};
				\filldraw (-4.94975, -4.94975+\stp)circle (2pt);
				\draw (-5., -.6+\stp) --(5.,-.6+\stp) node[pos=.1, above] {$H$};
				\draw (2.5,-0.5+\stp) node[below right=0.1cm] {$\frac{v_1}{R}$};
				\draw [dash dot](2.5,-1+\stp)--(0,-1+\stp)node[left] {$\widehat{w'}$};
				\filldraw (0,-1+\stp)circle (2pt);
				\draw (-2.3,-1.5+\stp)--(2.5,-1+\stp);
				\filldraw (2.5,-1+\stp)circle (2pt);
				\draw (2.5,-1+\stp)--(1,-3+\stp);
				%	\draw (0,-1.6+\stp)--(3.2,-1.6+\stp);
				\filldraw (1,-3+\stp) circle (2pt);
				%	\draw (0,0) -- (2.21359, -6.64078) node[pos=1.0, below] {$\frac{v_2}{\Vert v_2 \Vert}$};
				%	\filldraw  (2.21359, -6.64078) circle (2pt);
				\filldraw (-2.3,-1.5+\stp) circle (2pt);
				%	\draw (0,0) -- (-6.20609, -3.23796) node[pos=1.0, left] {$\frac{v_4}{\Vert v_4 \Vert}$};
				\filldraw  (0, -4+\stp) circle (2pt);
				\draw (0,-4+\stp) node[below] {$p'$};
				\draw (0, -4+\stp)--(0,+\stp);
				%	\filldraw  (-6.20609, -3.23796) circle (2pt);
				%	\filldraw (3.2,-1.6+\stp) circle (2pt);
				%	\draw (0,0)--(6.26099, -3.1305) node[pos=1.0, right] {$\frac{v_1}{\Vert v_1 \Vert}$};
				%	\draw (0,0)-- (-4.94975, -4.94975) node[pos=1.0, below] {$\frac{v_3}{R}=\frac{v_3}{\Vert v_3 \Vert}$};
				%	\filldraw (-4.94975, -4.94975) circle (2pt);
				%	\filldraw (6.26099, -3.1305) circle (2pt);
				%\draw (0,0) -- (1.848, -1.294) node[pos=0.5, below] {$\delta'$};
				%\draw (0,0) -- (1.15,-2.6) node[pos=0.5, right] {$\delta'$};
				%\draw (1.15,-2.6) node[below] {$p''$};
				%\filldraw (1.15,-2.6) circle (2pt);
				\draw (0,0+\stp) node[above] {$0$};
				%\draw (7,0) node[right];
				%\draw (7,0) node[right] {$v'$};
				%\draw (5.6, 4.2) node[right] {$v$};
				%	\draw (0,-4+\stp) node[below] {$p'$};
				%\draw (1.848, -1.294) node[right] {$p''$};
				%\draw (0,-4) -- (1.848, -1.294) node[pos=0.5, right] {$\mu$};
				%\draw (5,-.3) node[right] {$\theta$};
				%\draw (0,-4) node[below] {$p'$};
				%\draw (14,0) node[right]{$C'$};
				%\draw (-7,0) node[left]{$C$};
				%\draw (-4.48,0) node[left]{$C''$};
				\filldraw (0,0+\stp) circle (2pt);
				%\filldraw (7,0) circle (2pt);
				%	\filldraw (0,-4+\stp) circle (2pt);
				%	\filldraw (0,-2+\stp) circle (2pt);

				%\filldraw (-2,7) circle (1pt) node[left] {$p$};
				
				\end{tikzpicture}
%			}
		}
		%		\begin{center}
		\subfloat[$conv(S_0)$ and witness $p'$]
		{
%			\adjustbox{valign=c}	
%			{
				\begin{tikzpicture}[scale=0.45]\label{scal}
				%\draw (0, 10) rectangle (10, 0);	

				%\draw (0, 10) rectangle (10, 0);	
				
				\begin{scope}[black]
				\draw (0.0,0.0) circle (7.0);
				%\draw (7.0,0.0) circle (7.0);
				%\draw (0.0,0.0) circle (4.48);
				\end{scope}
				
				%\draw (0.0,0.0) -- (.7*8, .7*6) -- (0,-4.0) -- cycle;
				%\filldraw (.7*8, .7*6) circle (2pt);
				%\filldraw (.33*.7*8, .33*.7*6)+(0,-.67*4) circle (2pt);
				%\draw (0,0) -- (7,0);
				%\draw (0,-4) -- (0,0) node[pos=0.5, left] {$\delta$};
				\draw (-6.65, -2) --(6.65,-2) node[pos=.07, above] {$H_0$};
				%		\draw (0,-4)--(0,0);
				\draw (0,0)--(1,-3) node[right] {$ \frac{v_2}{R}$};
				%		\draw (0,-3)--(1,-3);
				\draw (0,0)--(-2.3,-1.2) node[left] {$\frac{v_4}{R}$};
				%		\draw (0,-1.2)--(-2.3,-1.2);
				%		\draw (-7., -.6) --(7.,-.6) node[pos=.1, above] {$H$};
				\draw (0,0)--(2.07,-1) node[right] {$\frac{v_1}{R}$};
				\filldraw (2.07,-1) circle (2pt);
				%		\draw (0,-1)--(2.07,-1);
				\filldraw (1,-3) circle (2pt);
				\draw (0,0) -- (2.21359, -6.64078) node[pos=1.0, below] {$\frac{v_2}{\Vert v_2 \Vert}$};
				\filldraw  (2.21359, -6.64078) circle (2pt);
				\filldraw (-2.3,-1.2) circle (2pt);
				\draw (0,0) -- (-6.20609, -3.23796) node[pos=1.0, left] {$\frac{v_4}{\Vert v_4 \Vert}$};
				%		\draw [dash dot] (-6.20609, -3.23796)--(0, -3.23796) node[right] {$\frac{v_4}{\Vert v_4 \Vert}$}
				%		\filldraw  (0, -1.2) circle (2pt);
				%		\draw (0, -1.2) node[right] {$\widehat w'_4$};
				\filldraw  (-6.20609, -3.23796) circle (2pt);
				
				\draw (0,0)--(6.26099, -3.1305) node[pos=1.0, right] {$\frac{v_1}{\Vert v_1 \Vert}$};
				\draw (0,0)-- (-4.94975, -4.94975) node[pos=1.0, below] {$\frac{v_3}{R}=\frac{v_3}{\Vert v_3 \Vert}$};
				\filldraw (-4.94975, -4.94975) circle (2pt);
				\filldraw (6.26099, -3.1305) circle (2pt);
				%\draw (0,0) -- (1.848, -1.294) node[pos=0.5, below] {$\delta'$};
				%\draw (0,0) -- (1.15,-2.6) node[pos=0.5, right] {$\delta'$};
				%\draw (1.15,-2.6) node[below] {$p''$};
				%\filldraw (1.15,-2.6) circle (2pt);
				\draw (0,0) node[above] {$0$};
				%\draw (7,0) node[right];
				%\draw (7,0) node[right] {$v'$};
				%\draw (5.6, 4.2) node[right] {$v$};
				\draw (0,-4) node[below] {$p'$};
				\draw (0.0,0.0)-- (0,-4) ;
				%\draw (1.848, -1.294) node[right] {$p''$};
				%\draw (0,-4) -- (1.848, -1.294) node[pos=0.5, right] {$\mu$};
				%\draw (5,-.3) node[right] {$\theta$};
				%\draw (0,-4) node[below] {$p'$};
				%\draw (14,0) node[right]{$C'$};
				%\draw (-7,0) node[left]{$C$};
				%\draw (-4.48,0) node[left]{$C''$};
				\filldraw (0,0) circle (2pt);
				%\filldraw (7,0) circle (2pt);
				\filldraw (0,-4) circle (2pt);
				\filldraw (0,-2) circle (2pt);

				%\filldraw (-2,7) circle (1pt) node[left] {$p$};
				\end{tikzpicture}
%			}

		}
		%		\end{center}
		
		\begin{center}
			\caption{ $S=\{v_i: i=1,2,3,4\}$ (not drawn), $S_0=\{v_i/\Vert v_i \Vert, i=1,2,3,4\}$,
				$R= \Vert v_3 \Vert$, $S_R=\{v_i/R, i=1,2,3,4\}$.
				The point $p' \in conv(S_0)$ is a witness. In Figure \ref{unscal}, all vertices are scaled by a constant(the maximum distance between query point and vertices). The orthogonal bisecting hyperplane of $0 \widehat{w'}$, $H$, separates $0$ from $conv(S_R)$.  In Figure \ref{scal}, all vertices are scaled onto a unit sphere.The orthogonal bisecting hyperplane of $0p'$, $H_0$, separates $0$ from $conv(S_0)$. $S_R$ lies on the same hemisphere as $S_0$. }
			%			The points of $S_R$ are projected onto $Op'$, then the orthogonal bisector hyperplane to the shortest line segment, i.e. $O\widehat w'_4$, denoted by $H$, separates $O$ from $conv(S_R)$.}
		\end{center}
		\label{fig:Fig1}
	\end{figure}

	\section{Spherical Triangle Algorithm and its Complexity} \label{Sec4}
	
	Recall that we define the  Spherical-TA by converting a CHM into a Spherical-CHM and applying the TA.	From now on we consider CHM where $p=0$ and  $S= \{v_i: i =1, \dots, n\} \subset \mathbb{R}^m$, where $\Vert v_i \Vert =1$, for all $i=1, \dots, n$, thus a Spherical-CHM.    Consider the TA for Spherical-CHM:
	%
	%\begin{center}
	%	\begin{tikzpicture}
	%	%\begin{center}
	%	\node [mybox] (box){%
	%		\begin{minipage}{0.8\textwidth}
	%		{\bf  Triangle Algorithm for Spherical-CHM ($S$, $p=0$, $\varepsilon \in (0,1)$)}\
	%		%\vspace{.2cm}
	%		
	%		%\begin{itemize}
	%		%\item
	%		{\bf Step 0.} Set $p'= v_1$.
	%		
	%		%\item
	%		{\bf Step 1.} If $\Vert p'\Vert \leq \varepsilon$, then output $p'$ as an $\varepsilon$-approximate solution, stop.
	%		
	%		%\item
	%		{\bf Step 2.}  If there is a strict pivot $v_j \in S$, set $p' \leftarrow Nearest(0; p'v_j )$. Goto Step 1.
	%		
	%		%\item
	%		{\bf Step 3.}  Output $p'$ as a witness. Stop.
	%		%\end{itemize}
	%		\end{minipage}};
	%	\end{tikzpicture}
	%\end{center}
	%

	\subsection{Algorithm Description}
	\begin{algorithm}[H]
		\caption{Spherical-TA ($S^r=\{v_1^r,...,v^r_n\}$, $p^r$, $\varepsilon \in (0,1)$)}
		\begin{algorithmic}[1]
			\scriptsize
			\STATE	{\bf Step 0.} Compute $S=\{v_1,...,v_n\}$ where $v_i=(v^r_i-p^r)/\|v^r_i-p^r\|$. Set $p=0$.
			%Scale all $v_i$ on to unit sphere. Set $p=0$.
			\STATE 	{\bf Step 1.}  Set $p'= v_1$.
			\STATE  {\bf Step 2.} If $\Vert p'\Vert \leq \varepsilon$, then output $p'$ as an $\varepsilon$-approximate solution, stop.
			\STATE 	{\bf Step 3.}  If there is a strict pivot $v_j \in S$, set $p' \leftarrow Nearest(0; p'v_j )$. Goto Step 2.
			\STATE 	{\bf Step 4.} Output $p'$ as a witness. Stop.	
		\end{algorithmic}
	\end{algorithm}

	%\end{document}
	In what follows we will derive the  worst-case complexity of Spherical-TA. The worst scenario occurs when in each iteration the iterate is not a witness, and the
	pivot is orthogonal  to the iterate (See Figure \ref{SecondCircle}).  Thus it suffices to analyze the complexity under the worst-case for each iteration.  These are formalized next and then used in the next section.
	
	\begin{lemma} \label{lem1} Given $p' \in conv(S)$,  let $v \in S$ be a strict pivot (see Figure \ref{FirstCircle}).   Let $p''= Nearest(0, p'v)$,
		$\delta=\Vert p' \Vert$,  $\delta' = \Vert p'' \Vert$, $\mu = \Vert p' - p'' \Vert$.   Let $\widehat v$ be a point of unit distance, orthogonal to $p'$ (drawn for convenience on  Figure  \ref{SecondCircle}).  Let $\widehat p''= Nearest (0, p' \widehat v)$,  $\widehat \delta' =\Vert \widehat p'' \Vert$, $\widehat \mu= \Vert p' - \widehat p'' \Vert$.   Then we have,
		
		\begin{equation} \label{eq1lem}
		\delta'^2 \leq  \frac{\delta^2}{1+ \delta^2}, \quad \widehat \delta'^2 = \frac{\delta^2}{1+ \delta^2}, \quad
		\widehat \mu ^2 =  \frac{\delta^4}{1+ \delta^2}  \geq \frac{\delta^4}{2}.
		\end{equation}
		In particular,
		
		\begin{equation} \label{eq3lem}
		\delta' \leq \widehat \delta', \quad \mu \geq \widehat \mu  \geq  \frac{\delta^2}{\sqrt{2}}.
		\end{equation}
	\end{lemma}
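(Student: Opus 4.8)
The plan is to reduce the entire statement to a one-parameter computation governed by the single number $a := p'^T v$, and then to recognize the orthogonal configuration as the extremal case. First I would translate the hypotheses into two algebraic facts. Since $p=0$ and $v$ is a strict pivot, the definition of a strict pivot (Definition \ref{strictpivot}) gives $\angle p'0v \ge \pi/2$, hence $a = p'^T v \le 0$; and since $\Vert v\Vert = 1$ while $p' \in conv(S)$ with $S$ on the unit sphere, we have $\delta = \Vert p'\Vert \le 1$, so $1+\delta^2 \le 2$. These are the only two inputs the estimates will use: $a \le 0$ drives the comparisons, and $\delta \le 1$ gives the final lower bound $\widehat\mu^2 \ge \delta^4/2$.

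Next I would dispatch the orthogonal case, which is an exact computation. Writing the segment as $(1-t)p' + t\widehat v$ and using $p'^T\widehat v = 0$, $\Vert\widehat v\Vert = 1$, the squared distance to the origin is $(1-t)^2\delta^2 + t^2$, minimized at $\widehat t = \delta^2/(1+\delta^2) \in [0,1]$. Substituting back gives $\widehat\delta'^2 = \delta^2/(1+\delta^2)$; and since $p' - \widehat p'' = \widehat t\,(p' - \widehat v)$ with $\Vert p' - \widehat v\Vert^2 = 1+\delta^2$, also $\widehat\mu^2 = \widehat t^2(1+\delta^2) = \delta^4/(1+\delta^2)$. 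The bound $\widehat\mu^2 \ge \delta^4/2$ is then immediate from $1+\delta^2 \le 2$, establishing the middle equality and the last two relations in the displayed equations.

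For the general strict pivot I would run the same minimization with $d := v - p'$: the unconstrained minimizer is $t^* = (\delta^2 - a)/\Vert d\Vert^2$ with $\Vert d\Vert^2 = 1 - 2a + \delta^2$, and I would verify $t^* \in [0,1]$ — both $\delta^2 - a \ge 0$ and the inequality $\delta^2 - a \le 1 - 2a + \delta^2 \iff a \le 1$ follow from $a \le 0$ — so that $Nearest(0,p'v)$ genuinely is this perpendicular foot rather than an endpoint. Using the identity $\delta'^2 = \delta^2 - (p'^T d)^2/\Vert d\Vert^2$ and $\mu^2 = t^{*2}\Vert d\Vert^2$, I would obtain the clean closed forms $\delta'^2 = (\delta^2 - a^2)/(1 - 2a + \delta^2)$ and $\mu^2 = (\delta^2 - a)^2/(1 - 2a + \delta^2)$.

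Finally I would compare these to the orthogonal values by clearing the (positive) denominators. After expansion, $\delta'^2 \le \delta^2/(1+\delta^2)$ reduces to $a\,[\,2\delta^2 - a(1+\delta^2)\,] \le 0$, and $\mu^2 \ge \delta^4/(1+\delta^2)$ reduces to $a\,[\,a(1+\delta^2) - 2\delta^2\,] \ge 0$; these are the same factored quantity up to sign, and in each the two factors have opposite (resp.\ equal) signs precisely because $a \le 0$ while $\delta^2 \ge 0$, with equality exactly at $a = 0$. This yields $\delta' \le \widehat\delta'$ and $\mu \ge \widehat\mu$, confirming that the orthogonal pivot is the worst case. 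I expect the only real bookkeeping to be these two polynomial expansions; the conceptual content is simply that $\delta'^2$ and $\mu^2$ are monotone in $a$ on $a \le 0$, extremized at the orthogonal configuration, so the obstacle is bookkeeping rather than ideas.
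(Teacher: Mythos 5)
Your proof is correct, and it reaches the same conclusion as the paper's by a noticeably more algebraic route. The paper argues synthetically: it asserts $\delta'^2 \le \delta^2/(1+\delta^2)$ directly from the obtuseness of $\angle p'0v$, gets the orthogonal case from the similar triangles $\triangle p'\widehat v 0$ and $\triangle 0 \widehat p'' \widehat v$ (giving $\widehat\mu/\delta = \widehat\delta'/1$), and then — the one step where the two arguments genuinely diverge — obtains $\mu \ge \widehat\mu$ essentially for free via Pythagoras, $\mu^2 = \delta^2 - \delta'^2 \ge \delta^2 - \delta^2/(1+\delta^2) = \widehat\mu^2$, so it never needs a second comparison. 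You instead derive closed forms $\delta'^2 = (\delta^2-a^2)/(1-2a+\delta^2)$ and $\mu^2 = (\delta^2-a)^2/(1-2a+\delta^2)$ in the single parameter $a = p'^Tv \le 0$ and clear denominators twice; this costs an extra polynomial expansion but buys rigor the paper leaves implicit: you verify that the unconstrained minimizer satisfies $t^* \in [0,1]$, i.e.\ that $Nearest(0,p'v)$ really is the perpendicular foot rather than an endpoint of the segment — a fact that the paper's appeal to Pythagoras (and its obtuse-angle assertion) silently requires, and which is exactly where the hypothesis $a \le 0$ enters. Both proofs need $\delta \le 1$ (hence $1+\delta^2 \le 2$) for the final bound $\widehat\mu^2 \ge \delta^4/2$; the paper calls this obvious, while you supply the reason ($p'$ is a convex combination of unit vectors). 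In short: identical mathematical content — the orthogonal pivot is the extremal configuration — with a shorter synthetic proof in the paper and a more self-contained computational one in your proposal.
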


	\begin{proof}   By definition of strict pivot, the angle  $\angle p'0v$ is  which implies $\delta'^2/\delta^2 \leq \|ov\|^2/\|vp'\|^2$. We have $\|vp'\|^2 \geq \|op'\|^2+\|ov\|^2=\|op'\|^2+1$ which implies the first inequality  in (\ref{eq1lem}). The equality in (\ref{eq1lem}) holds because $\angle  p'0 \hat{v}$ is a right angle.
		From the similarity of the triangles $\triangle p' \widehat v 0$ and $\triangle 0 \widehat p'' \widehat v$ in Figure \ref{SecondCircle} we may write ${\widehat \mu}/{\delta}= {\widehat \delta'}/{1}$. Squaring and substituting for $\widehat \delta'^2$, we get the expression for $\widehat \mu^2$ in (\ref{eq1lem}). The lower bound is obvious.  The first and last inequalities in (\ref{eq3lem}) follow from (\ref{eq1lem}). The second inequality follows from (\ref{eq1lem}) and from,
		\begin{equation*}
		%\begin{align*}
		\mu^2=  \delta^2 - {\delta'}^2 \geq \delta^2 -  \frac{\delta^2}{1+ \delta^2} = \widehat \mu^2. \qedhere
		% \end{align*}
		\end{equation*}
	\end{proof}

	\begin{thm} \label{thm7}  
		For $k \geq 0$, let $\delta_k = \Vert p_k \Vert$, where $p_k$ is the sequence of iterates of the TA, $p_0=v_1$ and none of the iterates is a witness. Let $\widehat{\delta}_0 = \delta_0$ and define
		\begin{equation} \label{rec}
		\widehat \delta^2_{k+1}= \frac{\widehat \delta^2_k}{1+ \widehat \delta^2_k}, \quad k \geq 0.
		\end{equation}
		Then for all $k \geq 1$,
		\begin{equation}
		\delta_k \leq \widehat \delta_k.
		\end{equation}
	\end{thm}
	
	%\end{document}
	
	\begin{proof} We prove this by induction on $k$. From Lemma \ref{lem1} the inequality is true for $k=1$.  Assume true for $k$.  The function $g(t)= t/(1+t)$ is monotonically increasing  on $(0, \infty)$.  From the relationship between $\delta_{k+1}$ and $\delta_k$ in Lemma \ref{lem1}, together with monotonicity of $g(t)$, we may write
		\begin{equation}
		\delta^2_{k+1} \leq \frac{\delta^2_k}{1+\delta^2_k} \leq \frac{\widehat \delta^2_k}{1+ \widehat \delta^2_k} = \widehat \delta^2_{k+1}.
		\end{equation}
	\end{proof}

	\begin{figure}[htpb]
		\centering
		\begin{tikzpicture}[scale=0.45]
		%\draw (0, 10) rectangle (10, 0);	

		%\draw (0, 10) rectangle (10, 0);	
		
		\begin{scope}[black]
		\draw (0.0,0.0) circle (7.0);
		%\draw (7.0,0.0) circle (7.0);
		%\draw (0.0,0.0) circle (4.48);
		\end{scope}
		
		\draw (0.0,0.0) -- (.7*8, .7*6) -- (0,-4.0) -- cycle;
		\filldraw (.7*8, .7*6) circle (2pt);
		\filldraw (.33*.7*8, .33*.7*6)+(0,-.67*4) circle (2pt);
		%\draw (0,0) -- (7,0);
		\draw (0,-4) -- (0,0) node[pos=0.5, left] {$\delta$};
		\draw (0,0) -- (1.848, -1.294) node[pos=0.5, below] {$\delta'$};
		%\draw (0,0) -- (1.15,-2.6) node[pos=0.5, right] {$\delta'$};
		%\draw (1.15,-2.6) node[below] {$p''$};
		%\filldraw (1.15,-2.6) circle (2pt);
		\draw (0,0) node[left] {$o$};
		%\draw (7,0) node[right];
		%\draw (7,0) node[right] {$v'$};
		\draw (2.8, 2.1) node[above] {$1$};
		\draw (5.6, 4.2) node[right] {$v$};
		\draw (0,-4) node[below] {$p'$};
		\draw (1.848, -1.294) node[right] {$p''$};
		\draw (0,-4) -- (1.848, -1.294) node[pos=0.5, right] {$\mu$};
		%\draw (5,-.3) node[right] {$\theta$};
		%\draw (0,-4) node[below] {$p'$};
		%\draw (14,0) node[right]{$C'$};
		%\draw (-7,0) node[left]{$C$};
		%\draw (-4.48,0) node[left]{$C''$};
		\filldraw (0,0) circle (2pt);
		%\filldraw (7,0) circle (2pt);
		\filldraw (0,-4) circle (2pt);

		%\filldraw (-2,7) circle (1pt) node[left] {$p$};
		\end{tikzpicture}
		\begin{center}
			\caption{An iteration of the TA at $p'$ with a strict pivot $v$: $p''$ is projection of $0$ on $p'v$, $\delta=\Vert p' \Vert$, $\delta'=\Vert p'' \Vert $,  $\mu= \Vert p' - p'' \Vert$.} \label{FirstCircle}
		\end{center}
	\end{figure}

	\begin{figure}[htpb]
		\centering
		\begin{tikzpicture}[scale=0.45]
		%\draw (0, 10) rectangle (10, 0);	

		%\draw (0, 10) rectangle (10, 0);	
		
		\begin{scope}[black]
		\draw (0.0,0.0) circle (7.0);
		%\draw (7.0,0.0) circle (7.0);
		%\draw (0.0,0.0) circle (4.48);
		\end{scope}
		
		%\draw (0.0,0.0) -- (.7*8, .7*6) -- (0,-4.0) -- cycle;
		%\filldraw (.7*8, .7*6) circle (2pt);
		%\filldraw (.33*.7*8, .33*.7*6)+(0,-.67*4) circle (2pt);
		\draw (0,0) -- (7,0);
		\draw (0,-4) -- (0,0) node[pos=0.5, left] {$\delta$};
		\draw (0,-4) -- (7, 0);
		%\draw (0,0) -- (1.15,-2.6) node[pos=0.5, right] {$\delta'$};
		%\draw (1.15,-2.6) node[below] {$p''$};
		%\filldraw (1.15,-2.6) circle (2pt);
		\draw (0,0) node[left] {$o$};
		\draw (7,0) node[right] {$\widehat v$};
		%\draw (7,0) node[right] {$v'$};
		%\draw (5.6, 4.2) node[right] {$v$};
		\draw (0,-4) node[below] {$p'$};
		\draw (1.722, -3.016)  node[below] {$\widehat p''$};
		\draw (0,0) -- (1.722, -3.016)  node[pos=0.5, right] {$\widehat \delta'$};
		\filldraw (1.722, -3.016) circle (2pt);
		\draw (0,-4) -- (1.722, -3.016)   node[pos=0.5, above] {$ \widehat \mu$};
		%\draw (5,-.3) node[right] {$\theta$};
		%\draw (0,-4) node[below] {$p'$};
		%\draw (14,0) node[right]{$C'$};
		%\draw (-7,0) node[left]{$C$};
		%\draw (-4.48,0) node[left]{$C''$};
		\filldraw (0,0) circle (2pt);
		\filldraw (7,0) circle (2pt);
		\filldraw (0,-4) circle (2pt);

		%\filldraw (-2,7) circle (1pt) node[left] {$p$};
		\end{tikzpicture}
		\begin{center}
			\caption{An iteration of triangle algorithm at $p'$ with least reduction if a strict pivot $\widehat v$ is orthogonal to $p'$: $\widehat p''$ projection of $0$ on $p'\widehat v$, $\delta=\Vert p' \Vert$, $\widehat \delta'=\Vert \widehat p'' \Vert $,  $\widehat \mu= \Vert p'-\widehat p'' \Vert$.} \label{SecondCircle}
		\end{center}
	\end{figure}
	
	\begin{thm} \label{thm8}
		  Consider Spherical-CHM. The TA  terminates in $O(1/\varepsilon^2)$ iterations with $p_\varepsilon \in conv(S)$, either a witness or $\Vert p_\varepsilon \Vert \leq \varepsilon $.
	\end{thm}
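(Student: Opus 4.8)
The plan is to turn the per-iteration multiplicative contraction supplied by Lemma \ref{lem1} into an additive recurrence for a reciprocal quantity and then telescope. The only analytic input needed is the worst-case estimate $\delta_{k+1}^2 \le \delta_k^2/(1+\delta_k^2)$, valid whenever iteration $k$ uses a strict pivot. Before counting, I would dispose of the alternative termination mode: if at some iterate no strict pivot exists, the algorithm exits at Step 4 and returns that point as a witness, which is consistent with the Strict Distance Duality (Theorem \ref{thm4}) and the Distance Duality (Theorem \ref{thm1}). Hence it is enough to bound the number of iterations in the regime where strict pivots are repeatedly found and the contraction holds.

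The core step is to set $u_k = 1/\delta_k^2$ and rewrite the contraction as
\[
u_{k+1} = \frac{1}{\delta_{k+1}^2} \ge \frac{1+\delta_k^2}{\delta_k^2} = u_k + 1 ,
\]
which is linear. Since the algorithm initializes at $p_0 = v_1$ with $\Vert v_1 \Vert = 1$, we have $\delta_0 = 1$ and $u_0 = 1$, so telescoping $u_{k+1} \ge u_k + 1$ gives $u_k \ge 1 + k$ and therefore
\[
\delta_k \le \frac{1}{\sqrt{1+k}} .
\]

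Finally I would solve for the first index at which Step 2 succeeds. The requirement $\delta_k \le \varepsilon$ is implied by $1/\sqrt{1+k} \le \varepsilon$, i.e. $k \ge 1/\varepsilon^2 - 1$; thus after at most $\lceil 1/\varepsilon^2 \rceil = O(1/\varepsilon^2)$ strict-pivot iterations we are guaranteed $\Vert p_k \Vert \le \varepsilon$. Combining the two exit routes, within $O(1/\varepsilon^2)$ iterations TA returns either a witness or a point $p_\varepsilon$ with $\Vert p_\varepsilon \Vert \le \varepsilon$, as claimed.

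I do not expect a genuine obstacle in this argument, because the geometric heavy lifting — identifying the orthogonal-pivot configuration as the least-reduction case and extracting the bound $\delta'^2 \le \delta^2/(1+\delta^2)$ — has already been carried out in Lemma \ref{lem1}. The only point that needs care is the bookkeeping between the two termination modes: the contraction estimate applies solely on steps that use a strict pivot, so the $O(1/\varepsilon^2)$ count must be read as a bound that stays valid even if a witness interrupts the decay earlier, and the final statement must be phrased disjunctively (witness \emph{or} $\varepsilon$-approximate solution).
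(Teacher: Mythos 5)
Your proof is correct and follows essentially the same route as the paper: both rest on the worst-case contraction $\delta_{k+1}^2 \le \delta_k^2/(1+\delta_k^2)$ from Lemma \ref{lem1} and solve the recursion to obtain $\delta_k^2 \le 1/(1+k)$, then choose $k = O(1/\varepsilon^2)$. The only cosmetic difference is that you telescope the reciprocal $u_k = 1/\delta_k^2$, whereas the paper verifies the closed form $\widehat\delta_N^2 = 1/(1+N)$ for the auxiliary worst-case sequence by induction; the two computations are interchangeable.
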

	%\end{document}
	\begin{proof}  Let $p_k$, $\delta_k$ and $\widehat \delta_k$  be as in the previous theorem. We claim for any natural number $N$,
		\begin{equation}
		\widehat \delta^2_N=  \frac{1}{1+N}.
		\end{equation}
		This is true for $N=1$. By the induction hypothesis and the recursive definition of $\widehat \delta_i$, in (\ref{rec}), we have,
		\begin{equation}
		\widehat \delta^2_{N+1} = \frac{1}{2+N}=\frac{1}{1+(N+1)}.
		\end{equation}
		In particular, if $N = \lceil {1}/{\varepsilon}\rceil$, we get
		\begin{equation} \label{eqkey1}
		\widehat \delta_N =  \frac{1}{\sqrt{1+N}} \leq  \frac{1}{\sqrt{1+ {1}/{\varepsilon}}}  =  \frac{ \sqrt{\varepsilon}}{\sqrt{1+ \varepsilon}} \leq  \sqrt{\varepsilon}.
		\end{equation}
		From Theorem \ref{thm7} $\delta_k \leq \widehat \delta_k$ for all $k \geq 1$. From this and (\ref{eqkey1}) if $0 \in conv(S)$,  in $O(1/\varepsilon)$ iterations TA computes $p_k$ such that $\delta_k \leq \sqrt{\varepsilon}$.  To complete the proof it suffices to replace $\sqrt{\varepsilon}$ with $\varepsilon$.
	\end{proof}
	
	\section{Improved complexity analysis for Spherical-TA}
		%A Faster Triangle Algorithm for Spherical-CHM}
	
	\begin{definition}  Given a Spherical-CHM, we say a point $p' \in conv(S)$ that is not a witness  and for which $\Vert p' \Vert > \varepsilon$, has the $\varepsilon$-property if   there exists is a pivot $v$ such that
		\begin{equation}
		\Vert p' - v \Vert \geq \sqrt{1+ \varepsilon}.
		\end{equation}
	\end{definition}
	
	As an example if the ball of radius $\sqrt{\varepsilon}$ is contained in $conv(S)$, then  Spherical-CHM has the $\varepsilon$-property everywhere outside of the ball of radius $\varepsilon$.  We now establish an improved complexity for Spherical-TA  with the $\varepsilon$-property.
	
	\begin{thm}  Consider a Spherical-CHM. If every iterate $p' \in conv(S)$ of the TA  that is not a witness and for which $\Vert p'\Vert > \varepsilon$ has the $\varepsilon$-property, then in $O(1/\varepsilon)$ iterations, either the TA computes a witness, or $p_\varepsilon \in conv(S)$ such that $\Vert p_\varepsilon \Vert \leq \varepsilon$.
	\end{thm}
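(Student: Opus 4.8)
The plan is to quantify, via Proposition~\ref{prop1}, exactly how much the squared norm $\delta^2=\Vert p'\Vert^2$ drops in one step when the pivot is the far pivot guaranteed by the $\varepsilon$-property, and then to solve the resulting scalar recursion. First I would fix a non-witness iterate $p'$ with $\delta=\Vert p'\Vert>\varepsilon$ and let $v$ be a pivot with $D:=\Vert p'-v\Vert^2\geq 1+\varepsilon$. Since $\Vert v\Vert=1$ and $\delta\le 1$ throughout (the start is $p_0=v_1$ and the norms only decrease), Proposition~\ref{prop1} applies and $p''=Nearest(0;p'v)$ lies in the interior of the segment. Writing $v^Tp'=(1+\delta^2-D)/2$, a direct computation of the foot of the perpendicular gives the closed form
\begin{equation}
\Vert p''\Vert^2=\delta^2-\frac{(D+\delta^2-1)^2}{4D}.
\end{equation}

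Next I would exploit monotonicity in $D$. Treating the right-hand side as a function $f(D)$, its derivative is $f'(D)=-\frac{D^2-(1-\delta^2)^2}{4D^2}$, which is negative for $D>1-\delta^2$; since $D\ge 1+\varepsilon>1\ge 1-\delta^2$, $f$ is decreasing there. Hence the worst case (largest residual) occurs at the smallest admissible $D$, namely $D=1+\varepsilon$, giving the one-step bound
\begin{equation}
\Vert p''\Vert^2\le \delta^2-\frac{(\delta^2+\varepsilon)^2}{4(1+\varepsilon)}.
\end{equation}
Writing $u=\delta^2$ and $h(u)=u-\frac{(u+\varepsilon)^2}{4(1+\varepsilon)}$, I would check that $h$ is increasing on $[0,1]$ (its derivative is at least $1/2$ there), so that --- exactly as in the monotonicity argument used for the $O(1/\varepsilon^2)$ theorem --- the iterates are dominated by the scalar sequence $\widehat u_0=1$, $\widehat u_{k+1}=h(\widehat u_k)$; that is, $u_k=\delta_k^2\le \widehat u_k$ for all $k$.

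The decisive step is solving this recursion without losing a logarithmic factor. The trick is the shift $w_k:=\widehat u_k+\varepsilon$, under which the cross terms cancel and the recursion collapses to the classical form
\begin{equation}
w_{k+1}=w_k-\frac{w_k^2}{4(1+\varepsilon)},\qquad w_0=1+\varepsilon.
\end{equation}
The standard telescoping estimate $1/w_{k+1}\ge 1/w_k+1/(4(1+\varepsilon))$ then yields $w_k\le 4(1+\varepsilon)/(k+4)$. To reach an $\varepsilon$-approximate solution I need $u_k\le \varepsilon^2$, equivalently $w_k\le \varepsilon(1+\varepsilon)$, which holds once $4/(k+4)\le \varepsilon$, i.e.\ after $k=\lceil 4/\varepsilon\rceil=O(1/\varepsilon)$ steps. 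Thus, as long as every non-witness iterate has the $\varepsilon$-property and TA uses the corresponding far pivot, either a witness appears and TA stops, or $\Vert p_k\Vert\le\varepsilon$ within $O(1/\varepsilon)$ iterations.

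The main obstacle I anticipate is conceptual rather than computational: recognizing that geometric decay of the form $\Vert p''\Vert^2\le \delta^2/(1+\varepsilon)$ --- which follows immediately by AM--GM from the one-step bound --- is too weak, since it only delivers $O(\varepsilon^{-1}\log\varepsilon^{-1})$; the quadratic term $(\delta^2+\varepsilon)^2$ must be retained and the shift $w=u+\varepsilon$ introduced to obtain the clean $O(1/\varepsilon)$ rate. A secondary point to treat carefully is the pivot mismatch --- the $\varepsilon$-property furnishes a pivot that need not be the strict pivot selected by the vanilla Step~3 --- which I would resolve by having the algorithm use the $\varepsilon$-property pivot whenever it is available (the strategy hinted at in the introduction), together with the observation that when $\delta^2\le\varepsilon$ this far pivot is automatically strict.
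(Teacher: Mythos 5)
Your proof is correct, and it takes a genuinely different route from the paper's. The paper argues in two phases: it first invokes the preceding $O(1/\varepsilon^2)$ analysis (the recursion $\widehat\delta_{k+1}^2=\delta_k^2/(1+\delta_k^2)$) to reach an iterate with $\Vert p_{k_0}\Vert\le\sqrt{\varepsilon}$, and then uses the $\varepsilon$-property purely geometrically: if $\Vert p'-v\Vert\ge\sqrt{1+\varepsilon}$ and $q$ is the point of the segment $vp'$ at unit distance from $v$, then $p''$ lies between $v$ and $q$, so the step length satisfies $\mu\ge\Vert p'-q\Vert\ge\sqrt{1+\varepsilon}-1\ge(\sqrt{2}-1)\varepsilon$, giving the additive decrease $\delta_{k+1}^2\le\delta_k^2-(\sqrt{2}-1)^2\varepsilon^2$; starting from $\delta_{k_0}^2\le\varepsilon$, only $O(1/\varepsilon)$ further steps are needed (the paper writes $O(1/\sqrt{\varepsilon})$ for the first phase, but the recursion actually gives $O(1/\varepsilon)$, which is all that is required). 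Your single-phase argument replaces both phases by one exact computation: the closed form $\Vert p''\Vert^2=\delta^2-(D+\delta^2-1)^2/(4D)$, monotonicity in $D$, and the shifted recursion $w_{k+1}=w_k-w_k^2/(4(1+\varepsilon))$ with $w=\delta^2+\varepsilon$, whose $1/k$ decay simultaneously reproduces the paper's first phase (for $\delta^2\gg\varepsilon$ the decrease is essentially $\delta^4/4$, matching Lemma~\ref{lem1}) and its second phase (for $\delta^2\le\varepsilon$ the decrease is $\Omega(\varepsilon^2)$). What your route buys: a unified recursion, an explicit iteration count $\lceil 4/\varepsilon\rceil$, and no need for the intermediate threshold $\sqrt{\varepsilon}$; what the paper's buys: it reuses the previous theorem verbatim, and the argument with $q$ needs no algebra beyond the right triangle. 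Both arguments share the same caveat, which you treat more carefully than the paper does: the claim implicitly assumes that at each iterate TA steps along the pivot furnished by the $\varepsilon$-property (the paper simply sets $v=v_k$ with $\Vert p'-v\Vert\ge\sqrt{1+\varepsilon}$ in its one-step claim), and your observations that this $v$ is always a genuine pivot, and is automatically a strict pivot once $\delta^2\le\varepsilon$, make that assumption legitimate.
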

	
	\begin{proof}  
		Note $\varepsilon\leq \sqrt{\varepsilon}$.
		If $0 \in conv(S)$ from Theorem \ref{thm8}, in $O(1/\varepsilon)$ iterations we get an iterate $p_{k_0}$ such that $\Vert p_{k_0} \Vert \leq \sqrt{\varepsilon}$.  
%		We claim that at any subsequent iterate $k \geq k_0$, as long as $p_k$ is not a witness and $\delta_k =\Vert p_k \Vert > \varepsilon$, the next iterate will decrease the gap sufficiently.
		If $\Vert p_{k_0}\Vert  \leq \varepsilon$, we are done. 
		Otherwise let $k=k_0$ and 
		we claim that $p_{k+1}$ will decrease the gap sufficiently.
		More precisely, we claim
		\begin{equation} \label{claimx}
		\delta_{k+1}^2 \leq \delta^2_k - (\sqrt{2}-1)^2 \varepsilon^2 \leq \varepsilon - (\sqrt{2}-1)^2 \varepsilon^2.
		\end{equation}
		To prove (\ref{claimx}), on the one hand we have
		\begin{equation}
		\delta_{k+1}^2 = \delta^2_k - \mu_k^2.
		\end{equation}
		Consider Figure \ref{ThirdCicle} and assume $p'=p_k$, $\delta=\delta_k$, $\delta'=\delta_{k+1}$, $\mu= \Vert p'- p''\Vert = \mu_k$, $v=v_k \in S$ satisfying $\Vert p' - v \Vert \geq \sqrt{1 + \varepsilon}$.  Let $q$ be the point on $vp'$, where $\Vert v - q \Vert =1$.
		Note that $p''$ must be closer to $v$ than to $q$. Thus,
		\begin{equation}
		\mu_k \geq  \Vert p' - q \Vert \geq \sqrt{1+\varepsilon} - 1 \geq (\sqrt{2}-1) \varepsilon \geq 0.4 \varepsilon.
		\end{equation}
		Also, since $\delta_k \leq \sqrt{\varepsilon}$, we have proved (\ref{claimx}). Hence the number of iterations $k \geq k_0$ to get $\delta^2_{k+1} \leq \varepsilon^2$ is  $O(1/\varepsilon)+1=O(1/\varepsilon)$.
	\end{proof}
	
	\begin{figure}[htpb]
		\centering
		\begin{tikzpicture}[scale=0.45]
		%\draw (0, 10) rectangle (10, 0);	

		%\draw (0, 10) rectangle (10, 0);	
		
		\begin{scope}[black]
		\draw (0.0,0.0) circle (7.0);
		%\draw (7.0,0.0) circle (7.0);
		%\draw (0.0,0.0) circle (4.48);
		\end{scope}
		\draw (6, 3.6) circle (7.0);
		\draw (0.0,0.0) -- (.7*8, .7*6) -- (0,-4.0) -- cycle;
		\filldraw (5.6, 4.2) circle (2pt);
		\filldraw (.33*.7*8, .33*.7*6)+(0,-.67*4) circle (2pt);
		%\draw (0,0) -- (7,0);
		\draw (0,-4) -- (0,0) node[pos=0.5, left] {$\delta$};
		\draw (0,0) -- (1.848, -1.294) node[pos=0.5, below] {$\delta'$};
		%\draw (0,0) -- (1.15,-2.6) node[pos=0.5, right] {$\delta'$};
		%\draw (1.15,-2.6) node[below] {$p''$};
		%\filldraw (1.15,-2.6) circle (2pt);
		\draw (0,0) node[left] {$0$};
		%\draw (7,0) node[right];
		%\draw (7,0) node[right] {$v'$};
		\draw (5.6, 4.2) node[right] {$v$};
		%\draw (6., 3.6) node[right] {$v$};
		\draw (0,-4) node[below] {$p'$};
		\draw (1.848, -1.294) node[right] {$p''$};
		\draw (0,-4) -- (1.848, -1.294) node[pos=0.5, right] {$\mu$};
		\draw (0,-4) -- (1.848, -1.294) node[pos=0.7, right] {$q$};
		\filldraw (.81*1.848, -.19*4-.81*1.294) circle (2pt);
		
		%\draw (5,-.3) node[right] {$\theta$};
		%\draw (0,-4) node[below] {$p'$};
		%\draw (14,0) node[right]{$C'$};
		%\draw (-7,0) node[left]{$C$};
		%\draw (-4.48,0) node[left]{$C''$};
		\filldraw (0,0) circle (2pt);
		%\filldraw (7,0) circle (2pt);
		\filldraw (0,-4) circle (2pt);

		%\filldraw (-2,7) circle (1pt) node[left] {$p$};
		\end{tikzpicture}
		\begin{center}
			\caption{At iterate $p'$ the pivot $v$ satisfies $\Vert p'- v \Vert \geq \sqrt{1+ \varepsilon}$. The circle of radius one centered at $v$ intersect $p'v$ at $q$ and  $\mu=\Vert p'- p''\Vert \geq \Vert p' - q \Vert \geq (\sqrt{2}-1) \varepsilon$.} \label{ThirdCicle}
		\end{center}
	\end{figure}
	
	\begin{definition} We say an iterate $p_k \in conv(S)$ is $\varepsilon$-reduced at an iterate $p_t \in conv(S)$, $t >k$, if
		\begin{equation}
		\Vert p_t \Vert^2 \leq \Vert p_k \Vert^2 - (0.4 \varepsilon)^2.
		\end{equation}
	\end{definition}
	
	The strategy we propose when we get an iterate
	$p_k$ that does not have $\varepsilon$-property and is not a witness, is to compute $p_t$, if possible, that $\varepsilon$-reduces $p_k$, and in the simplest way possible. Then restart the ordinary TA with $p_t$, checking if it in turn has the $\varepsilon$-property and so on.  Suppose $v^k$ is a strict pivot for $p_k$.   We compute the nearest point to $0$ on $p_kv^k$ to get $p_{k+1}$. Next, we compute a strict pivot $v^{k+1}$ for $p_{k+1}$. Let the {\it restricted} $\varepsilon$-approximate Spherical-CHM be the problem of  testing if $0 \in conv(\{p_k,v^{k}, v^{k+1} \})$.  At each iteration in solving the restricted problem we check if the corresponding iterate, say $p_t$, $\varepsilon$-reduces $p_k$. If so, we start from $p_t$.  Otherwise, we obtain a {\it relative} witness, say $p_t$. Next, we compute a strict pivot in $S$, say $v^{k+2}$ (if $p_t$ is not a witness with respect to $S$).  We then augment the restricted Spherical-CHM to testing if $0$ is in $conv(\{p_k, v^{k}, v^{k+1}, v^{k+2} \})$ and repeat the process. This process would stop either with a witness with respect to $S$,  or an iterate $p_t$ that $\varepsilon$-reduces $p_k$ and we return to the ordinary TA with $p_t$ as the current iterate.
	
	The worst-case complexity of such a composite iterate is unknown at this time. However,  considering the geometry of the points in $S$, we would expect that this complexity  depends on $n$ and $m$ and the relationship between them.  For $n >> m$ and when the pairwise distance between points in $S$ is reasonably large,
	 one would expect that the composite iterate will stop after a few iterations so that the overall number of iterations would remain to be $O(1/\varepsilon)$. However, the complexity of a composite iteration may exceed $O(n+m)$. The theoretical analysis of the composite iterate is nevertheless an interesting open problem. 
	In order to investigate this problem it may be useful to first construct difficult problems for the Spherical-TA with a fixed $m$, where $0\in conv(S)$
	  so that most points are on one semi-sphere, $\varepsilon$-property is not satisfied for a current iterate, and $n$ is as small as possible. When increasing $n$, at some point $\varepsilon$-property will be satisfied in the next iteration. If the pairwise distance between points is $\delta>0$, then there must be a relationship between $\varepsilon$, $\delta$, $m$ and the minimum number $n$ so that the $\varepsilon$-property will be satisfied in next iteration. We feel that understand such examples are necessary in the investigation of the complexity of the  composite iterative Spherical-TA.
%	  
%	  Furthermore, making $n$ to be as small as possible, 
%	 For instance,  examples where the $\varepsilon$-property is not satisfied at each iteration with minimum number of points $n$  and then increase the number of points $n$ while keeping the pairwise distance between points reasonably large. 
%	 
%	then increasing the number of points $n$ while keeping the pairwise distance between points in $S$ reasonably large. We hope to investigate such possible examples in the future.
%	For instance, constructing an example where the $\varepsilon$-property is not satisfied at an iterate.

	\section{Solving Strict Linear Feasibility as Spherical-CHM} \label{Str_LP}
	The following lemma connects strict LP feasibility to CHM and is a consequence of Gordan's Theorem, hence also provable via Farkas Lemma:
	
	\begin{lemma} Let $A$ be an $n \times m$ real matrix and $b \in \mathbb{R}^n$.  Then $Ax < b$ is feasible if and only if there is no feasible solution to the homogeneous CHM: $A^Ty=0$, $b^Ty+s=0$, $\sum_{i=1}^n y_i+s=1$, $y \geq 0$, $s \geq 0$. \qed
	\end{lemma}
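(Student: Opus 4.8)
The plan is to obtain this lemma as a direct instance of Gordan's theorem of the alternative, after first homogenizing the affine strict system $Ax < b$. First I would introduce an auxiliary scalar $t$, the augmented variable $z = (x,t) \in \mathbb{R}^{m+1}$, and the $(n+1)\times(m+1)$ matrix
\[
M = \begin{pmatrix} A & -b \\ 0^T & -1 \end{pmatrix},
\]
where $0^T$ is the zero row vector. The key observation is that the purely homogeneous strict system $Mz < 0$ is feasible if and only if $Ax < b$ is. Indeed, any solution of $Mz<0$ satisfies the last inequality $-t<0$, which forces $t>0$; then each $a_i^T x < t\,b_i$ divided by $t$ gives $a_i^T(x/t) < b_i$, so $x/t$ is feasible for $Ax<b$. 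Conversely, a solution $x_0$ of $Ax<b$ yields the solution $(x_0,1)$ of $Mz<0$. The sole purpose of the final row $(0^T,-1)$ is to pin down $t>0$; this is the step that needs care, since without it the homogeneous system could admit spurious solutions with $t\le 0$ that do not correspond to feasible points of the affine system.

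Second, I would apply Gordan's theorem to $M$: the strict system $Mz<0$ is feasible if and only if the alternative system $M^T w = 0$, $w\ge 0$, $w\ne 0$ has no solution. Writing $w=(y,s)$ with $y\in\mathbb{R}^n$ indexing the first $n$ rows of $M$ and $s$ the last row, a direct transcription of $M^T w = 0$ splits into the block coming from the $x$-columns, namely $A^T y = 0$, and the block coming from the $t$-column, namely $-b^T y - s = 0$, i.e. $b^T y + s = 0$. The accompanying sign and nontriviality constraints are $y\ge 0$, $s\ge 0$, and $w\ne 0$.

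Third, I would reconcile the nontriviality condition $w \ne 0$ with the normalization $\sum_{i=1}^n y_i + s = 1$ appearing in the statement. Since $y\ge 0$, $s\ge 0$ and $(y,s)\ne 0$, the quantity $\sum_{i=1}^n y_i + s$ is strictly positive, and because the constraints $A^T y = 0$ and $b^T y + s = 0$ are homogeneous, any nonzero nonnegative solution can be rescaled to meet $\sum_{i=1}^n y_i + s = 1$; conversely, a solution obeying the normalization is automatically nonzero. Chaining the three steps gives $Ax<b$ feasible $\Longleftrightarrow$ $Mz<0$ feasible $\Longleftrightarrow$ the alternative system is infeasible $\Longleftrightarrow$ the stated homogeneous CHM is infeasible, which is the claim.

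Finally, I would remark that the alternative system is literally a convex-hull-membership query in $\mathbb{R}^{m+1}$: it asks whether the origin lies in the convex hull of the points $\{(a_i,b_i)\}_{i=1}^n$ together with $(0,1)$, with $(y,s)$ the barycentric coefficients; this explains the name "homogeneous CHM." Since Gordan's theorem is itself a corollary of Farkas' lemma, the same conclusion can be reached from Farkas directly, which is why the paper attributes it to either result. The only genuinely delicate point in the whole argument is the homogenization in the first step, namely arranging the extra inequality so that it enforces $t>0$ and the equivalence between the affine and homogeneous strict systems is exact.
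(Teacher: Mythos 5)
Your proof is correct and takes exactly the route the paper intends: the paper asserts the lemma as an immediate consequence of Gordan's theorem with no details supplied, and your argument---homogenizing $Ax<b$ via the matrix $M=\begin{pmatrix} A & -b \\ 0^T & -1\end{pmatrix}$, invoking Gordan's alternative $M^Tw=0$, $w\geq 0$, $w\neq 0$, and rescaling the nonzero nonnegative solution to meet the normalization $\sum_{i=1}^n y_i + s = 1$---is precisely the omitted derivation, carried out correctly. Your closing remark also correctly identifies the alternative system as the membership query ``is $0$ in the convex hull of $\{(a_i,b_i)\}_{i=1}^n \cup \{(0,1)\}$,'' which is exactly the formulation the paper uses in its subsequent theorem on witnesses for strict LP feasibility.
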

	
	The next theorem shows if we have a witness for the homogenous CHM dual of strict linear feasibility, it solves the strict linear feasibility itself. In particular, the TA can test the solvability of  strict linear feasibility.

	\begin{thm}  \label{homodd}   For $i=1, \dots, n+1$, let $v_i$ be the $i$-th column of the $(m+1) \times (n+1)$ matrix  $
		B=
		\begin{pmatrix}
		A^T&0 \\
		b^T&1\\
		\end{pmatrix}
		$. Suppose  $conv(\{v_1, \dots, v_{n+1}\})$ does not contain the origin. Let
		$p'=\begin{pmatrix} x\\
		\alpha\\ \end{pmatrix} \in \mathbb{R}^{m+1}$ be
		a witness.  Then $A(-x/\alpha) <b$.
	\end{thm}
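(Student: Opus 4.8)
The plan is to unwind the definition of a witness for the origin and read the claimed strict inequality off coordinatewise. Because this is the homogeneous CHM with $p=0$, the witness characterization (\ref{eq3}) says $p'$ satisfies the system $\Vert p' - v_i \Vert < \Vert v_i \Vert$ for every $i = 1, \dots, n+1$. First I would square and expand each of these: from $\Vert p' - v_i \Vert^2 = \Vert p' \Vert^2 - 2\,p'^T v_i + \Vert v_i \Vert^2 < \Vert v_i \Vert^2$, cancelling $\Vert v_i \Vert^2$ gives $p'^T v_i > \tfrac{1}{2} \Vert p' \Vert^2$. A witness cannot be the origin (else $\Vert v_i \Vert < \Vert v_i \Vert$ would be forced), so $\Vert p' \Vert > 0$ and hence $p'^T v_i > 0$ strictly for all $i$.

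Next I would evaluate these $n+1$ inner products against the block structure of $B$. Writing $p' = \begin{pmatrix} x \\ \alpha \end{pmatrix}$ with $x \in \mathbb{R}^m$ and $\alpha \in \mathbb{R}$, the $i$-th column for $i \leq n$ is $v_i = \begin{pmatrix} a_i^T \\ b_i \end{pmatrix}$, where $a_i$ denotes the $i$-th row of $A$, so that $p'^T v_i = a_i x + \alpha b_i = (Ax)_i + \alpha b_i$. The last column is $v_{n+1} = \begin{pmatrix} 0 \\ 1 \end{pmatrix}$, giving $p'^T v_{n+1} = \alpha$. Feeding these into the strict positivity $p'^T v_i > 0$ yields, from the $(n+1)$-st column, $\alpha > 0$, and from the first $n$ columns, $(Ax)_i + \alpha b_i > 0$ for each $i$, i.e. $Ax + \alpha b > 0$ componentwise.

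To finish, I would divide the vector inequality $Ax + \alpha b > 0$ by the positive scalar $\alpha$ and rearrange: componentwise, $(Ax)_i/\alpha + b_i > 0$ is equivalent to $-(Ax)_i/\alpha < b_i$, i.e. the $i$-th entry of $A(-x/\alpha)$ is below $b_i$ for every $i$, which is precisely $A(-x/\alpha) < b$.

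There is no deep obstacle here; the argument is a direct translation of the witness inequalities into coordinates. The only points that demand care are: (a) confirming $p' \neq 0$, so that the inner products are \emph{strictly} positive rather than merely nonnegative; (b) keeping the block bookkeeping straight, so that $p'^T v_i$ really is $(Ax)_i + \alpha b_i$ and, crucially, the separating role of the appended column $v_{n+1}$ forces $\alpha > 0$; and (c) tracking the direction of the inequality when dividing by $\alpha$, since it is exactly the positivity of $\alpha$ that prevents the inequality from reversing.
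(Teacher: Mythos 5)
Your proposal is correct and follows essentially the same route as the paper: expand the witness inequalities $\Vert p' - v_i\Vert < \Vert v_i\Vert$, use the appended column $v_{n+1} = (0,1)^T$ to force $\alpha > 0$, use the first $n$ columns to get $a_i^T x + \alpha b_i > 0$, and divide by $\alpha$. The only cosmetic difference is that you first pass through the uniform inner-product form $p'^T v_i > \tfrac{1}{2}\Vert p'\Vert^2$ (noting $p' \neq 0$), while the paper expands each block norm directly and reads $\alpha > 0$ off the inequality $\Vert x\Vert^2 + (1-\alpha)^2 < 1$ without needing that separate observation.
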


	\begin{proof} We will use the distance duality (Theorem 1) to prove the theorem.  Denote the rows of $A$ by $a_i^T$. Then for $i=1, \dots, n$,  $v_i=\begin{pmatrix} a_i\\
		b_i\\ \end{pmatrix}$ and $v_{n+1}=\begin{pmatrix} 0\\
		1\\ \end{pmatrix}$, all in $\mathbb{R}^{m+1}$. Since $p'$ is a witness,
		\begin{equation}  \label{eqdd0}
		\Vert p' - v_i \Vert^2 < \Vert v_i \Vert^2, \quad \forall i=1, \dots, n+1.
		\end{equation}
		From (\ref{eqdd0}) we get
		\begin{equation} \label{eqdd00}
		\Vert x- a_i \Vert^2 + (\alpha - b_i)^2 <  \Vert a_i \Vert^2 + b_i^2, \quad \forall i=1, \dots, n.
		\end{equation}
		Simplifying (\ref{eqdd00}) we get
		\begin{equation} \label{eqdd1}
		-2(a_i^Tx + \alpha b_i) < - (\Vert x \Vert^2 + \alpha^2), \quad \forall i=1, \dots, n.
		\end{equation}
		From (\ref{eqdd0}) for $i=n+1$ we get,
		\begin{equation} \label{eqdd2}
		\Vert x \Vert^2 + (1-\alpha)^2 < 1.
		\end{equation}
		From (\ref{eqdd2}) $\alpha >0$.  This  gives:
		\begin{equation}
			-(\Vert x \Vert^2+ \alpha^2) <0 
		\end{equation}
		so that from (\ref{eqdd1})
		\begin{equation} \label{eqdd3}
		-a_i^Tx  <   \alpha b_i, \quad \forall i=1, \dots, n.
		\end{equation}
		Dividing both sides of (\ref{eqdd3}) by $\alpha$ implies $-x/\alpha$ is a feasible solution to the strict linear feasibility problem.
	\end{proof}
	
	\begin{remark} Without loss of generality we may assume that the first $n$ columns of  $B$ have unit norm.  Clearly the $(n+1)$-th column has unit norm. Thus the CHM corresponding to $B$ can be assumed to be Spherical.
	\end{remark}

	\section{Spherical-TA for LP feasibility}\label{LP_Fea}
		The LP feasibility problem is to test the feasibility of :\\
	\begin{equation} \label{con_l_s}
		\begin{aligned}
			&  Ax=b, \quad x\geq0.\\
			%	& \;\;
		\end{aligned}
	\end{equation}
	\\
	In other words, to test if $b$ lies in the  cone of columns of $A$. i.e. $b \in cone(A)=\{ y| y=\sum_{j=1}^{n} \alpha_j A_j, \alpha_j \geq 0\}$ where $A_j, j=1,...,m$ are columns if $A$. It is well known that this problem is equivalent to the general Linear Programming problem. Given a bound $M$ on the feasible solution of \eqref{con_l_s}, it can be converted into the following  convex hull membership problem:
	
	\begin{equation} \label{lp_con}
		\begin{aligned}
			%& A\alpha +\beta p +\phi =0\\
			%& \alpha,z \geq 0、、
			%& \mathbbm{1}^\top z+\beta +\phi=1\\
			%
			\begin{pmatrix} A & 0 & -b\\ e^\top  & 1  & -M \\ 0 & 0& 1\end{pmatrix}\begin{pmatrix} \alpha \\ \beta \\ \gamma\end{pmatrix}& =\begin{pmatrix}0 \\0\\ \frac{1}{M+1}\end{pmatrix}\\
			e^\top \alpha+\beta+\gamma=1, \quad &\alpha,\beta,\gamma \geq 0.\\
		\end{aligned}
	\end{equation}
	\\
	
	where $e\in \mathbb{ R}^m$ is the vector of ones. It is easy to show that \eqref{con_l_s} is feasible iff \eqref{lp_con} is feasible. This suggests that the TA and the Spherical-TA can be applied to solve the LP feasibility problem.
	
	%
	%
	%
	%In case when $A$ is over complete,  any feasible $p$ can be represented using only the generators of $cone(A)$ the set $\bar{A} \subset A$. By scaling $A$ so that columns of $AD\; (D_{ii}= \frac{b}{a\cdot A_i})$ are in a  $m-1$ dimensional hyperlane $\langle a, \alpha\rangle =b$, one can find the generators of $cone(A)$ by finding the vertices of the convex hull of the projected points. This could be done efficiently by AVTA.
	%Suppose we have a linear system $A$ and series of query points $p$ , it is sufficient to run AVTA once for dimension reduction and solve the subproblem $\bar{A}\alpha'=p, \alpha' \geq0$ using simplex method.
	%\noindent
	%We compare the running time of Simplex Method with AVTA$+$Simplex Method. The generator $\bar{A}$ is entrywise independent $uniform(0,1)$ random matrix and the 'overcomplete' part of the matrix $\bar{A}^c=A/\bar{A}$ are generated by $\bar{A}^c=\bar{A}B$ where $B\in \mathbb{R}^{ K \times (n-K)}$ is entrywise independent $uniform(0,10)$ random matrix. We set the number of generators $K=100$, the dimension $m=50$, and the number of 'redundant' columns $n=50,000$. We simply set half of the query points feasible and rest infeasible.  The  feasible points  $p$ are generated as $p=Ax$ where $x\in \mathbb{R}^{n}$ is entrywise independent $uniform(0,1)$ random vector and the infeasible points are generated in the same way as generators.

	\section{Spherical-TA for computing all vertices of a convex hull}
	Given a set of $n$ points $S=\{v_1,\dots,v_n\}$ in $\mathbb{ R}^m$, computing all vertices of its convex hull, known as the {\it irredundancy problem} ~\cite{toth2017handbook}, is an important problem  in computational geometry and machine learning.
	This problem becomes challenging as $m$ grows, especially for classical algorithms such as Gift Wrapping \cite{dixon1993gift} and QuickHull \cite{barber1996quickhull} due to their exponential running times in terms of the dimension. The irredundancy problem can be solved via $O(n)$ membership queries, i.e. for each point, checking  if it is an extreme point of the convex hull. One can take LP as an oracle for membership query, however, it is impractical to solve $O(n)$ LPs for large scale problems. The  All Vertex Triangle Algorithm (AVTA) algorithm  has been proposed to tackle the efficiency issue for this class of problems \cite{awasthi2018robust}. Sharing a similar  spirit with the membership type method,  AVTA applies the TA as a  membership query oracle and computes all vertices of the convex hull of a set of points under a natural assumption called $\gamma$ robustness \cite{arora2013practical} (see Definition \ref{gamma_ass} below).  Given a set of points $S=\{v_1,\dots,v_n\}$, we denote by $T$, $T \subset S$, the set of vertices of $conv(S)$.
	
	\begin{definition}\label{gamma_ass}
		The convex hull of $S$ is $\gamma$-robust if the minimum  distance from each vertex of the convex hull to the convex hull of the remaining vertices is at least $\gamma$. (See Figure \ref{Fig_gamm}a.)
		%, i.e. $\forall v_i \in T$, $dist(v_i,conv(T \backslash \{v_i\})) \geq \gamma R$:\\
	\end{definition}
	\begin{figure}[H]
		\subfloat[$\gamma$ robust convex hull]
		{
			\adjustbox{valign=c}	
			{
				\begin{tikzpicture}[scale=0.79]
				\draw (0.0,0.0) -- (4,3.0) -- (8,2.0) --(9,0)--(5,-2)-- cycle;
				\draw (0,0) node[below] {$v_1$};
				\draw (9,0) node[right] {$v_2$};
				\draw (4,3) node[above] {$v_3$};
				\draw (5,-2) node[below] {$v_4$};
				\draw (8,2) node[above] {$v_5$};
				%        \draw (6,2.) node[above] {$v_6$};
				\filldraw (5,-2) circle (2pt);
				%\filldraw (6,2.) circle (2pt);
				\filldraw (0,0) circle (2pt);
				\filldraw (8,2) circle (2pt);
				\filldraw (9,0) circle (2pt);
				\filldraw (4,3) circle (2pt);
%				\filldraw (7.29,1.05) circle (2pt);
				%		\filldraw (1,5) circle (2pt) node[above] {$p$};
				%		\filldraw (3,1) circle (2pt) node[below] {$p'$};
				%		\filldraw (2,3) circle (2pt) node[below] {$\mu~$};
				\draw (7.9,1.7) node[below] {$\gamma$};
				%		\begin{scope}[red]
				\draw (9,0) -- (4,3);
				\draw(7.35,1.0)--(8,2);
				\draw(7.22,1.05)--(7.32,1.19)--(7.42,1.13);
				
				%		\end{scope}
				%		\draw (8,6) node[above] {$l$};
				%		\draw (1,5) -- (3,1);
				%%here is new
				%\end{tikzpicture}

				%	\label{BBB}
				%\end{figure}
				\end{tikzpicture}
			%\caption{$\gamma$ robust convex hull}
			%\label{Fig1}
			}
		}
		\subfloat[Pathological Case]
		{
			\adjustbox{valign=c}	
			{
%				\begin{tikzpicture}[scale=0.975]
%				
%				\draw[yshift=-6.0\R] (2:\R) \foreach \x in {10,45,...,359}
%				{
%					-- (\x:\R)
%%					\draw (\x:\R)) node[right] {$v_2$}
%				}
%				-- cycle (90:\R);
				\begin{tikzpicture}[scale=1]
					\draw (0:\R)
					 \foreach \x in {20,40,...,340}
					 {
%						\pgfmathtruncatemacro{\lastx}{(\i-1)*1};
%						\pgfmathtruncatemacro{\lasty}{sqrt(900-\lastx^(2)) };
						%\pgfmathtruncatemacro{\y}{sqrt(225-\x^(2)) };
						-- (\x:\R)
						
						%\draw[fill] (\x:1 cm) -- (\x + 180:1 cm);
					}
					-- cycle (360:\R);
					
					\foreach \i in {1,2,...,18}
					{
						\pgfmathtruncatemacro{\x}{ \i*20};
						\draw (\x:\RR) node {$v_{\i}$};
						\filldraw (\x:\R) circle (1pt);
					}
				\end{tikzpicture}
			}
		}
		\caption{}
		\label{Fig_gamm}
	\end{figure}
	The intuition behind the $\gamma$ robust assumption is as follows:  a vertex is important if it is far away from the convex hull of the remaining vertices. The number of vertices of a $\gamma$ robust convex hull is  much smaller compared to the number of vertices of a `non-robust' convex hull.
	For instance, consider $S$ as an $\varepsilon$-Net from a unit sphere $U$, say $\mathcal{N}_\varepsilon \subset U$ so that $\forall x\in U, \exists v\in \mathcal{N}_{\varepsilon}$,  such that $\|u-v\| \leq \varepsilon$. Every point in $\mathcal{N}_\varepsilon$ is a vertex of  $conv(\mathcal{N}_\varepsilon)$. The size of  $\mathcal{N}_\varepsilon$ could be exponential in terms of dimension. In such a  case, every vertex is of $O(\varepsilon)$ distance to the convex hull of the remaining vertices; thus,  no single vertex is important to the geometrical structure of $conv(\mathcal{N}_\varepsilon)$ (See Figure \ref{Fig_gamm}b). In such a pathological cases, instead of computing all vertices, one will need a `good' subset of vertices to approximate $conv(\mathcal{N}_\varepsilon)$. Indeed,  AVTA also works in such approximation schemes. We refer interested readers to \cite{awasthi2018robust}. In this paper, we only consider the irredundancy problem under the $\gamma$-robustness assumption.

	The property $\gamma$- robustness allows one to test whether a subset $\widehat{S}$ contains all vertices of $conv(S)$. Specifically, a set $\widehat{S} \subset S$ contains every vertex of $conv(S)$ if every point in $S$ is  within a distance less than $\gamma$ to $conv(\widehat{S})$. As an approximation, the TA can exploit such a property to solve the membership query with precision $\gamma/2R$ where $R$ is the diameter of $conv(S)$. Indeed, if the query point $p \notin conv(\widehat{S})$, the TA will return a  hyperplane $H$ which separates $p$ and $conv(\widehat{S})$ (see the distance duality in \cite{kalantari2015characterization}). This allows one to find a vertex by the following observation: \emph{ the set of 
		farthest points along the normal direction of a hyperplane always contains an extreme point.} Formally, given a hyperplane defined by its normal direction $c'$, maximizers of  $c'^T v$ over $conv(S)$ includes a vertex. In Figure \ref{Fig_Hyper}, set of vertices of $conv(S)$ is $T=\{v_1,...,v_7\}$ and the subset $\widehat{S}=\{v_1,..., v_5\}$ does not contain all vertices of $conv(S)$ as $v_6, v_7$ are excluded. Given a query point $p \notin conv(\widehat{S})$, the TA returns a witness $p'$ and $H$ the bisecting hyperplane of $pp'$. 	In Figure\ref{Fig_Hyper}a, the set of farthest points from $H$ is a single point, the vertex $v_6$.  In Figure \ref{Fig_Hyper}b, the set of farthest points above $H$ are points on the line segment $v_6v_7$ (red line). In such cases, the set of farthest points will be a facet of $conv(S)$. One can capture a missing vertex of $conv(S)$ by picking any point of $S$ on the facet and finding its farthest point on this facet.

	%(To be rigorous, the set of farthest points may be a facet which contains extreme points of convex hull.)
	
	\begin{figure}[htpb]
		\subfloat[]
		{
			\adjustbox{valign=c}	
			{
				\begin{tikzpicture}[scale=0.555]
				%\draw (0, 10) rectangle (10, 0);	
				\draw (0.0,0.0) -- (4,3.0) -- (8,2.0) --(7,0)--(5,-2)-- cycle;
				\draw (0,0) node[below] {$v_1$};
				\draw (7,0) node[right] {$v_2$};
				\draw (4,3) node[above] {$v_3$};
				\draw (5,-2) node[below] {$v_4$};
				\draw (8,2) node[above] {$v_5$};
				%        \draw (6,2.) node[above] {$v_6$};
				\filldraw (5,-2) circle (2pt);
				%\filldraw (6,2.) circle (2pt);
				\filldraw (0,0) circle (2pt);
				\filldraw (8,2) circle (2pt);
				\filldraw (7,0) circle (2pt);
				\filldraw (4,3) circle (2pt);
				\filldraw (2,3) circle (2pt) node[above] {$p$};
				\filldraw (-1.1,3.3) circle (2pt) node[above] {$v_6$};
				\filldraw (-1.1,1.75) circle (2pt) node[left] {$v_7$};
				
				\filldraw (2.7,1.6) circle (2pt) node[below] {$p'$};
		%		\filldraw (0,1.275) circle (2pt) node[below] {$\mu~$};
				\begin{scope}[black]
				\draw (-2,0.275 )-- (8,5.275);
				\end{scope}
				\draw (8,6) node[above] {$H$};
				\draw (2,3) -- (2.7,1.6);
				\draw[dash dot] (0,1.275) --(-1.1,3.3);
				\draw[dash dot] (-0.69,1.015) --(-1.1,1.75);
				%%here is new
				\end{tikzpicture}
				
				%	\caption{Orthogonal Bisecting Separating Hyperplane.}
				%	\label{BBB}
				%\end{figure}
			}
		}
		\subfloat[]
		{
			\adjustbox{valign=c}	
			{
				\begin{tikzpicture}[scale=0.55]
				%\draw (0, 10) rectangle (10, 0);	
				\draw (0.0,0.0) -- (4,3.0) -- (8,2.0) --(7,0)--(5,-2)-- cycle;
				\draw (0,0) node[below] {$v_1$};
				\draw (7,0) node[right] {$v_2$};
				\draw (4,3) node[above] {$v_3$};
				\draw (5,-2) node[below] {$v_4$};
				\draw (8,2) node[above] {$v_5$};
				%        \draw (6,2.) node[above] {$v_6$};
				\filldraw (5,-2) circle (2pt);
				%\filldraw (6,2.) circle (2pt);
				\filldraw (0,0) circle (2pt);
				\filldraw (8,2) circle (2pt);
				\filldraw (7,0) circle (2pt);
				\filldraw (4,3) circle (2pt);
				\filldraw (1.2667,3.1) circle (2pt) node[above] {$v_7$};
				\filldraw (2,3) circle (2pt) node[right] {$p$};
				\filldraw (0,2.5) circle (2pt) node[above] {$v_6$};
				\begin{scope}[red]
				\draw (0,2.5)-- (1.2667,3.1) ;
				\end{scope}
%				\filldraw (-1.1,1.75) circle (2pt) node[left] {$v_7$};
				
				\filldraw (2.7,1.6) circle (2pt) node[below] {$p'$};
				%		\filldraw (0,1.275) circle (2pt) node[below] {$\mu~$};
				\begin{scope}[black]
				\draw (-2,0.275 )-- (8,5.275);
				\end{scope}
				\draw (8,6) node[above] {$H$};
				\draw (2,3) -- (2.7,1.6);
				\draw[dash dot] (0,2.5) --(0.5,1.525);
				\draw[dash dot] (1.2667,3.1)  --(1.7667,2.125);
				%%here is new
				\end{tikzpicture}

				%	\label{BBB}
				%\end{figure}
			}
		}

%		\subfloat
%		{
%			\adjustbox{valign=c}	
%			{
%				\begin{tikzpicture}[scale=0.35]
%			
%					\draw (15.0,0.0) -- (19,3.0) -- (23,2.0) --(22,0)--(20,-2)-- cycle;
%					\draw (15,0) node[below] {$v_1$};
%					\draw (22,0) node[right] {$v_2$};
%					\draw (19,3) node[above] {$v_3$};
%					\draw (20,-2) node[below] {$v_4$};
%					\draw (23,2) node[above] {$v_5$};
%					%       \draw (6,2.) node[above] {$v_6$};
%					\filldraw (20,-2) circle (2pt);
%					\filldraw (15,0) circle (2pt);
%					%\filldraw (6,2.) circle (2pt);
%					\filldraw (22,0) circle (2pt);
%					\filldraw (19,3) circle (2pt);
%					\filldraw (23,2) circle (2pt);
%					\filldraw (22,2) circle (2pt) node[above] {$p$};
%					\filldraw (21.5,1.9) circle (2pt) node[below] {$p'$};
%					%\draw (7,2) node[below] {$p$};
%					%	\draw (21.5,1.7) node[below] {$\varepsilon$};
%					\draw (21.5,1.9) -- (22,2);
%					%	\begin{scope}[red]
%					%	\draw (19,5.5) -- (21,-2.5);
%					%	\end{scope}
%					%	\draw (19,5.5) node[right] {$l$};
%					%%here is new
%				\end{tikzpicture}
%			}
%		}
		
		\caption{Set of farthest points along the direction $p-p'$ contains an extreme point. In $(a)$ the set of farthest point $w.r.t$ $H$ is a single point $v_6$. In $(b)$, set of farthest point $w.r.t$ are points on the line segment $v_6v_7$ (red line).}
		%Left: $p \notin conv(S)$, Right: $p\in conv(S)$
		\label{Fig_Hyper}
	\end{figure}

	The above approach is the motivation behind the AVTA: It iteratively adds a new vertex $v'$ to $\widehat{S}$  by computing a separating hyperplane, until all points are within distance $\gamma/2$  to $conv(\widehat{S})$. Next we give a detailed description of the AVTA.
	Given a working subset $\widehat S$ of $S$,  initialized with $v_1 \in S$ which has the maximum norm, the AVTA randomly selects $v \in S \setminus \widehat S$. It then tests via the Triangle Algorithm if $dist(v, conv(\widehat S)) \leq \gamma/2$. If so $v$ can not be a vertex thus labeled as a redundant point which will not be considered in further computation. In case $\Vert v -conv(\widehat S) \Vert > \gamma/2$, AVTA computes a $v$-witness $p' \in conv(\widehat S)$. The vector $c'=v-p'$ leads to a hyperplane  separating $v$ and $conv(\hat S)$. By maximizing $c'^Tv_i$ where $v_i$ ranges  in $S \setminus \widehat S$, one can find a new vertex $v' \notin \widehat{S}$ and the working set $\widehat{S}$ will be updated by including $v'$. If $v$ coincides with $v'$, the AVTA selects a new point in $S \setminus \widehat S$.  Otherwise, the AVTA continues to test if the same $v$ (for which a witness was found) is within a distance of $\gamma/2$ of the convex hull of the augmented set $\widehat S$. Also, as an iterate, the AVTA uses the same witness $p'$. The algorithm stops when each points in $S$ is detected  either as a redundant point or an extreme point. We next describe AVTA more precisely.

	\subsection{Algorithmic Description of AVTA}
	\begin{algorithm}
		\caption{AVTA and AVTA$+$  ($S$, $\gamma \in (0,1)$)}
		\begin{algorithmic}[1]
			\scriptsize
			\STATE 	{\bf Step 0.} Set $\widehat S = Farthest(v,S)$, where $ Farthest(v,S)$  is the set of points in $S$ that are farthest from $v$. 
			\STATE  {\bf Step 1.} Randomly select $v \in S \setminus \widehat S$.
			\STATE 	{\bf Step 2.} \textbf{Option I:} {\bf TA} $(\widehat S, v,  \gamma/(2R))$.
		 \phantom. \phantom. \phantom. \phantom.\phantom.\phantom. \phantom. \phantom. \phantom. \phantom. \phantom.\phantom.\phantom. \phantom. \phantom. \phantom. \phantom. \phantom.\phantom.\phantom. \phantom. \phantom. \phantom. \phantom. \phantom.\phantom.\phantom. \phantom. \phantom. \phantom. \phantom. \phantom.\phantom.\phantom. \phantom. \phantom. \phantom. \phantom. \phantom.\phantom.\phantom. 	
			\phantom. \phantom. \phantom. \phantom. \phantom.\phantom.\phantom. \phantom. \phantom. \phantom. \phantom. \phantom.\phantom.\phantom. \phantom. \phantom. \phantom. \phantom. \phantom.\phantom.\phantom. \phantom. \phantom.
			\phantom.\phantom.\phantom.\phantom. \phantom.
			\phantom.\phantom.\phantom.\phantom. \phantom. \phantom. \phantom. \phantom. \phantom.\phantom.\phantom. \phantom. \phantom. \phantom. \phantom. \phantom. \textbf{Option II:} {\bf Spherical-TA} $(\widehat S, v,  \gamma/(2R))$.
			\STATE 	{\bf Step 3.} If the output $p'$ of Step 2 is a $v$-witness then Goto Step 4. Otherwise, $p'$ is a $\gamma/2$-approximate solution to $v$.  Set $S \leftarrow S \setminus \{v\}$. \phantom. \phantom. \phantom. \phantom. \phantom. \phantom.\phantom. If $S= \emptyset$, stop. Otherwise, Goto Step 1.
			\STATE 	{\bf Step 4.} Let $c'=v-p'$. Compute $S'$, the set of optimal solutions of $\max \{c'^T x:  x \in S \setminus  \widehat S \}$. Randomly select $v' \in S'$. \phantom. \phantom. \phantom. \phantom. \phantom. \phantom.\phantom. \phantom. \phantom.\phantom. \phantom. \phantom.\phantom. \phantom. \phantom.\phantom. \phantom. \phantom.\phantom. \phantom. \phantom.\phantom. Replace $v'$ with an element in  $Farthest(v', S')$,  $\widehat S \leftarrow \widehat S \cup \{v'\}$
%			$ v' \leftarrow Farthest(v', S')$, $\widehat S \leftarrow \widehat S \cup \{v'\}$.
			\STATE 	{\bf Step 5.} If $v = v'$, Goto Step 1. Otherwise, Goto Step 2.
			
		\end{algorithmic}
	\end{algorithm}
	 Recall that in section \ref{Sec4} we introduced Spherical-TA as a  variant of  Triangle algorithm which can be directly applied in AVTA to replace TA as a membership query oracle. Throughout this paper, we use AVTA to represent the original version with the TA and AVTA$+$ if it applies the Spherical-TA.
%	The $Farthest(v',S')$  function  in the algorithm computes $\argmax\limits_{ v\in S'} \|v-v'\|$.
	
	\subsection{Applications of AVTA}
%	\subsubsection{Robust vertices recovery for machine learning problems}
%	In machine learning applications e.g., topic modeling and non negative matrix factorization, the natural geometry structure of data appears in convex hull fashion.
%	For example,  consid a non-negative factorization~(NMF)~\cite{lee2001algorithms} of matrix $A$, we want to compute non-negative and low rank matrices $U$ and $V$ $\|A=UV\|$ is small. This problem could be solved  efficiently~\cite{arora2012computing}  under
%	a natural {\em separability} assumption~\cite{donoho2003does}. Given $A$ is separable, the rows of $V$ will be the {\em vertices} of the convex hull of rows of $A$. It suffices to find vertices set $V$ and represent other rows using convex combination of row of $V$. Indeed, AVTA is able to recover vertices with certain level of perturbation. We refer details about such problems and its application in topic modeling to \cite{awasthi2018robust}. In this paper, we propose AVTA$+$  as a fast version of AVTA to compute vertices in these applications.
%	
%	\subsubsection{Size reduction}
	AVTA has various applications, including NMF (nonnegative matrix factorization) and Topic Modeling  which relies on the robustness of the AVTA in recovering vertices of the convex hull of a set of perturbed points. We refer the readers to \cite{awasthi2018robust} for details of such class of problems. In this paper, we focus on the \emph{Size Reduction} problem.
	Following the goal of irredundancy, the AVTA$+$ can be applied to reduce an overcomplete dataset i.e., a dataset that can be expressed by a small fraction of itself. In other words, it is applicable when given an $m\times n$ matrix $A$ as data,  the convex hull of the columns of $A$, denoted by $conv(A)$, has $K$ vertices, where $K \ll n$. In certain problems, instead of keeping the full data set which is of size $n$, one only needs to focus on the $K$ vertices. This suggests applying the AVTA$+$ as a pre-processing algorithm to remove non-extreme points in $A$.
	Such problems include, Conditional Gradient, \cite{clarkson2010coresets}, Minimum Volume Enclosing  Ellipsoid (MVEE) \cite{moshtagh2005minimum,sun2004computation} and Convex Hull Approximation \cite{blum2016sparse}. We  briefly introduce MVEE here and demonstrate the improvement of efficiency brought about by AVTA$+$.
	The MVEE estimator is based on the smallest volume ellipsoid that covers $conv(A)$. The MVEE problem has been studied for decades and has attained interest in broad areas, e.g., 
	outliers detection \cite{van2009minimum}.  Given the MVEE of a data set, one can identify outliers by picking points on the boundary \cite{van2009minimum}. Improving the efficiency of algorithms that solve the MVEE problem will impact areas such as robust statistics. Formally, the MVEE is defined as follows:\\
	\begin{equation}
	\begin{aligned}
	&\min\limits_{M} \log \text{det}(M^{-1})\\
	&(v_i-b)^TM (v_i-b)\leq 1,\quad i=1,...,n\\
	&M \succ 0.
	\end{aligned}
	\end{equation}
	where the optimization computes a vector $b \in \mathbb{ R}^m$ and an $m\times m$ symmetric and positive definite matrix $M$ given a set of points $S=\{v_1,...,v_n\}$.
	 Then the resulting ellipsoid $E =\{x| (x-b)^TM (x-b) \leq 1\}$ will be the MVEE  centered at $b$ and contains the convex hull of the columns of $A$, the data points.
%	 One can easily verify that $E$ contains all points of $conv(A)$  as long as it containing all vertices of $A$.
	 This suggests that  one can run AVTA$+$ before solving MVEE since the number of vertices $K$ is generally much less than the number of columns of  of $A$ (See Figure \ref{fig_mvee}). 
	 %insi$ (v_i-b)^TE (v_i-b)\leq 1$ automatically holds if such constraints are satisfied by all vertices in $conv(A)$.
	
	\begin{figure}[!h]
		\centering
		\subfloat[MVEE with redundant points]{
			\centering
			\includegraphics[width=0.41\textwidth]{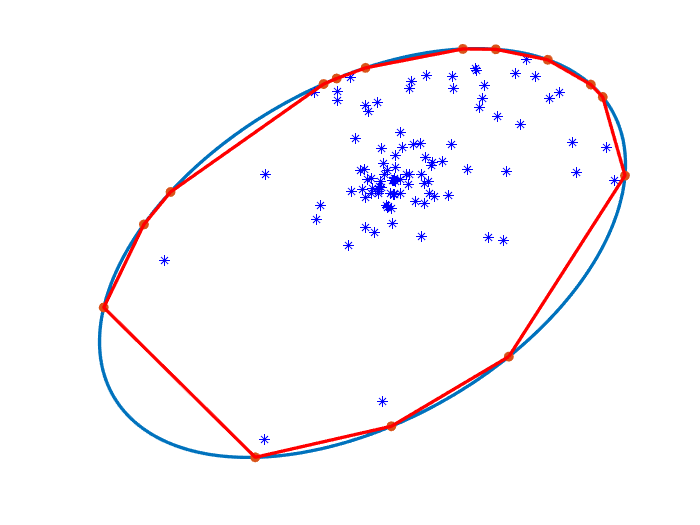}
			%		\caption{MVEE with redundant points}
			%		\label{fig:avta rec}
		}
		\subfloat[MVEE with only vertices]{
			\centering
			\includegraphics[width=0.41\textwidth]{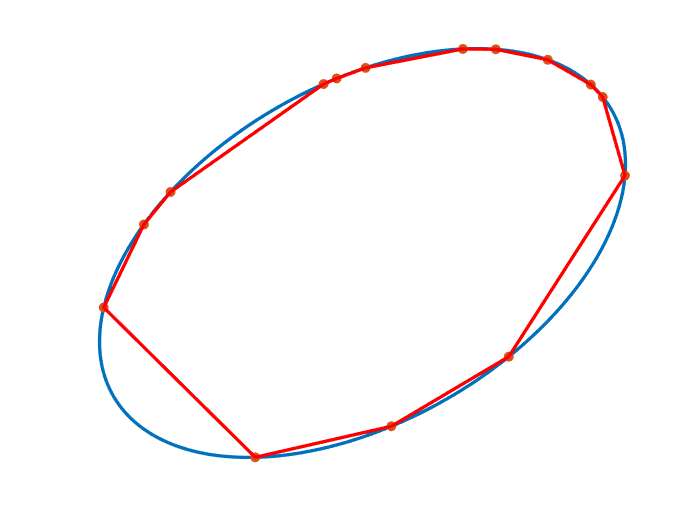}
			%		\caption{MVEE with only vertices}
			%		\label{fig:ellip}
		}
		\caption{}
		\label{fig_mvee}
	\end{figure}

	\section{Experiments }\label{Exp}\footnote{Source code: https://github.com/yikaizhang/Spherica\_TA}
	%, \ref{Exp_CLS} and \ref{Exp_Str_LP}
	In this section we demonstrate the power of the TA and the Spherical-TA in solving CHM and the significance of CHM in solving other problems.  In Section \ref{Exp_fea} we compare the efficiency of the Spherical-TA, TA, and LP solver  for  solving CHM (\ref{Exp_chm}), LP feasibility (\ref{Exp_CLS}) and strict LP Feasibility (\ref{Exp_Str_LP}). In Section \ref{Exp_fav}  we apply the Spherical-TA as a separating hyperplane oracle to find all vertices of a convex hull of a finite set. In Section \ref{Exp_mvee}, we use the AVTA and AVTA$+$ as preprocessing steps for the MVEE problem. \\
	%	When a strict pivot is selected, its weight is increased to reduce $\|p'-p\|$ while for anti-pivot, the weight is decreased which allows the TA to eliminate unnecessary pivot that has been included in previous iterations.
	
	\textbf{Implementation Details:}
	We apply our implementations of the TA, the Spherical-TA, the AVTA, and the AVTA$+$ using MATLAB \footnote{Any advice or opinions posted here are our own, and in no way reflect that of MathWorks.}. In particular, we have a \emph{practical} implementation of the TA and the Spherical-TA using both the aforementioned strict pivot and also the \emph{anti-pivot} described  in ~\cite{zhangtriangle}. Our implementation of Spherical-TA also incorporates a heuristics that augmenting $S$ using random convex combination of points in $S$. For the LP solver, we use the \textit{linprog} package provided by MATLAB. For the QuickHull solver we use the \textit{convhulln} package provided by MATLAB. For the MVEE we apply  \textit{MinVolEllipse} package provided by \cite{moshtagh2005minimum}.

	%We implement and experimentally test the above algorithms and methods in the software packageThe Software release in following Link ???This is a place holder??? is
	
	\subsection{Feasibility:  CHM, LP Feasibility, Strict LP Feasibility}\label{Exp_fea}
	
	\subsubsection{Convex hull membership}\label{Exp_chm}
	%\textbf{Problem Setting} We refer the problem setting to ??? in Section???.

	%
	%
	\noindent In our experiments, we generate data sets in two ways. One leverages on the  Gaussian distribution, i.e. $v_i \sim \mathcal{N}(0,\mathcal{I}_m), i = {1,...,K}$ and the other on the unit sphere, i.e.,
	vertices of the convex hull are generated by  uniformly  picking points on a unit sphere. The Gaussian distribution is a natural parametric distribution widely used in statistics. The unit sphere can be viewed as a scaled version of high dimensional spherical Gaussian.
	We represent the dataset as a matrix $A \in \mathbb{R}^{m \times n}$, where $m$ is the dimension and $n$ is the number of data points. We compare the efficiency of the following three algorithms for solving  the CHM problem:  The Simplex method ~\cite{chvatal1983linear}, the TA ~\cite{kalantari2015characterization}, and the Spherical-TA. The size of the problems varies from $m=100,n=500 $ to $ m=1000,n=5000$ and the value of precision parameter Epsilon varies from $0.01 \sim 0.001$. The running times of the three	algorithms (in log scale) are shown in Figures \ref{Tab_CHM_G} and \ref{Tab_CHM_U}. One can observe that the TA and the Spherical-TA outperform other iterative algorithms. In addition, they have much better efficiency than the LP solver with a large value of precision parameter $\varepsilon$. This is because the number of iterations of the Spherical-TA to obtain an  $\varepsilon$ approximate solution increases with smaller value of $\varepsilon$. We also observe that the running time of the TA and the Spherical-TA increases linearly with $m$ and $n$ while the LP solver is more sensitive to large value of $m,n$.
	
	\begin{figure}[!h]
		\centering
		\subfloat[]{
			\centering
			\includegraphics[width=0.49\textwidth]{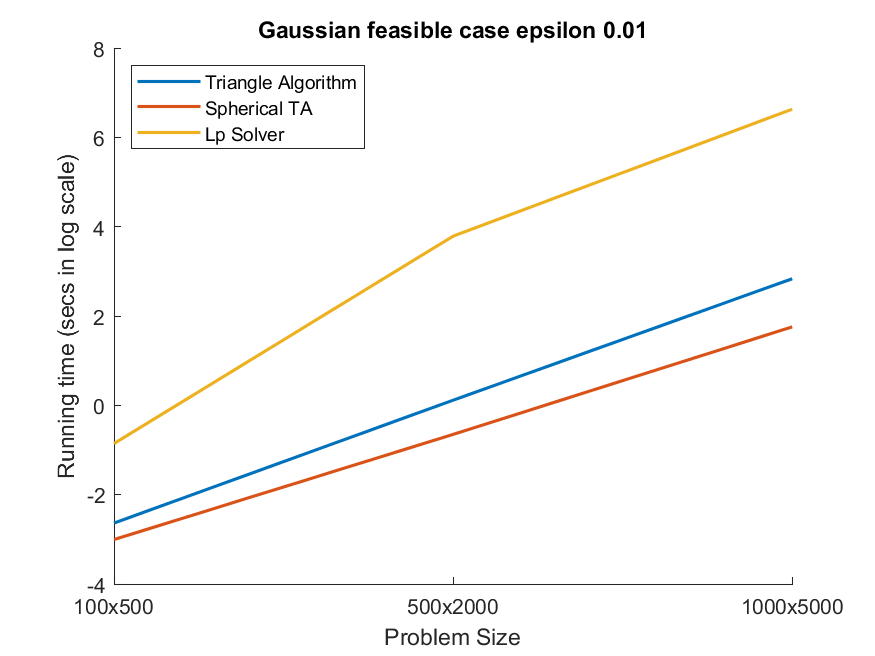}
			%		\caption{MVEE with redundant points}
			%		\label{fig:avta rec}
		}
		\subfloat[]{
			\centering
			\includegraphics[width=0.49\textwidth]{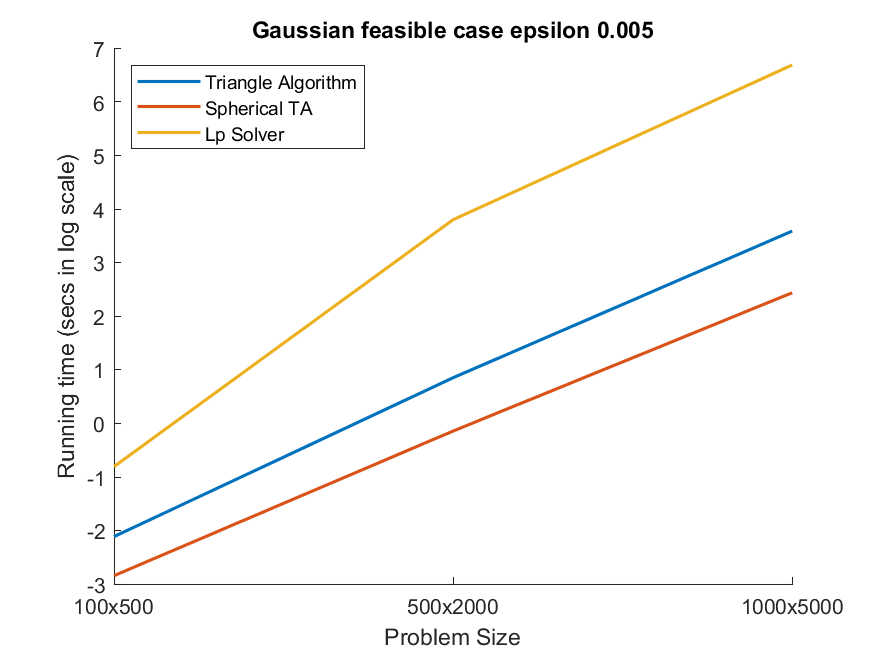}
			%		\caption{MVEE with only vertices}
			%		\label{fig:ellip}
		}
		~\\
		\subfloat[]{
			\centering
			\includegraphics[width=0.49\textwidth]{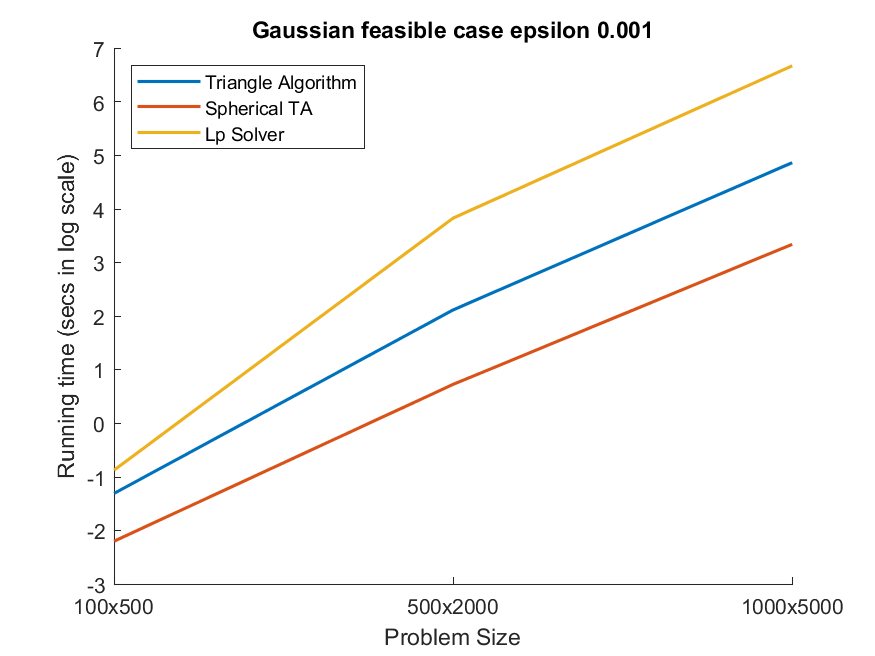}
			%		\caption{MVEE with only vertices}
			%		\label{fig:ellip}
		}
		\subfloat[]{
			\centering
			\includegraphics[width=0.49\textwidth]{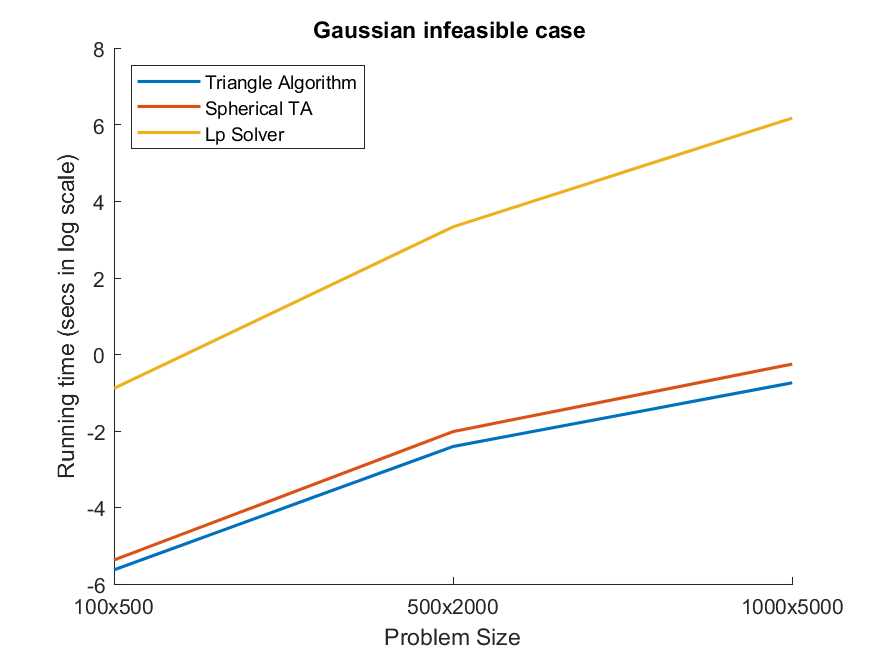}
			%		\caption{MVEE with only vertices}
			%		\label{fig:ellip}
		}
		\caption{Running time of different algorithms on CHM problem with Gaussian Vertices}
		\label{Tab_CHM_G}
	\end{figure}

	\begin{figure}[!h]
		\centering
		\subfloat[]{
			\centering
			\includegraphics[width=0.49\textwidth]{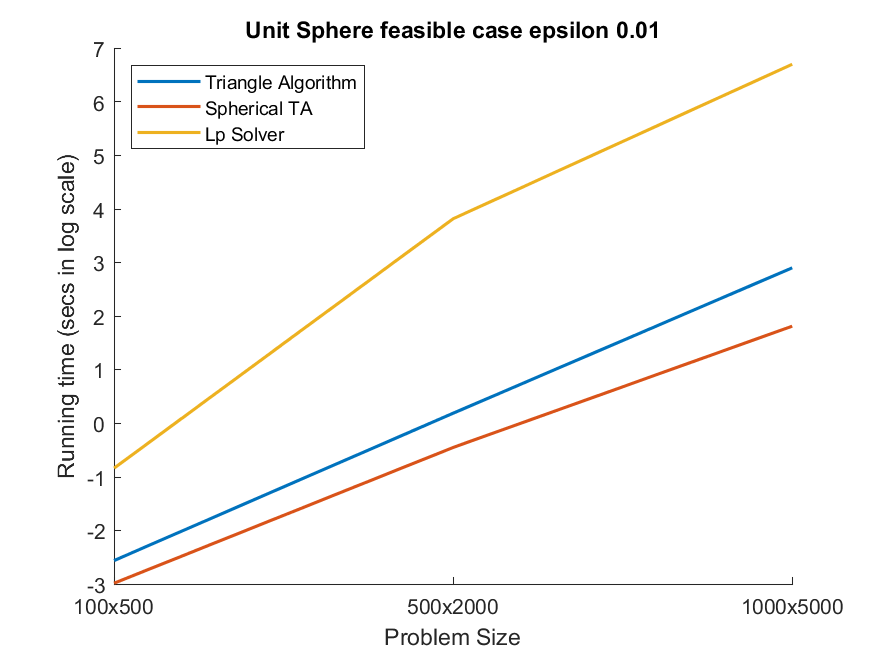}
			%		\caption{MVEE with redundant points}
			%		\label{fig:avta rec}
		}
		\subfloat[]{
			\centering
			\includegraphics[width=0.49\textwidth]{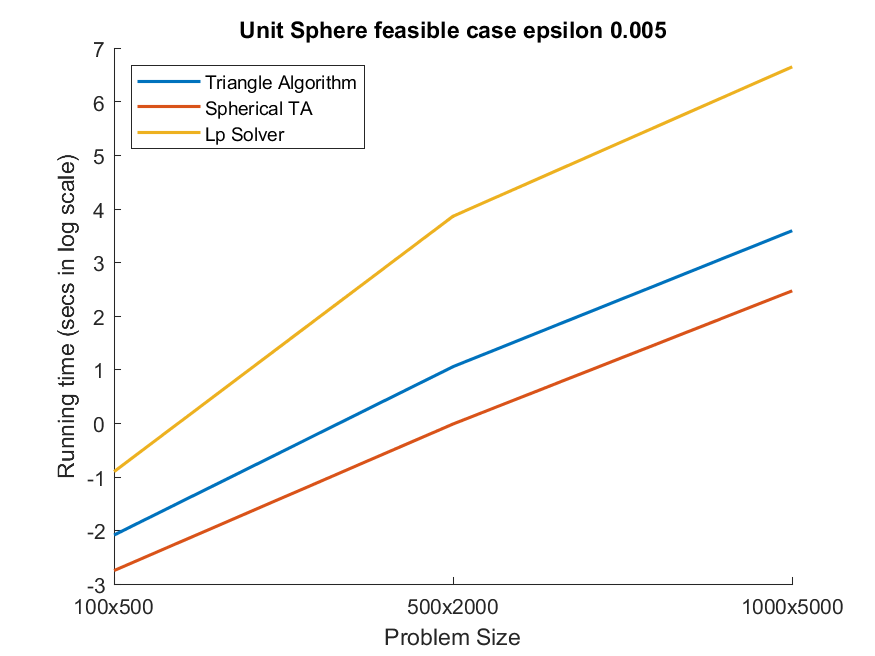}
			%		\caption{MVEE with only vertices}
			%		\label{fig:ellip}
		}
		~\\
		\subfloat[]{
			\centering
			\includegraphics[width=0.49\textwidth]{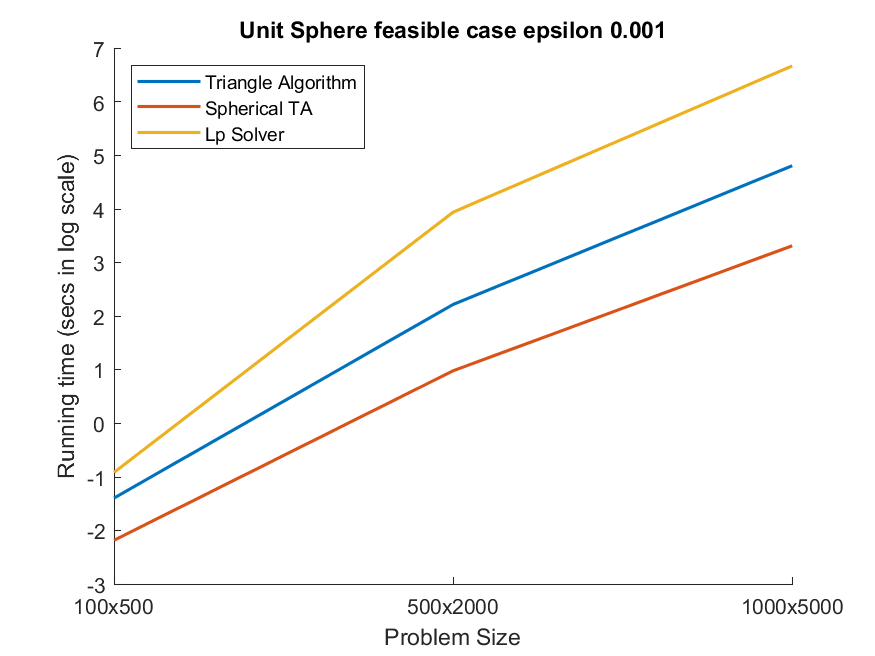}
			%		\caption{MVEE with only vertices}
			%		\label{fig:ellip}
		}
		\subfloat[]{
			\centering
			\includegraphics[width=0.41\textwidth]{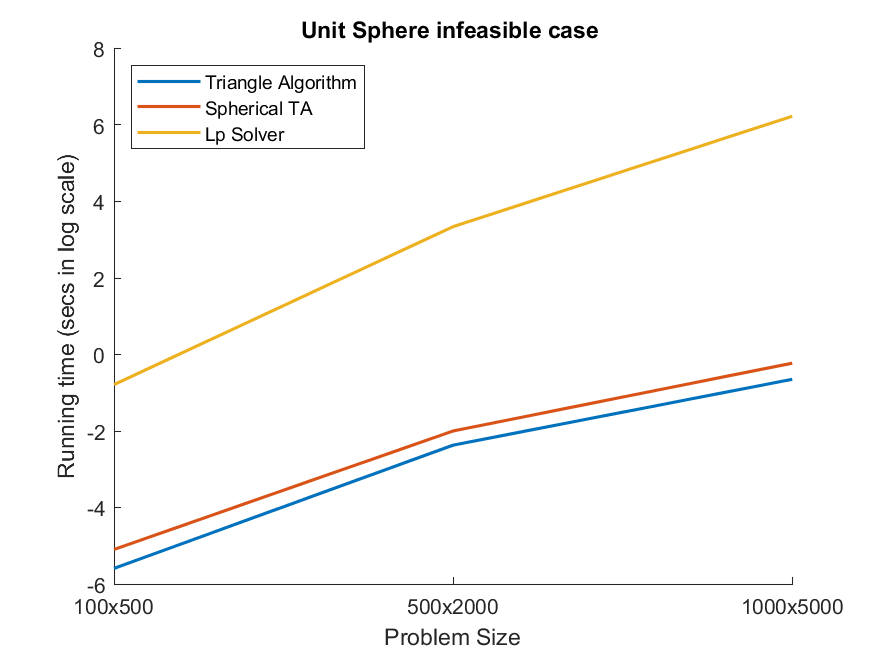}
			%		\caption{MVEE with only vertices}
			%		\label{fig:ellip}
		}
		%		\caption{}
		\caption{Running time of different algorithms on CHM problem with vertices on Unit Sphere}
		\label{Tab_CHM_U}
	\end{figure}

	\subsubsection{LP feasibility} \label{Exp_CLS}
	Here we compare the efficiency of the TA , Spherical-TA and an LP solver for the LP feasibility problem introduced in Section \ref{LP_Fea}. We compare the running times of the three algorithms on datasets with different dimension, number of points, precision parameter and generator for the vertices. We generate the columns of the coefficient matrix  $A$ uniformly randomly from a unit sphere or an i.i.d Gaussian distribution. In the case $Ax=b, x\geq0$ is feasible,  we generate $x \in \mathbb{R}^m$, the solution of the linear system, as an entrywise uniform $(0,1)$ distributed vector and compute $b$ as $b=Ax$. In the case $Ax=b, x\geq0$ is infeasible, we apply an SVD: $\tilde A = U\Sigma V$, thresholding half of the singular values to be zeros and obtain  a low rank version of $ \Sigma$ denoted as $ \Sigma'$. The vector is obtained by $b= \tilde A x$ where $x$ is an entrywise normal $\mathcal{N}(0,1)$ distributed vector and $\tilde A =  U \tilde \Sigma V$ where $\tilde \Sigma$  is $\Sigma'$ perturbed by an Gaussian random matrix.
	We  set an upper bound on $x$, as $M=1200$ in all cases. The size of the problems varies from $m=50,n=500 $ to $ m=200,n=2000$ and the value of precision parameter Epsilon varies from $10^{-6} \sim 10^{-7}$. The running times of the three
	algorithms (in log scale)  are shown in Figures \ref{Tab_LP_fea_Unif} and \ref{Tab_LP_fea_U}. In particular, we observe a similar performance between the TA and the Spherical-TA in the number of iterations. Such observations suggest the complexity improvement of the  Spherical-TA over the TA is not universal. One can observe that the TA and Spherical-TA outperforms the LP solvers.
	
	\begin{figure}[!h]
		\centering
		\subfloat[]{
			\centering
			\includegraphics[width=0.41\textwidth]{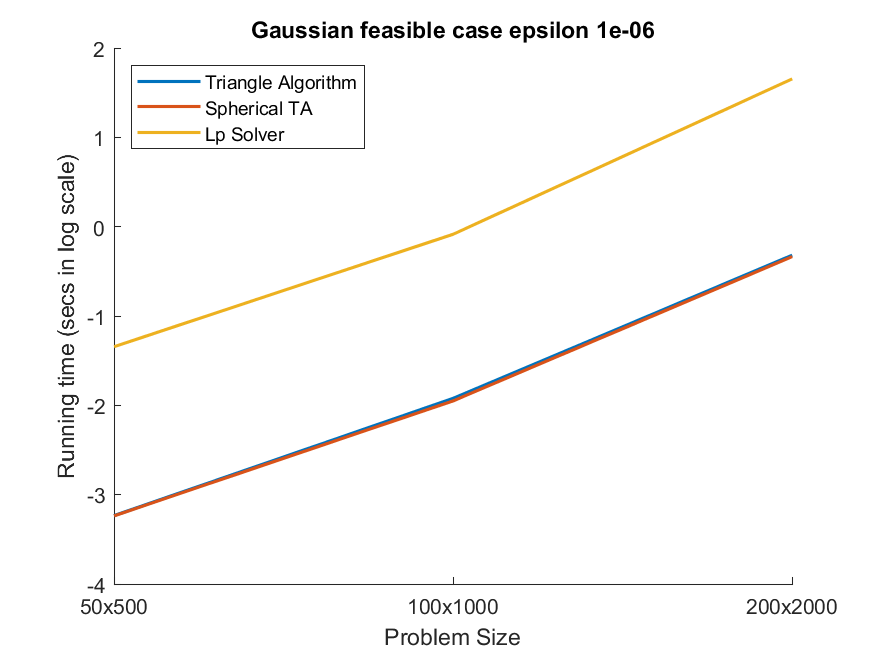}
			%		\caption{MVEE with redundant points}
			%		\label{fig:avta rec}
		}
		\subfloat[]{
			\centering
			\includegraphics[width=0.41\textwidth]{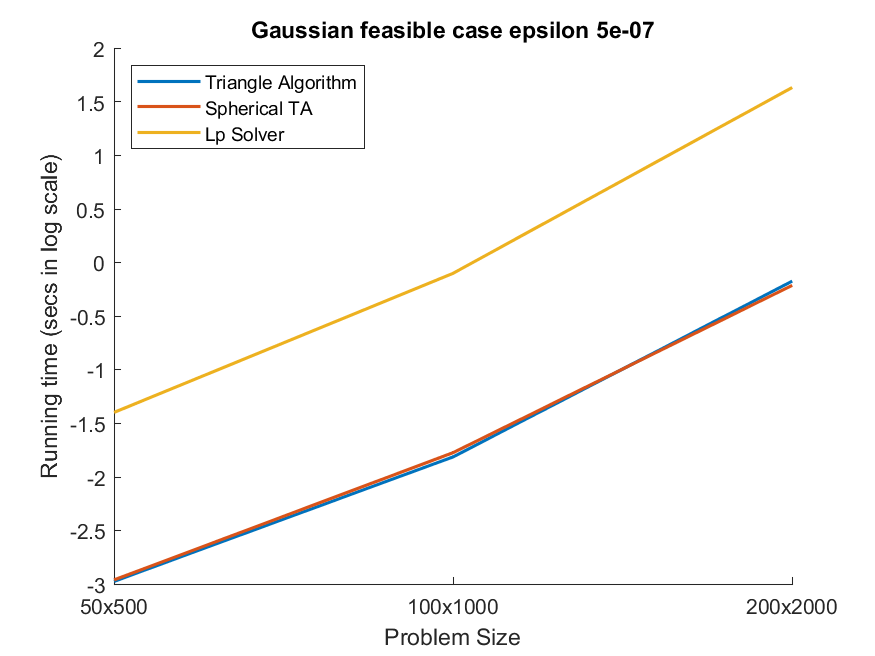}
			%		\caption{MVEE with only vertices}
			%		\label{fig:ellip}
		}
		~\\
		\subfloat[]{
			\centering
			\includegraphics[width=0.41\textwidth]{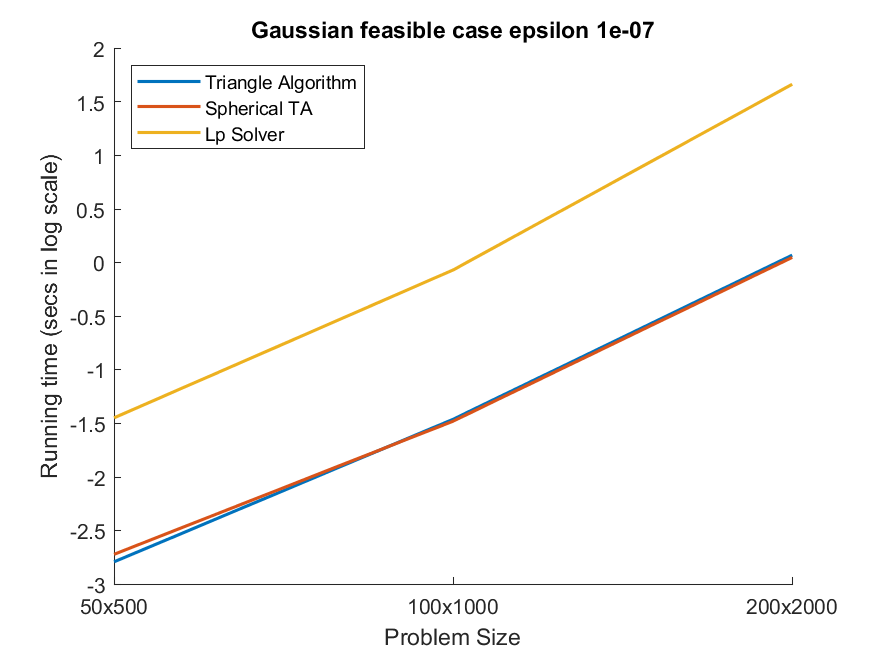}
			%		\caption{MVEE with only vertices}
			%		\label{fig:ellip}
		}
		\subfloat[]{
			\centering
			\includegraphics[width=0.41\textwidth]{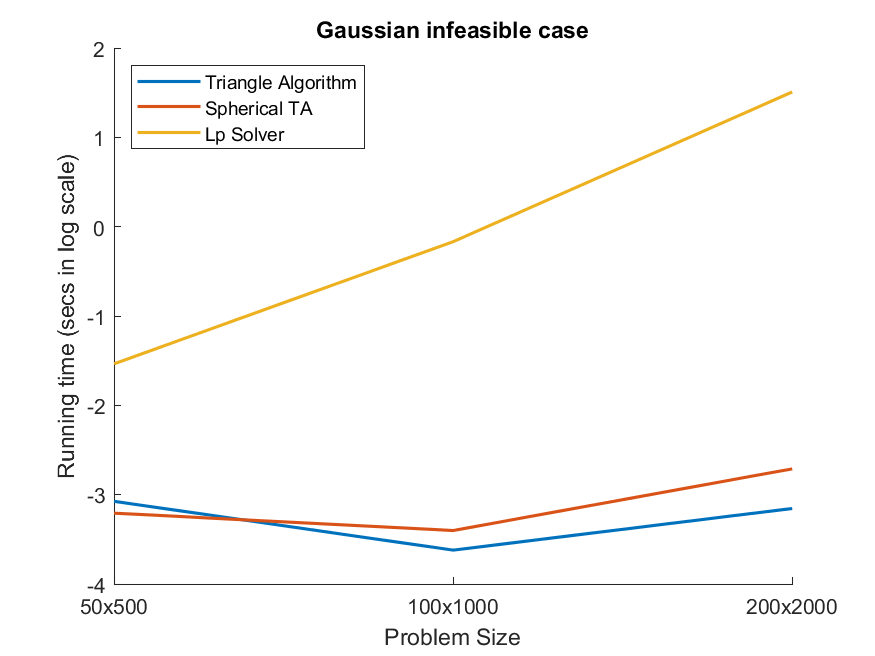}
			%		\caption{MVEE with only vertices}
			%		\label{fig:ellip}
		}
		%		\caption{}
		\caption{Running time of different algorithms on LP feasibility problem (Gaussian)}
		\label{Tab_LP_fea_Unif}
	\end{figure}
	\begin{figure}[!h]
		\centering
		\subfloat[]{
			\centering
			\includegraphics[width=0.41\textwidth]{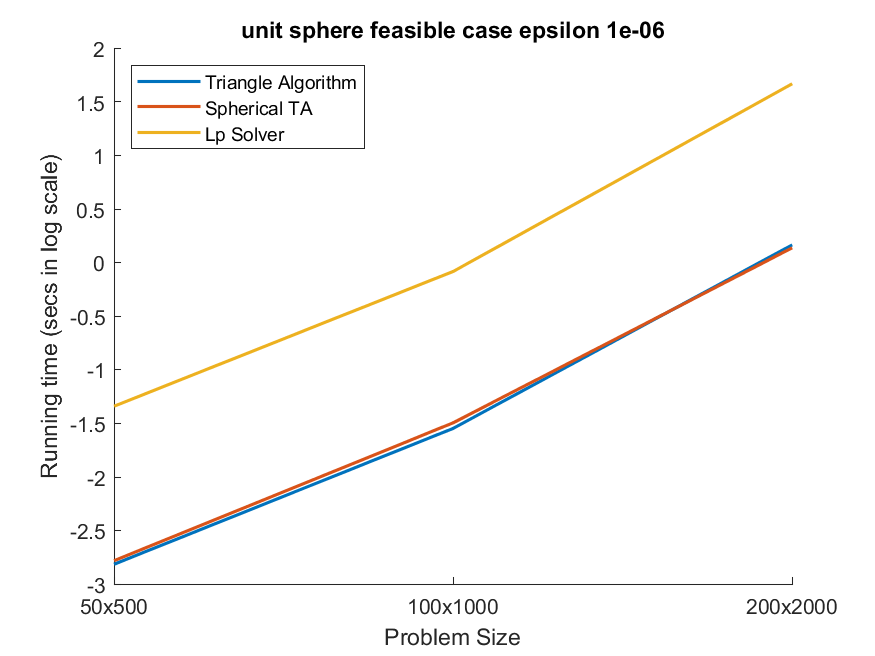}
			%		\caption{MVEE with redundant points}
			%		\label{fig:avta rec}
		}
		\subfloat[]{
			\centering
			\includegraphics[width=0.41\textwidth]{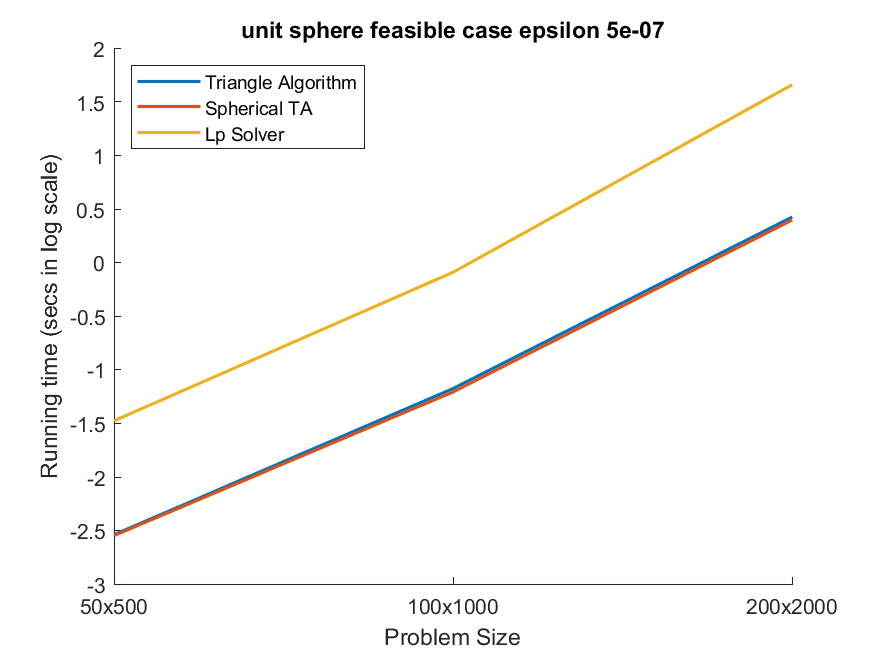}
			%		\caption{MVEE with only vertices}
			%		\label{fig:ellip}
		}
		~\\
		\subfloat[]{
			\centering
			\includegraphics[width=0.41\textwidth]{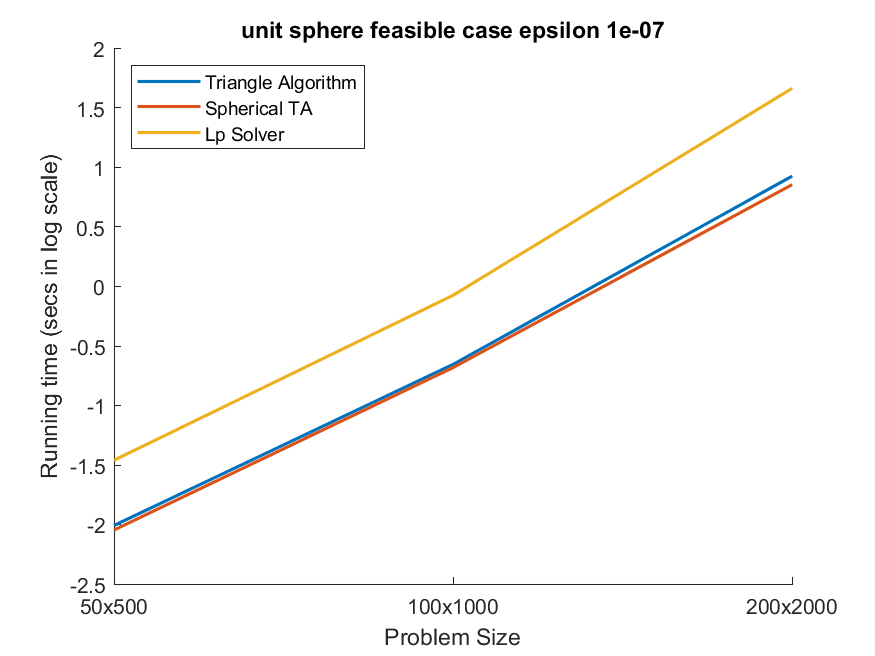}
			%		\caption{MVEE with only vertices}
			%		\label{fig:ellip}
		}
		\subfloat[]{
			\centering
			\includegraphics[width=0.41\textwidth]{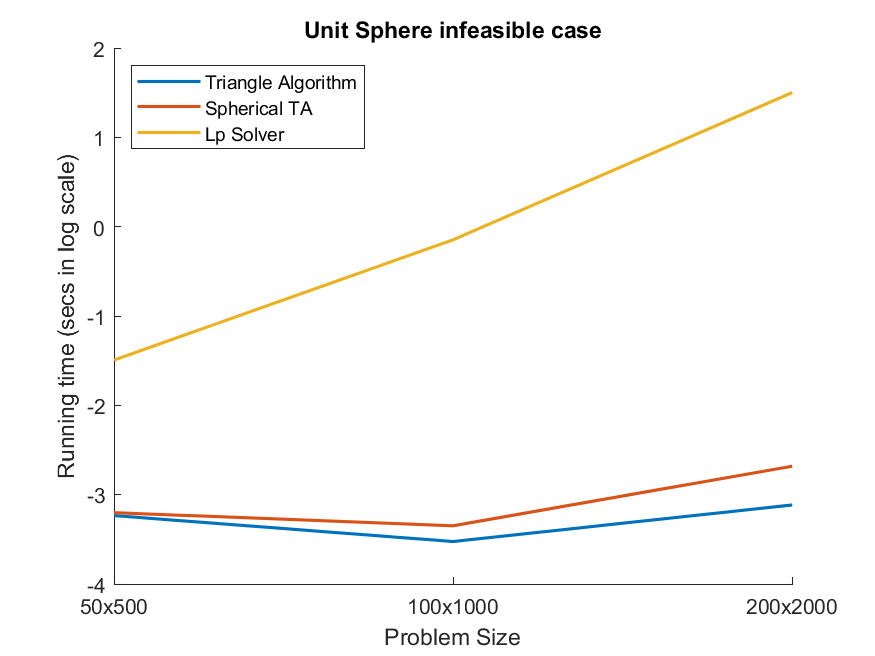}
			%		\caption{MVEE with only vertices}
			%		\label{fig:ellip}
		}
		%		\caption{}
		\caption{Running time of different algorithms on LP feasibility problem (Unit Sphere)}
		\label{Tab_LP_fea_U}
	\end{figure}

	\subsubsection{Strict LP feasibility} \label{Exp_Str_LP}
	Here we solve the Strict LP feasibility problem, $Ax<b, x\geq 0$, using the TA , the  Spherical-TA and the LP solve. We compare the running times of the aforementioned three algorithms on data sets of  different sizes, feasible and infeasible cases, and different distributions for generating the coefficient matrix.  We  generate columns of $A$  $:1)$ uniformly randomly from a unit sphere; and $2)$ entrywise standard normal distributed.  For the  case of feasible $Ax<b$, $b$ is computed by $b= Ax+0.5 e $ where  $e$ is vector of ones. For the case of infeasible $Ax<b$, we generate $b= Ax-2e$ to ensure the infeasibility. The size of the input matrix $A$ varies from $n=50,m=500 $ to $ n=500,m=2500$. The running times of the three
	algorithms (in log scale)  are shown in Figures \ref{Tab_str_LP_fea_G} and \ref{Tab_str_LP_fea_Sp}.

	\begin{figure}[!h]
		\centering
		\subfloat[]{
			\centering
			\includegraphics[width=0.41\textwidth]{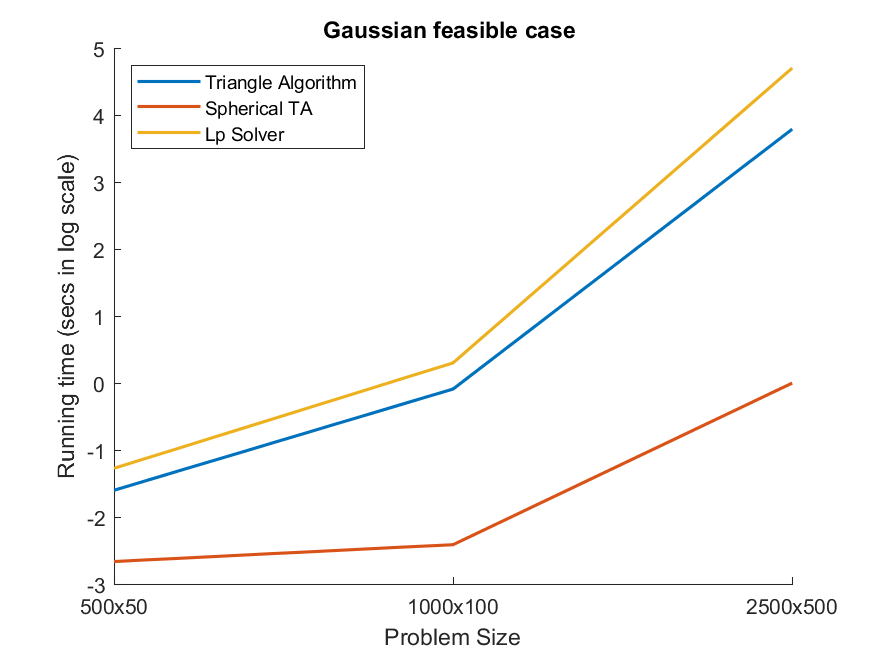}
			%		\caption{MVEE with only vertices}
			%		\label{fig:ellip}
		}
		\subfloat[]{
			\centering
			\includegraphics[width=0.41\textwidth]{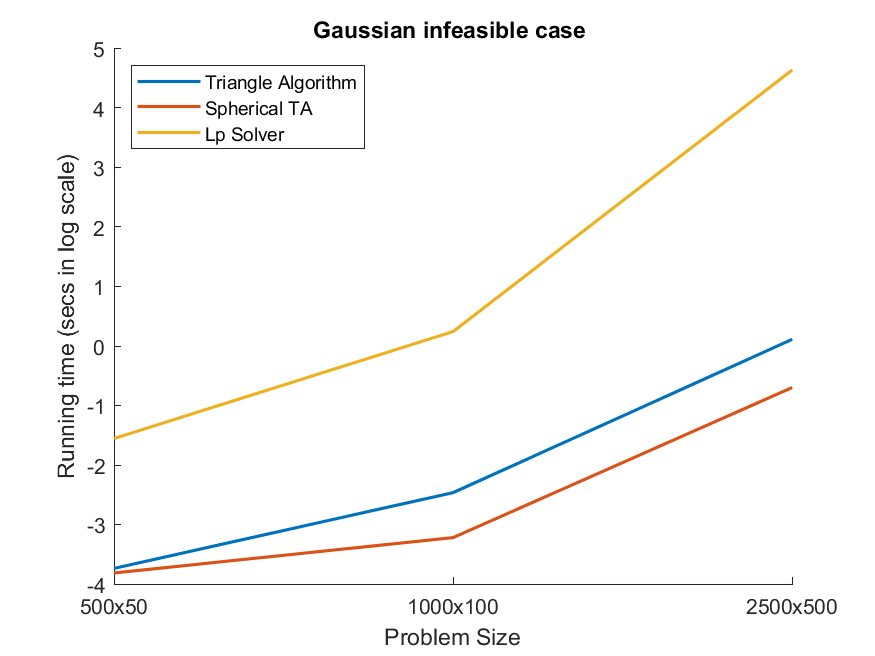}
			%		\caption{MVEE with only vertices}
			%		\label{fig:ellip}
		}
		%		\caption{}
		\caption{Running time of different algorithms on Strict LP feasibility problem(Unit Sphere)}
		\label{Tab_str_LP_fea_G}
	\end{figure}
	
	\begin{figure}[!h]
		\centering
		\subfloat[]{
			\centering
			\includegraphics[width=0.41\textwidth]{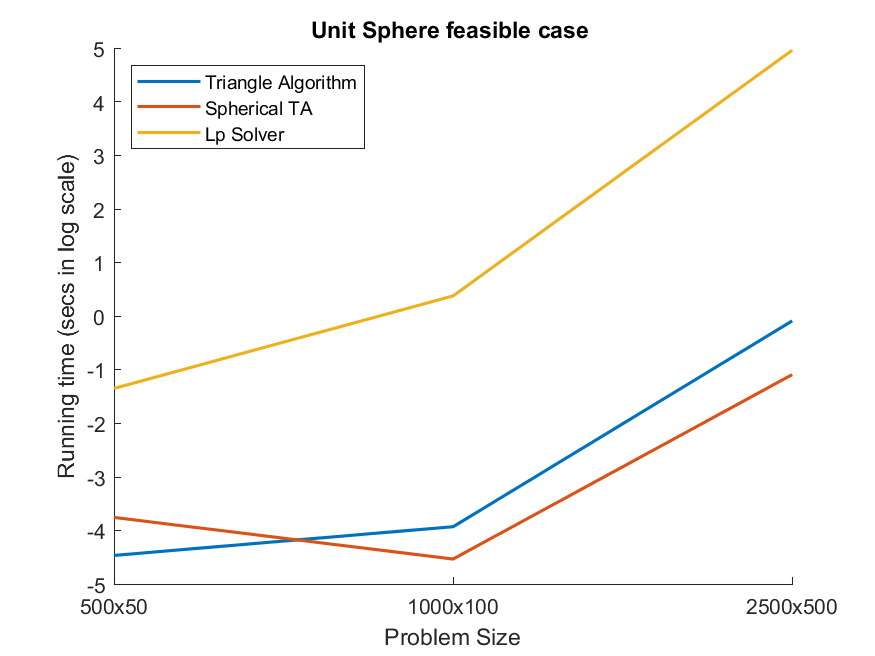}
			%		\caption{MVEE with only vertices}
			%		\label{fig:ellip}
		}
		\subfloat[]{
			\centering
			\includegraphics[width=0.41\textwidth]{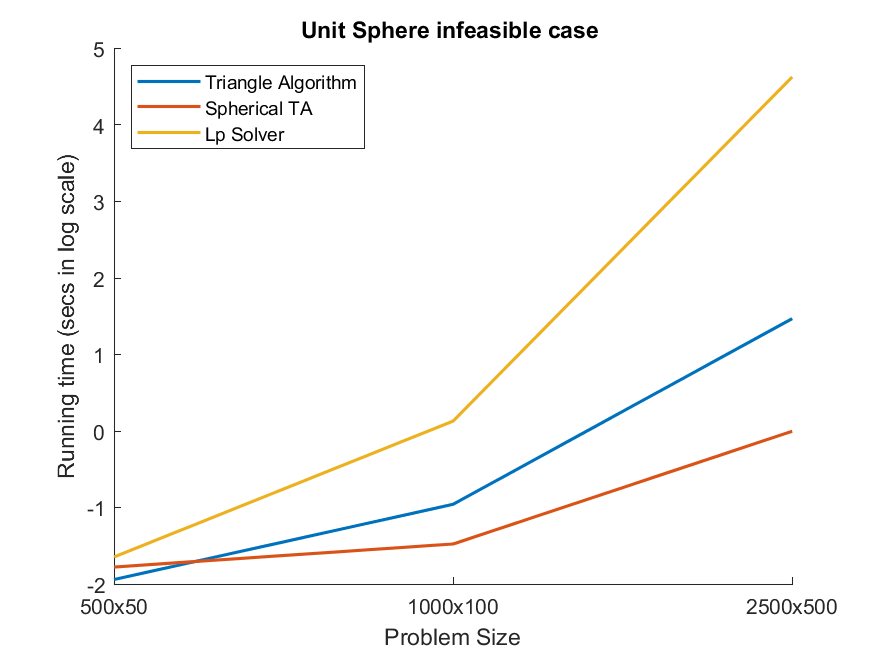}
			%		\caption{MVEE with only vertices}
			%		\label{fig:ellip}
		}
		%		\caption{}
		\caption{Running time of different algorithms on Strict LP feasibility problem (Gaussian)}
		\label{Tab_str_LP_fea_Sp}
	\end{figure}

	\subsection{The Irredundancy Problem} \label{Exp_irr}
	
	\subsubsection{Finding all vertices}\label{Exp_fav}
	Here we apply the AVTA, AVTA$+$ and  Quickhull ~\cite{barber1996quickhull} to solve the irredundancy problem.  We compare the efficiency of the three algorithms during the execution of which we control different parameters: 1) the dimension of the problem;  2) the number of points in $S$; 3) the fraction of redundant points in $S$ i.e., fraction of points inside $conv(S)$; and 4) the distributions used to generate the vertices. We generate vertices according to a Gaussian distribution $\mathcal{N}(0,1) ^m$ or uniformly randomly from a unit sphere. Having generated the set of vertices,  redundant points are generated as convex combination of the vertices. The size of the problems varies from $m=5,n=50 $ to $ m=200,n=1000$ and the fraction of non-vertex points varies from $0 \sim 50 \% $. The running times of the three
	algorithms are shown in Figures \ref{Tab_Irr_U} and \ref{Tab_Irr_G}. While Quickhull performs better in small size problems, especially for low dimension, it fails to output the vertices with dimension $m > 10$ in any reasonable time which is due to its exponential dependence on dimension $m$ in the complexity. The AVTA and the AVTA$+$ demonstrates significantly better efficiency in large size problems since finding a vertex only takes linear time in $m,n$.

	\begin{figure}[!h]
		\centering
		\subfloat[]{
			\centering
			\includegraphics[width=0.451\textwidth]{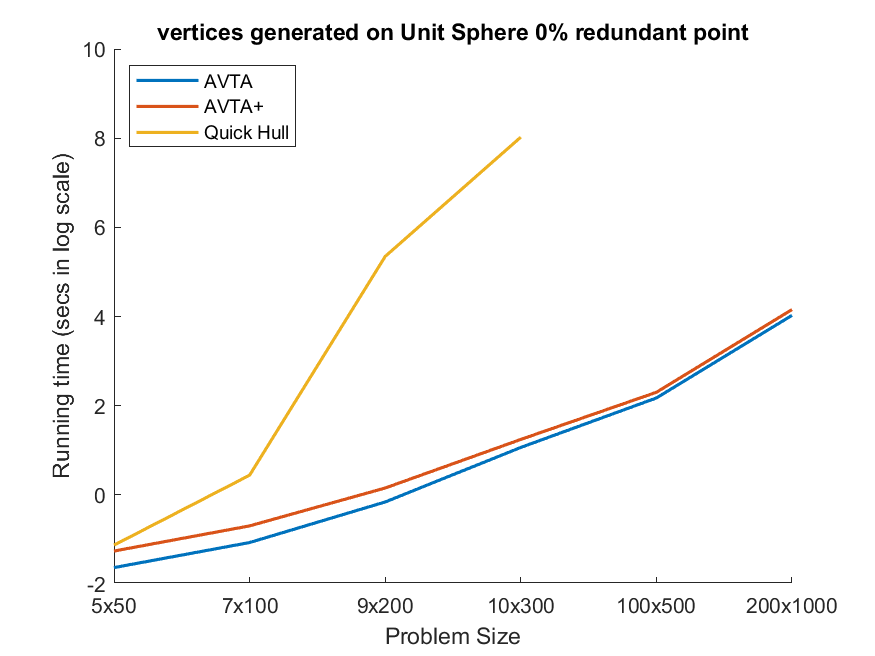}
			%		\caption{MVEE with redundant points}
			%		\label{fig:avta rec}
e		}
		\subfloat[]{
			\centering
			\includegraphics[width=0.451\textwidth]{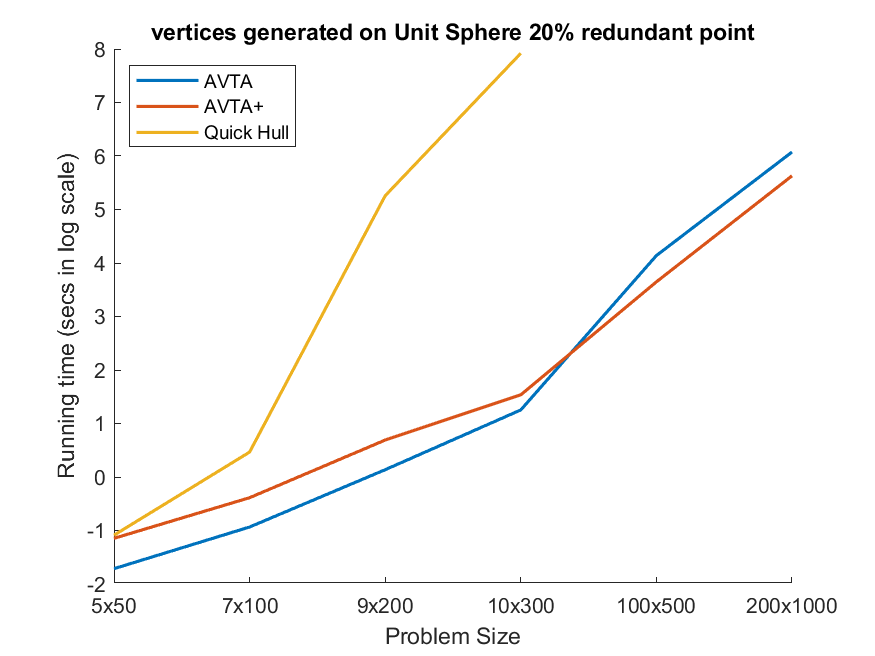}
			%		\caption{MVEE with only vertices}
			%		\label{fig:ellip}
		}\\
		\subfloat[]{
			\centering
			\includegraphics[width=0.451\textwidth]{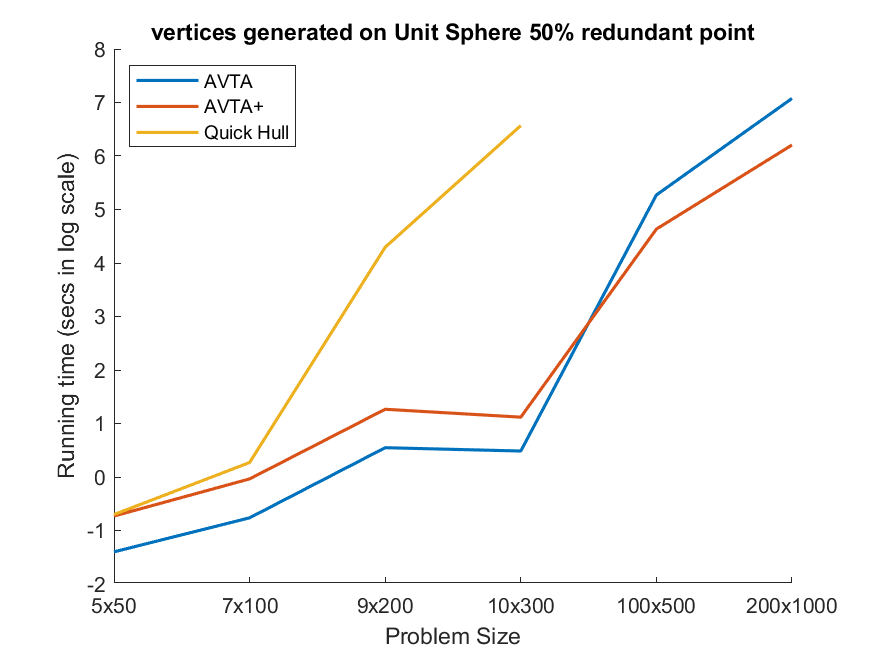}
			%		\caption{MVEE with only vertices}
			%		\label{fig:ellip}
		}
		%		\caption{}
		\caption{Running time of different algorithms on Irredundancy problem (Unit Sphere)}
		\label{Tab_Irr_U}
	\end{figure}
	
	\begin{figure}[!h]
		\centering
		\subfloat[]{
			\centering
			\includegraphics[width=0.451\textwidth]{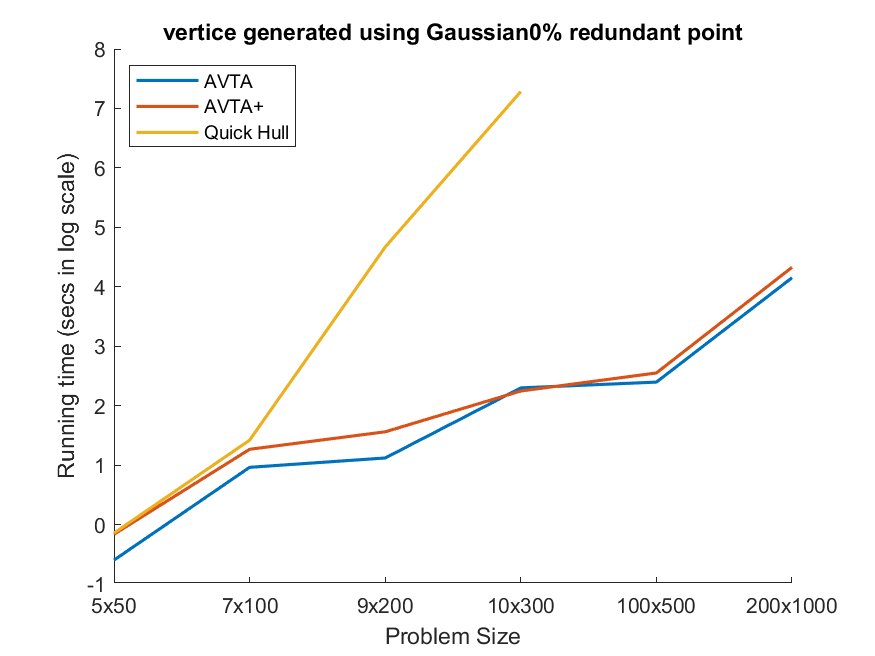}
			%		\caption{MVEE with redundant points}
			%		\label{fig:avta rec}
		}
		\subfloat[]{
			\centering
			\includegraphics[width=0.451\textwidth]{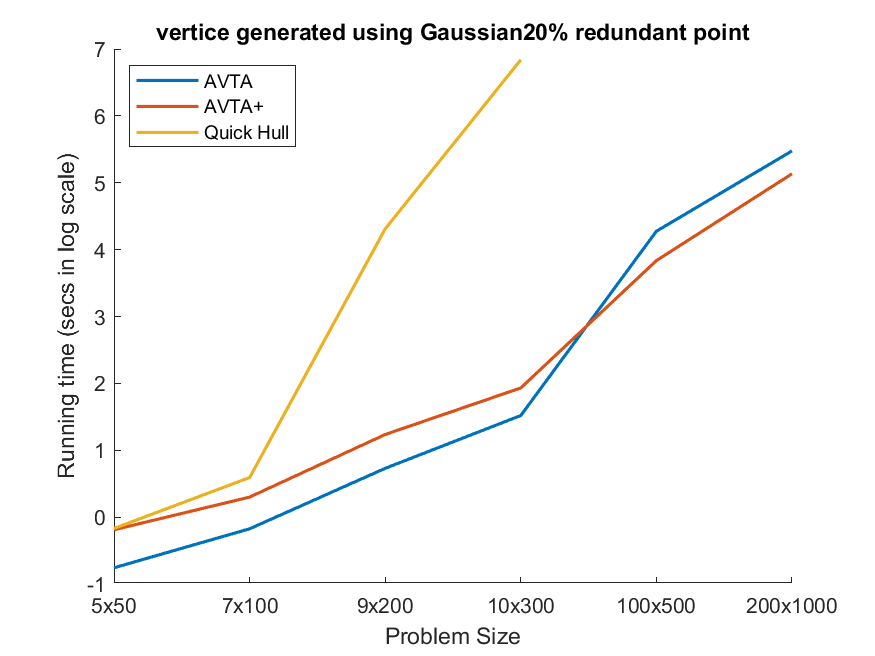}
			%		\caption{MVEE with only vertices}
			%		\label{fig:ellip}
		}\\
		\subfloat[]{
			\centering
			\includegraphics[width=0.451\textwidth]{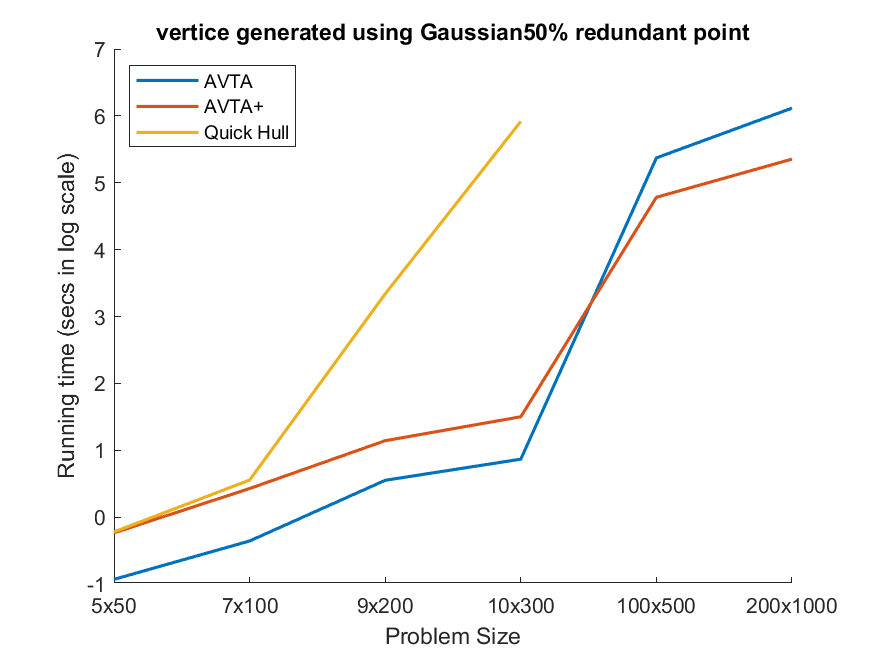}
			%		\caption{MVEE with only vertices}
			%		\label{fig:ellip}
		}
		%		\caption{}
		\caption{Running time of different algorithms on Irredundancy problem (Gaussian)}
		\label{Tab_Irr_G}
	\end{figure}

	\subsubsection{Minimum Volume Enclosing Ellipsoid}\label{Exp_mvee}
	Here we show that the AVTA and the AVTA$+$ can handle large scale overcomplete data in the MVEE problem. \noindent In our experiments, vertices of the convex hull are generated from a Gaussian distribution. We set the number of vertices $K=500$. Having generated the vertices, the 'redundant' points $d_j$, where $d_j \in conv(S), j=1,...,n-K$, are generated using a random convex combination $d_j=\sum_{i=1}^{K} \alpha_i v_i$. The  $\alpha_i$'s are scaled so that $\sum_{i=1}^{K} \alpha_i=1$. The algorithm \emph{AVTA$+$ and MVEE} is implemented as follows: First run AVTA$+$ on $S$ to find all vertices $\widehat{S} \subset S$, then run MVEE on $\widehat{S}$. The \emph{AVTA and MVEE} is implemented in a similar manner. The value of epsilon in Figure \ref{Tab_MVEE} is the precision parameter for solving the MVEE problem using the \textit{MinVolEllipse} function.
	The running times of the three algorithms are presented in Figure \ref{Tab_MVEE}. The results in Figure \ref{Tab_MVEE} demonstrate  that the AVTA and AVTA$+$ are an efficient pre-processing steps for data reduction, especially when the number of redundant points dominates the dataset: $n>>K$. Indeed the AVTA can reduce the size of dataset from $n$ to $K$ thus the downstream task has much smaller scale problem to solve.

	% Please add the following required packages to your document preamble:
	% \usepackage{multirow}
	% Please add the following required packages to your document preamble:
	% \usepackage{multirow}
	% Please add the following required packages to your document preamble:
	% \usepackage{multirow}
	\begin{figure}[t]
		\centering
		\subfloat[]{
			\centering
			\includegraphics[width=0.451\textwidth]{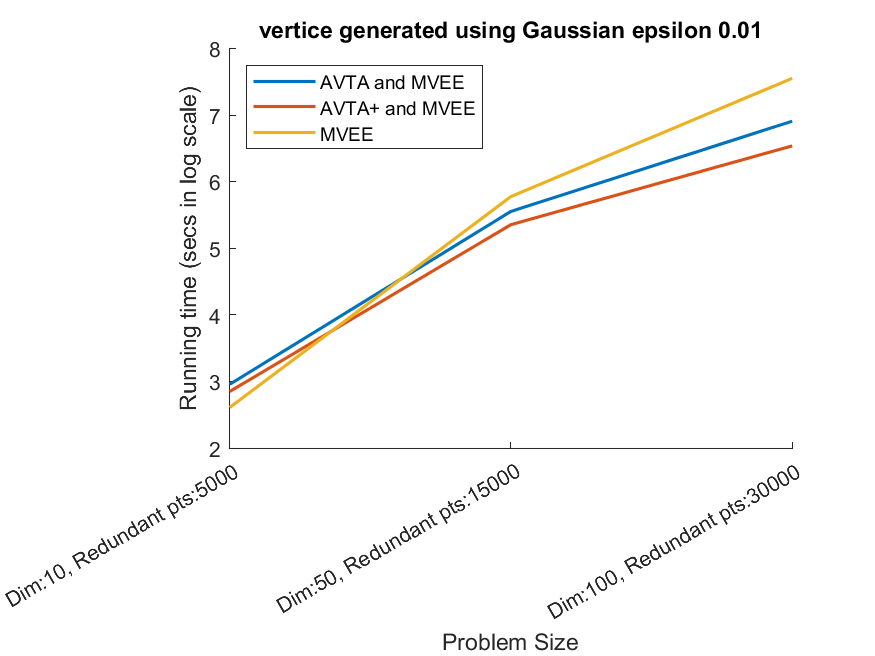}
			%		\caption{MVEE with redundant points}
			%		\label{fig:avta rec}
		}
		\subfloat[]{
			\centering
			\includegraphics[width=0.451\textwidth]{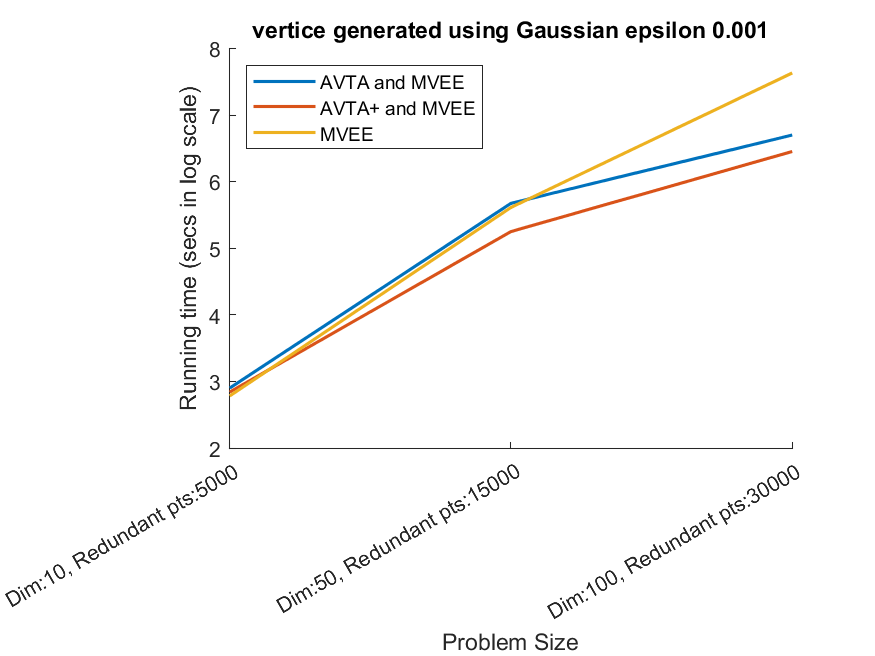}
			%		\caption{MVEE with only vertices}
			%		\label{fig:ellip}
		}\\
		\subfloat[]{
			\centering
			\includegraphics[width=0.451\textwidth]{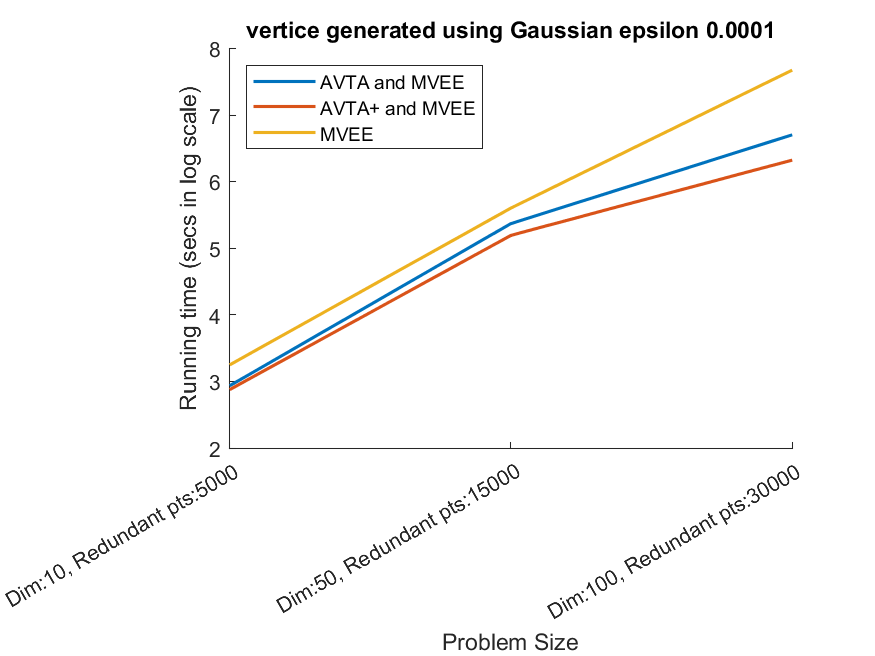}
			%		\caption{MVEE with only vertices}
			%		\label{fig:ellip}
		}
		%		\caption{}
		\caption{Running time on Minimum Volume Enclosing Ellipsoid problem}
		\label{Tab_MVEE}
	\end{figure}

	\section{Concluding Remarks}
	
	In this article we considered CHM, a fundamental problem in diverse fields. We considered the special case of CHM,  Spherical-CHM, which tests if the origin lies in the convex hull of  $n$ points on the unit sphere.  This canonical formulation has important features that can be exploited algorithmically.   We first showed that both in the sense of  exact and approximate solutions, Spherical-CHM is equivalent to CHM. We then provide a variant of of the TA, called the Spherical-TA which first converts a CHM into Spherical-CHM. On the one hand, we report a novel complexity analysis for the TA to prove that under a verifiable assumption at each iteration called the $\varepsilon$-property, the  number of iterations of  TA improved to  $O(1/\varepsilon)$. On the other hand, we applied the Spherical-TA to solve a set of distinct problems. 	
Our empirical results demonstrated that the TA and the Spherical-TA achieves impressive performance in solving problems that include, CHM, LP Feasibility, Strict LP Feasibility and  the Irredundancy Problem. In particular, we applied the irredundancy for reducing data in the large scale MVEE problems.

The TA and the Spherical-TA can be used as fast membership query oracles in high dimensional problems.  Our computational results strongly support the TA and the  Spherical-TA as effective tools in areas such as Linear Programming, Computational Geometry, and Machine Learning.  Our algorithms are implemented in MATLAB and available to the readers.

%
%the software so that interested users can apply them to other problems, including  Topic Modeling \cite{arora2013practical, awasthi2018robust}, Coreset Approximation \cite{clarkson2010coresets}.

	%
	%
	%{\small
	%	\begin{thebibliography}{9}
	%		%\begin{thebibliography}{1}
	%		
	%	
	%		\
	%		%\bibitem{DM86} R. Durier and C. Michelot, Sets of Efficient Points in a Normed Space, {\it Journal of Mathematical Analysis and Applications},  117 (1986), 506 - 528. \filbreak
	%		
	%		
	%		
	%		%\bibitem{kal88} B. Kalantari, Canonical problems for quadraticprogramming and projective methods for their solution, {\it Contemporary Mathematics},  114 (1990), 243 - 263. \filbreak
	%		
	%
	%		
	%		%\bibitem{kal2003} B. Kalantari,  Semidefinite programming and matrix scaling over the semidefinite cone, {\it Linear Algebra and its Applications}, 375 (2003), 221 - 243. \filbreak
	%	
	%		
	%
	%	\end{thebibliography}
	%}
	%\bigskip
	
	\bibliographystyle{plain}
	\bibliography{acmart}
\end{document}